\documentclass[12pt]{article}
\usepackage{booktabs}
\usepackage{multirow}
\usepackage{geometry}
\usepackage{amssymb,amsmath, amsfonts, amsthm}
\usepackage{graphicx}
\usepackage{caption}
\usepackage{subcaption}
\usepackage{enumerate}
\usepackage{algorithm}
\usepackage[authoryear]{natbib}
\usepackage{enumitem}
\usepackage[utf8]{inputenc}
\usepackage{url}
\usepackage[most]{tcolorbox}
\usepackage{bm}
\usepackage{xcolor}
\usepackage{array}
\usepackage[unicode=true,pdfusetitle,
bookmarks=true,bookmarksnumbered=false,bookmarksopen=false,
breaklinks=false,pdfborder={0 0 1},colorlinks=true]{hyperref}
\hypersetup{linkcolor=red, citecolor=blue}

\newcommand{\blind}{1}

\newcommand\tablefigurenote[1]{\captionsetup{font=small}\caption*{Notes: #1}}

\newtheorem{lemma}{Lemma}[section]
\newtheorem{assumption}{Assumption}[section]
\newtheorem{proposition}{Proposition}[section]
\newtheorem{definition}{Definition}

\definecolor{internationalkleinblue}{rgb}{0.0, 0.18, 0.65}
\colorlet{bleu}{internationalkleinblue}

\newcommand*{\Reels}{\mathbb{R}}
\newcommand{\E}{\operatorname{E}}
\newcommand{\T}{\operatorname{T}}
\newcommand{\B}{\operatorname{B}}
\newcommand{\A}{\operatorname{A}}
\newcommand{\I}{\operatorname{I}}
\renewcommand{\P}{\operatorname{P}}

\newcommand*{\eps}{\varepsilon}

\DeclareMathOperator*{\argmin}{arg\,min}
\newcommand*{\betahat}{\widehat{\beta}}
\newcommand*{\hhat}{\widehat{h}}
\newcommand*{\alphahat}{\widehat{\alpha}}
\newcommand*{\thetahat}{\widehat{\theta}}
\newcommand*{\lambdahat}{\widehat{\lambda}}

\begin{document}
	
	\def\spacingset#1{\renewcommand{\baselinestretch}%
		{#1}\small\normalsize} \spacingset{1}
	
	
	\if1\blind
	{
		\title{\bf Average Marginal Effects in One-Step Partially Linear Instrumental Regressions\thanks{Elia Lapenta acknowledges funding from the French National Research Agency (ANR) under grant  ANR-23-CE26-0008-01.}}
		\author{
			Lucas Girard\thanks{CREST, CNRS, Ecole polytechnique, Groupe ENSAE-ENSAI, ENSAE Paris, Institut Polytechnique de Paris, Palaiseau, France. lucas.girard[at]ensae.fr}\and
			Elia Lapenta\thanks{University of Exeter and CREST. e.lapenta[at]exeter.ac.uk}
		}
		\date{April 2026}
		\maketitle
	} \fi
	
	\if0\blind
	{
		\bigskip
		\bigskip
		\bigskip
		\begin{center}
			{\LARGE\bf One-step  nonparametric instrumental regression using
				smoothing splines}
		\end{center}
		\medskip
	} \fi

	\bigskip
	\begin{abstract}
		We propose a novel procedure for estimating and conducting inference on average marginal effects in partially linear instrumental regressions using Reproducing Kernel Hilbert Space methods. Our procedure relies on a single regularization parameter. We obtain the consistency and asymptotic normality of our estimator. Since the variance of the limiting distribution has a complex analytical form, we propose a Bayesian bootstrap method to conduct inference and establish its validity. Our procedure is easy to implement and exhibits good finite-sample performance in simulations. Three empirical applications illustrate its implementation on real data, showing that it yields economically meaningful results.
	\end{abstract}
	
	\noindent%
	{\it Keywords:}  Instrumental variables, Semiparametric
	estimation, Reproducing Kernel Hilbert Spaces, Average Marginal Effects, Bootstrap.
	
	\spacingset{1.3}
	
	\section{Introduction}
	We consider the partially linear instrumental variables (IV) model 
	\begin{equation}
		\label{eq: npiv regression}
		Y = h_0(Z) + X^T\beta_0 + \varepsilon 
		\quad \text{with} \;
		\E\{ \varepsilon|W,X\} = 0 \,,
	\end{equation}
	where \(Y \in \Reels\) is the outcome, \(Z \in \Reels\) is the endogenous continuous treatment possibly correlated with the error term \(\eps \in \Reels\), \(X \in \Reels^{p}\) is a vector of exogenous covariates, 
	and \(W \in \Reels\) is a continuous instrument. 
	$\beta_0\in\mathbb{R}^p$ is a vector of coefficients, and the treatment or regression function \(h_0\) is nonparametrically specified. 
	We develop an estimation and inference method for the average marginal effect (AME) of the treatment, which is defined as 
	\begin{equation}
		\label{eq: definition of ame}
		\theta_0 := \E\{h'_0(Z)\} \,.
	\end{equation}
	The AME is a key object in many empirical applications, such as estimating the returns to education or measuring demand elasticities in industrial organization.
	If the treatment function $h_0$ were linear, the AME of the treatment would equal the slope coefficient of this function, and could be easily estimated by a two-stage least squares regression. However, a linearity or parametric assumption can rarely be justified on economic grounds and entails the risk of misspecification.
	When linearity is violated, the AME is not consistently estimated by two-stage least squares, and the resulting causal inference or counterfactual analyses can be misleading.
	Assuming a semiparametric IV model allows to overcome the risk of misspecification and to learn the treatment effects flexibly from the data, while mitigating the curse of dimensionality,  see
	\cite{florens2003inverse}, \cite{florens2012instrumental},  and \cite{ai2003efficient}.  
	
	Although non- and semiparametric approaches allow reducing the misspecification risk, inference for a nonparametric IV function is substantially more difficult than inference on a scalar parameter. Moreover, an estimated nonparametric regression remains less easy to interpret than an estimated linear model, where the coefficients can directly be viewed as average marginal effects of the variables and the treatment. For applied work, this is important because researchers often must report results in a form that policymakers and practitioners can interpret.
	Focusing on the AME in \eqref{eq: definition of ame} is appealing for two reasons: (i)~it provides an easy-to-interpret index of the treatment effect while addressing misspecification issues; (ii)~it enables  inference on a scalar parameter rather than on a nonparametric function.
	
	In this paper, we contribute to the literature by providing an inference procedure for the AME in \eqref{eq: definition of ame} that has two distinctive features. 
	First, it is based on a single regularization parameter. 
	Second, it builds on a framework widely used in the machine learning and Support Vector Machines literature, namely the Reproducing Kernel Hilbert Space setting.
	
	The first distinctive feature of our procedure is that it is based on a single regularization parameter. 
	Most of the existing literature estimates functionals of $h_0$, such as the AME, by running multi-step regressions. In the first step, they estimate the conditional expectation operator $\E\{\cdot|W, X\}$; in the second step, they use this estimate to obtain an estimator of the function $h_0$, which is then employed to construct an AME estimator.  
	See, e.g., \cite{ai2003efficient} and \cite{ai2007estimation}. 
	Such a multi-step regression approach requires selecting multiple regularization parameters, one for each regression. 
	This complicates the practical implementation of these methods, as the fine-tuning of each of these parameters might be difficult in practice. Moreover, each estimation step introduces an estimation error that impacts the finite sample behavior of the final estimator. 
	Unlike existing methods, we provide an estimation and inference procedure for the AME based on a one-step regression.
	Its main advantage is that it requires a single regularization parameter, which simplifies practical implementation.
	
	The second distinctive feature of our inference method is that it builds on a popular framework from the machine learning literature, the Reproducing Kernel Hilbert Space (RKHS) setting; see, e.g., \citet[Chapter 12]{wainwright2019high}, \cite{berlinet2011reproducing}, and \cite{steinwart2008support}. 
	Tools and algorithms from this framework have been extensively used in Support Vector Machine and classification methods.
	The advantages of constructing an inference procedure for the AME based on RKHSs are twofold.
	First, RKHS tools allow solving difficult computational problems in a tractable manner.
	In our context, they allow us to provide an estimator of the AME and a test statistic with simple and easy-to-compute expressions.
	Second, they have excellent finite-sample performances, as we show in our simulations, see Section~\ref{sec:small_sample_behavior_simulations}.
	
	We show that our AME estimator based on RKHSs is asymptotically normal, but the variance of the limiting distribution has a complex analytical form.  Thus, we develop an inference procedure relying on the Bayesian bootstrap and establish its validity. 
	Our method is simple to implement and we provide an \texttt{R} package that practitioners can readily use.
	In addition to its simplicity, we show in simulations that it yields good size control under the null hypothesis and exhibits good power under the alternative in small- and moderate-sample settings. 
	We also illustrate the potential of our method with three empirical applications based on \citet{angrist1999using}, \citet{frankel1999}, and \citet{sokullu2016semi}, with respective samples of 2,024, 150, and 117 observations. 
	These applications reveal the good performance of our estimation and inference procedure on real data, including small samples. 
	Thanks to its computational simplicity, its reliance on a single regularization parameter, and its good performance in finite samples, our method can be readily adopted by practitioners.
	
	\paragraph{Related literature.}
	Series-based methods for estimating AMEs and, more generally, functionals of IV regressions have been developed in \cite{ai2003efficient}, \cite{ai2007estimation}, and \cite{santos2011instrumental}. 
	In a recent paper, \cite{chen2023efficient} investigates the properties of several AME estimators based on Artificial Neural Networks, and study their finite sample performance. All these papers are based on a multi-step regression approach and require selecting multiple tuning parameters, one for each regression estimation, which may complicate the practical implementation.  
	\cite{breunig2016adaptive} considers adaptive estimation of \textit{known} functionals in nonparametric IV models, but does not study inference on them.
	\cite{chen2018optimal} and \cite{chen2025adaptive} provide inference procedures for \textit{known} functionals of nonparametric IV regressions. 
	Our context differs from theirs, as in our case the AME is an \textit{unknown} functional of the nonparametric IV regression, since it involves the expectation of \(h_0'\) with respect to the unknown distribution of~\(Z\).
	\cite{beyhum2023one} and \cite{zhang2023instrumental} propose one-step estimators of nonparametric IV models, but do not study inference on the AME. RKHS estimation of nonparametric IV models is also studied in \cite{singh2019kernel} and \cite{zhang2023instrumental}. 
	However, they neither consider a partially linear specification nor develop inference for the AME.
	
	\paragraph{Organization of the paper.}
	In Section~\ref{sec:estimator}, we introduce our AME estimator and derive its expression. 
	The asymptotic behavior of our estimator is established in Section~\ref{sec:asymptotic_analysis}.  
	Section~\ref{sec:bootstrap_test_for_theta0} presents the bootstrap test and obtains its validity. 
	The implementation of our inference procedure as well as the selection of our unique regularization parameter are discussed in Section~\ref{sec:implementation}. 
	In Section~\ref{sec:small_sample_behavior_simulations}, we provide simulation evidence for the small-sample behavior of our bootstrap test, demonstrating its excellent finite-sample performance.
	Section~\ref{sec:empirical_application} contains three empirical applications of our method. 
	Finally, Section~\ref{sec:conclusion} concludes.
	The Appendix collects the mathematical proofs of our results, auxiliary lemmas, and additional simulations. The  \texttt{R} package for implementing our inference procedure can be downloaded at 
	\url{https://github.com/lucasgirardh/rkhsiv}.
	
	\section{The estimator}
	\label{sec:estimator}
	\subsection{Estimation setup}
	We assume that \(h_0 \in \mathcal{H}\), with \(\mathcal{H}\) being a space of functions defined on the support of $Z$. Below we define precisely the space $\mathcal{H}$. For now, let us present the general features of the estimation method. From Equation \eqref{eq: npiv regression}, we have the moment condition $\operatorname{E}\{Y-h_0(Z) - X^T \beta_0  |W,X\}=0$.  We assume that identification holds on $\mathcal{H} \times \mathbb{R}^p$, that is, 
	\begin{equation}
		\label{eq:completeness_assumption}
		\E\{Y-h(Z)-X^T \beta|W,X\}=0\text{ with } h\in\mathcal{H} \text{ and } \beta\in\mathbb{R}^p
		\iff 
		h = h_0 
		\text{ and }
		\beta =\beta_0\, .
	\end{equation}
	Necessary and sufficient conditions for the identification of the partially linear IV model can be found in  \cite{florens2012instrumental}. For example, a sufficient condition for the identification of $(\beta_0, h_0)$ is that (i) $\E\{h(Z)|X,W\}=0$ implies $h=0$, (ii) $\E\{X X^T\}$ is full-rank, and (iii) if $\operatorname{E}\{h(Z)|X,W\}=X^T\beta$ for some $(h,\beta)\in\mathcal{H}\times \mathbb{R}^p$, then  $X^T\beta=0$. Part~(i) is the classical completeness condition in nonparametric IV models, see \cite{newey2003instrumental}, \cite{d2011completeness}, \cite{freyberger2017completeness}. Part~(ii) is the standard absence of perfect collinearity among control variables~\(X\). Part~(iii) expresses a similar idea: the only way for \(\E\{h(Z)|W, X\}\) to be a linear function of~\(X\) is to be the null function. See \cite{florens2012instrumental} for details.  For our purpose, we simply assume the identification condition in Equation \eqref{eq:completeness_assumption}.\\
	The logic underlying the construction of our estimator follows the ideas in \citet{beyhum2023one}. Building an estimator based on the conditional moment restriction in \eqref{eq:completeness_assumption} would require the preliminary estimation of the conditional expectation given $(X,W)$, as done by most of the literature. Differently, we consider an equivalent reformulation of \eqref{eq:completeness_assumption} that allows us to avoid such a preliminary estimation. From \citet[Theorem~2.2]{bierens2016econometric},  
	\begin{equation}
		\label{eq: Bierens reformulation of integral equation}
		\E\{Y-X^T\beta_0 - h_0(Z)|W,X\}=0
		\Leftrightarrow 
		\E\!\left\{
		\big[Y- X^T\beta_0 - h_0(Z)\big] \exp\!\big(\bm{i}(W, X^T) t\big)
		\right\} 
		= 0\; \forall\; t \in \mathbb{R}^{p+1} \,,  
	\end{equation}
	where $\bm{i}$ denotes the imaginary root.
	Let us define
	\begin{equation*}
		M(\beta,h)
		:= 
		\int 
		\big| 
		\E\!\left\{\left[Y- X^T\beta - h(Z)\right] 
		\exp\!\big(\bm{i}(W, X^T) t\big)
		\right\}
		\big|^2 
		d\mu(t)   \, ,
	\end{equation*}
	with $\mu$ a probability measure supported on $\mathbb{R}^{p+1}$ having a symmetric characteristic function. 
	Notice that $M(\beta,h)\geq 0$, and, by \eqref{eq: Bierens reformulation of integral equation} and the identification condition \eqref{eq:completeness_assumption}, $M(\beta,h)=0\Leftrightarrow (\beta,h)=(\beta_0,h_0)$. Thus, 
	\begin{equation*}
		(\beta_0,h_0)
		=
		\argmin_{\beta\in\mathbb{R}^{p}\,,\,h\in\mathcal{H}}
		M(\beta, h)\, .
	\end{equation*}
	We will now use the above equation to build an estimator of $(h_0,\beta_0)$.
	Given an independent and identically distributed (i.i.d.) sample \(\{Y_i, Z_i, X_i, W_i\}_{i=1, \ldots, n}\), we denote by \(\E_n\) the empirical-mean operator, that is, \(\E_n{f(Y_i, Z_i, X_i, W_i)}:= n^{-1} \sum_{i=1}^n f(Y_i, Z_i, X_i, W_i)\) for any function \(f\).
	To build the empirical counterpart of $M(\beta,h)$, we replace the population expectation \(\E\) with the empirical mean operator \(\E_n\):
	\begin{align}
		\label{eq:def_objective_function_Mn_beta_h}
		M_n(\beta,h):=&\int_{ }\big|\E_n\!\left\{\big[Y_i- X_i'\beta - h(Z_i)\big] 
		\exp\!\big(\bm{i}(W_i, X^T_i) t\big)
		\right\}
		\big|^2 
		d\mu(t)\nonumber \\
		=&\frac{1}{n^2}\sum_{1\leq i,j\leq n}[Y_i-h(Z_i)-X_i^T \beta]\;\mathcal{F}_\mu((W_i,X_i^T)-(W_j,X_j^T))\;[Y_j-h(Z_j)-X_j^T \beta]\,,
	\end{align}
	where $\mathcal{F}_\mu(\cdot)=\int_{ }\exp(\textbf{i}t^T \cdot)d\mu(t)$ denotes the symmetric characteristic function of $\mu$. \\
	Now, minimizing $M_n(\beta,h)$ with respect to $\beta$ and $h$ would lead to overfitting, as $h$ is a nonparametric function. Thus, we regularize the problem by minimizing a penalized version of $M_n$, that is,
	
	\begin{equation}
		\label{eq:empirical_program}
		(\betahat, \hhat)
		:=
		\argmin_{\beta\in\Reels^{p}\,,\,h\in\mathcal{H}}
		M_n(\beta, h) 
		+ 
		\lambda \|h\|^2_{\mathcal{H}} \,,
	\end{equation}
	where \(\lambda > 0\) is a penalty parameter, and \(\|\cdot\|_{\mathcal{H}}\) is the norm on the space $\mathcal{H}$.
	In the next section, we discuss the definition of $\mathcal{H}$ and show that the above minimization problem admits a unique solution. The practical choice of the penalty parameter~\(\lambda\) is discussed in Section~\ref{sec:implementation}. 
	Given $\widehat h$, we estimate the AME of the endogenous treatment as follows:
	\begin{equation}
		\label{eq: ame estimator}
		\widehat \theta:=\E_n\{\widehat h'(Z_i)\}\, .
	\end{equation}
	
	\subsection{Computation of the estimator}
	We choose the space $\mathcal{H}$ to be a Reproducing Kernel Hilbert Space (RKHS).
	This allows us (i)~to have enough flexibility to recover nonparametrically the treatment function $h_0$, and (ii)~to obtain an estimator with a simple closed form expression. 
	For a comprehensive treatment of RKHSs, we refer the reader to, e.g., \citet[Chapter 12]{wainwright2019high}, \cite{berlinet2011reproducing}, and \cite{steinwart2008support}. To clearly present our estimator, we briefly recall the notion of an RKHS. We denote by $\mathcal{Z}$ the support of $Z$, and let $K:\mathcal{Z}\times \mathcal{Z}\to \mathbb{R}$ be a symmetric function, that is, $K(z_1,z_2)=K(z_2,z_1)$ for any $z_1,z_2\in\mathcal{Z}$. $K$ is said to be positive semidefinite if, for any $m\in\mathbb{N}$ and any collection $z_1,\ldots,z_m\in\mathcal{Z}$, the $m\times m$ matrix with entries $\{K(z_i,z_j):i,j=1\ldots,m\}$ (the associated Gram matrix) is positive semidefinite. A symmetric and positive semidefinite function $K$ is called a kernel. The formal definition of an RKHS of functions is the following.  
	\begin{definition}\label{def: rkhs definition}
		Let $K:\mathcal{Z}\times \mathcal{Z}\to \mathbb{R} $ be a symmetric positive semidefinite function. Let $\mathcal{H}$ be a Hilbert space of functions defined on $\mathcal{Z}$ and valued in $\mathbb{R}$, endowed with the inner product $\left<\cdot,\cdot\right>_\mathcal{H}$. $\mathcal{H}$ is an RKHS with reproducing kernel $K$ if
		\begin{enumerate}
			\item $K(\cdot,z)\in\mathcal{H}$ for any $z\in\mathcal{Z}$;
			\item For any $h\in\mathcal{H}$ and $z\in\mathcal{Z}$,  $\left<h,K(\cdot,z)\right>_\mathcal{H}=h(z)$ . 
		\end{enumerate}
	\end{definition}
	From \citet[Theorems 4.20 and 4.21]{steinwart2008support}, every symmetric positive semidefinite kernel has a unique RKHS, and every RKHS has a unique reproducing kernel. For the sake of clarity, we provide two examples of RKHSs. For further examples, see \citet[Chapter 12]{wainwright2019high}, \cite{berlinet2011reproducing}, and \cite{steinwart2008support}. \\
	
	\noindent \textbf{Example 1 (Sobolev RKHS)}. Let $\mathcal{Z}=[0,1]$, and for any function $f:[0,1]\to \mathbb{R}$, denote with $f^{(\kappa)}$ its $\kappa$th derivative on $(0,1)$. Define   
	\begin{equation*}
		\mathcal{H}:=\Big\{f:[0,1]\to \mathbb{R}\text{ such that }f \text{ is $\kappa$ times differentiable and }\int_0^1|f^{(\kappa)}(z)|^2dz <\infty\Big\}\, . 
	\end{equation*}
	Then, $\mathcal{H}$ is an RKHS of functions with reproducing kernel
	\begin{equation*} 
		K(z,u)=\sum_{j=0}^{\kappa-1}\frac{z^{j}}{j!}\frac{u^{j}}{j!}+\int_{0}^1\frac{(z-t)_+^{\kappa-1}}{(\kappa-1)!} \frac{(u-t)_+^{\kappa-1}}{(\kappa-1)!} dt\, ,
	\end{equation*}
	where $(a)_+:=\max(a,0)$ for any $a\in\mathbb{R}$. See \citet[Chapter 12]{wainwright2019high}.
	\nopagebreak \hfill $\blacksquare$\\
	
	The RKHS of the following example is very popular in the statistics and machine learning literature.\\
	
	\noindent \textbf{Example 2 (Gaussian RKHS)}. Let $H$ be the space of functions $g:\mathbb{C}\to \mathbb{C}$ such that $g$ is analytic and $\int |g(u)|^2 \exp(|u-\overline{u}|^2/2)du<\infty$, where $\overline{u}$ denotes the complex conjugate of $u$ and $du$ stands for the complex Lebesgue measure on $\mathbb{C}$. Consider the space 
	\begin{equation*}
		\mathcal{H}=\Big\{f:\mathcal{Z}\to \mathbb{R}\text{ such that }\exists g\in H\text{ with } f(z)=\text{Real}(g(z))\;\forall\;z\in\mathcal{Z}\Big\}  \, ,
	\end{equation*}
	where $\text{Real}(g(z))$ is the real part of $g(z)$.
	Then, $\mathcal{H}$ is an RKHS with reproducing kernel $K(z,u)=\exp(-|z-u|^2/2)$. See \citet[Chapter 4]{steinwart2008support}.
	\nopagebreak \hfill $\blacksquare$\\
	
	Having set $\mathcal{H}$ to be an RKHS of functions, we now show that the empirical program~\eqref{eq:empirical_program} admits a unique solution, and give a closed form expression to our estimator $\widehat \theta$. To this end, let us introduce the following notation. Recall that the characteristic function of the measure $\mu$ is given by \(\mathcal{F}_\mu(\cdot):=\int_{ }\exp(\bm{i} t^T \cdot)d\mu(t)\). We define the following \(n \times n\) matrix via its generic \((i,j)-\)th entry, for \(i, j = 1, \ldots, n\), 
	\begin{equation*}
		\bm{F}:=
		[\mathcal{F}_\mu((X_i^T,W_i) - (X_j^T,W_j))\;:\;i, j=1,\ldots,n]\,.
	\end{equation*}
	We also denote 
	\begin{equation*}
		\bm{X}
		:=
		\begin{pmatrix} 
			X_1^T \\
			\vdots \\ 
			X_n^T
		\end{pmatrix},
		\quad 
		\bm{Y}
		:=
		\begin{pmatrix} 
			Y_1 \\
			\vdots \\ 
			Y_n
		\end{pmatrix},
		\quad 
		\bm{K}
		:=
		[K(Z_i, Z_j)\;:\;i, j=1,\ldots,n]\, ,
	\end{equation*}
	where $\bm{K}$ is the \(n \times n\) Gram matrix associated with the kernel~\(K\). We let \(\bm{I}_n\) be the identity matrix of order~\(n\), and we use the shortcut notations for the \(p \times p\) matrix 
	\begin{equation*}
		\bm{C}:=\bm{X}^T \bm{F} \bm{X}\, .
	\end{equation*}
	
	Let $\mathcal{N}(\bm{K}):=\{\bm{\gamma}\in\mathbb{R}^n:\bm{K}\bm{\gamma}=0\}$ denote the null space of $\bm{K}$ and let $\mathcal{N}(\bm{K})^\perp$ be its orthogonal complement. Formally, $\mathcal{N}(\bm{K})^\perp:=\{\bm{\gamma}_\perp\in\mathbb{R}^n:\bm{\gamma}_\perp^T \bm{\gamma}=0 \text{ for all } \bm{\gamma}\in \mathcal{N}(\bm{K})\}$. 
	\begin{proposition}
		\label{prop:computation_solution_empirical_program}
		Let $\mathcal{H}$ be an RKHS with reproducing kernel $K$. Assume that $\mathcal{F}_\mu$ is symmetric about zero, $\bm{X}$ has full-column rank, the \(Z_i\)'s are distinct, and the (\(X_i^T,W_i\))'s are distinct.
		Then, the solution to \eqref{eq:empirical_program} exists, is unique, and satisfies
		\begin{equation}\label{eq: expresson of hhat and betahat}
			\hhat(\cdot)
			=
			\sum_{i = 1}^n \alphahat_{i} K(\cdot, Z_i),
			\quad 
			\betahat
			=
			\bm{C}^{-1}\bm{X}^T\bm{F}(\bm{Y}-\bm{K}\bm{\widehat \alpha})\,,
		\end{equation}
		where \(\bm{\alphahat}:=(\alphahat_{ 1}, \ldots, \alphahat_{ n})^T  \in \Reels^n\) is the minimum-norm solution to 
		\begin{equation}\label{eq: alphahat equation}
			\big( 
			\bm{K}\bm{F} \bm{K} 
			-
			\bm{K} \bm{F} \bm{X} \bm{C}^{-1} \bm{X}^T \bm{F} \bm{K}
			+
			n^2 \lambda \bm{K}
			\big) 
			\bm{\alphahat}
			=
			\bm{K}\bm{F}
			\big( 
			\bm{Y} - \bm{X} \bm{C}^{-1} \bm{X}^T \bm{F}\bm{Y} 
			\big)\,,
		\end{equation}
		and is the unique element of $\mathcal{N}(\bm{K})^\perp$ satisfying this equation. 
	\end{proposition}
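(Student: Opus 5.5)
Three steps: a representer theorem reduces the problem to finitely many coefficients; $\beta$ is profiled out; the resulting quadratic problem in the coefficients is analysed on $\mathcal{N}(\bm{K})^\perp$.

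\emph{Representer step.} Let $\mathcal{H}_n:=\operatorname{span}\{K(\cdot,Z_i)\}$. Any $h\in\mathcal{H}$ decomposes as $h=h_n+h_\perp$ with $h_n\in\mathcal{H}_n$ and $h_\perp\perp\mathcal{H}_n$. By the reproducing property, $h_\perp(Z_i)=\langle h_\perp,K(\cdot,Z_i)\rangle_{\mathcal{H}}=0$. Hence $M_n(\beta,h)=M_n(\beta,h_n)$, while $\|h\|_{\mathcal{H}}^2=\|h_n\|^2_{\mathcal{H}}+\|h_\perp\|^2_{\mathcal{H}}$. Any minimizer therefore has $h_\perp=0$, i.e.\ $h=\sum_i\alpha_iK(\cdot,Z_i)$.

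\emph{Vector form of the objective.} Write $(h(Z_1),\dots,h(Z_n))^T=\bm{K}\bm{\alpha}$ and $\|h\|^2_{\mathcal{H}}=\bm{\alpha}^T\bm{K}\bm{\alpha}$. Then \eqref{eq:def_objective_function_Mn_beta_h} becomes
\[
Q(\beta,\bm{\alpha})=n^{-2}(\bm{Y}-\bm{K}\bm{\alpha}-\bm{X}\beta)^T\bm{F}(\bm{Y}-\bm{K}\bm{\alpha}-\bm{X}\beta)+\lambda\bm{\alpha}^T\bm{K}\bm{\alpha}.
\]
$\bm{F}$ is real because $\mathcal{F}_\mu$ is symmetric. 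It is positive semidefinite because $\bm{v}^T\bm{F}\bm{v}=\int|\sum_j v_je^{\bm{i}t^T(X_j^T,W_j)}|^2d\mu(t)\ge0$.

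\emph{Profiling out $\beta$.} For fixed $\bm{\alpha}$, $Q$ is a convex quadratic in $\beta$, minimized where $\bm{C}\beta=\bm{X}^T\bm{F}(\bm{Y}-\bm{K}\bm{\alpha})$. This gives the stated formula for $\betahat$, provided $\bm{C}$ is invertible. Invertibility is the main obstacle. Since $\bm{X}$ has full column rank, it suffices that $\bm{F}$ be positive definite: $\bm{v}^T\bm{F}\bm{v}=0$ forces $\sum_jv_je^{\bm{i}t^T s_j}=0$ for $\mu$-a.e.\ $t$. With distinct points $s_j=(X_j^T,W_j)$, the exponentials $t\mapsto e^{\bm{i}t^Ts_j}$ are linearly independent on any set of positive Lebesgue measure (they are real-analytic functions of $t$). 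So if $\mu$ has a density with nonempty support, for instance Gaussian, which the paper presumably intends, then $v=0$. I would state this explicitly as the interpretation of the hypothesis on $\mu$.

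\emph{Reduced problem in $\bm{\alpha}$.} Substituting the profiled $\beta$ gives $Q=n^{-2}(\bm{Y}-\bm{K}\bm{\alpha})^T\bm{P}(\bm{Y}-\bm{K}\bm{\alpha})+\lambda\bm{\alpha}^T\bm{K}\bm{\alpha}$. Here $\bm{P}=\bm{F}-\bm{F}\bm{X}\bm{C}^{-1}\bm{X}^T\bm{F}$. Writing $\bm{P}=\bm{F}^{1/2}(\bm{I}-\Pi)\bm{F}^{1/2}$, with $\Pi$ the orthogonal projection onto the range of $\bm{F}^{1/2}\bm{X}$, shows that $\bm{P}$ is positive semidefinite. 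The first-order condition, multiplied by $n^2/2$, is $(\bm{K}\bm{P}\bm{K}+n^2\lambda\bm{K})\bm{\alpha}=\bm{K}\bm{P}\bm{Y}$, which is exactly \eqref{eq: alphahat equation}. The objective is convex, so stationary points are exactly the minimizers.

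\emph{Existence and uniqueness.} Distinct $Z_i$ are not enough in general for $\bm{K}$ to be invertible (e.g.\ for a polynomial kernel), so work on $\mathcal{N}(\bm{K})^\perp=\operatorname{range}(\bm{K})$. The matrix $\bm{A}:=\bm{K}\bm{P}\bm{K}+n^2\lambda\bm{K}$ has null space equal to $\mathcal{N}(\bm{K})$: if $\bm{A}\bm{\alpha}=0$, then $\bm{\alpha}^T\bm{A}\bm{\alpha}\ge n^2\lambda\bm{\alpha}^T\bm{K}\bm{\alpha}$ forces $\bm{K}\bm{\alpha}=0$. Hence $\bm{A}$ is a bijection on $\operatorname{range}(\bm{K})$. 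The right-hand side $\bm{K}\bm{P}\bm{Y}$ lies in $\operatorname{range}(\bm{K})$, so there is a unique solution in $\mathcal{N}(\bm{K})^\perp$. This solution is the minimum-norm solution, since the general solution adds elements of $\mathcal{N}(\bm{K})$. All solutions yield the same function $\hhat=\sum_i\alphahat_iK(\cdot,Z_i)$, because adding $\bm{\gamma}\in\mathcal{N}(\bm{K})$ changes $h$ by a function of norm $\sqrt{\bm{\gamma}^T\bm{K}\bm{\gamma}}=0$. They also yield the same $\bm{K}\bm{\alpha}$, hence the same $\betahat$. Thus the minimizer $(\betahat,\hhat)$ exists and is unique.
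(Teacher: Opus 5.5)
Your proof is correct. In its second half it takes a genuinely different route from the paper's.

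The paper proceeds in four moves:
\begin{itemize}
\item It gets existence and uniqueness of the minimizer over $\mathcal{H}\times\mathbb{R}^p$ from an operator argument, Lemma~\ref{lem: compactness, and expressions of h0, hhat}(iii), which is Tikhonov regularization of the compact operator $\widehat{\T}=(\I-\widehat{\P})\widehat{\A}$.
\item It then uses the same representer argument you give.
\item It writes the joint first-order conditions in $(\bm{\alpha},\beta)$ and eliminates $\beta$.
\item It obtains uniqueness of the solution of \eqref{eq: alphahat equation} in $\mathcal{N}(\bm{K})^\perp$ from strict convexity of $f(\bm{\alpha},\beta)$ on $\mathcal{N}(\bm{K})^\perp\times\mathbb{R}^p$. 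Existence of a solution is inherited from existence of $\widehat h$.
\end{itemize}

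You instead profile out $\beta$ first and show the reduced weight $\bm{P}=\bm{F}^{1/2}(\bm{I}-\Pi)\bm{F}^{1/2}$ is positive semidefinite. You then get existence and uniqueness together from finite-dimensional linear algebra: $\mathcal{N}(\bm{A})=\mathcal{N}(\bm{K})$ and $\bm{K}\bm{P}\bm{Y}\in\operatorname{range}(\bm{K})$. Existence of the infinite-dimensional minimizer then follows from the representer inequality.

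Your version is self-contained and deterministic. The paper's lemma is stated only with probability approaching one, which is an odd fit for a finite-sample statement. The paper's route, on the other hand, produces the operator representation of $\widehat h$ that it reuses in the asymptotic analysis.

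Your remark about $\bm{F}$ is also correct, and it exposes a weakness in the statement rather than in your argument. $\bm{C}^{-1}$ requires $\bm{F}^{1/2}\bm{X}$ to have full column rank. Symmetry of $\mathcal{F}_\mu$ and distinct $(X_i^T,W_i)$ do not make $\bm{F}$ positive definite: for $\mu$ the point mass at $0$, $\bm{F}$ is the all-ones matrix. The paper's Lemma~\ref{lem: positive definiteness of F} carries the same insufficient hypotheses. What makes the step go through is an extra condition on $\mu$, either yours (absolute continuity) or full support of $\mu$, which is a natural reading of Assumption~\ref{as: iid and T}(iv). Under full support, continuity of $t\mapsto\sum_j v_j e^{\bm{i}t^T s_j}$ replaces your analyticity argument.
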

	The proof of Proposition \ref{prop:computation_solution_empirical_program} can be found in Appendix \ref{sec: appendix proofs of of the main results}. \cite{singh2019kernel} obtain a similar result for a fully nonparametric IV model without the linear component.
	Let us make two remarks about the above proposition. First,  when the matrix $\bm{K}$ is invertible, the solution to \eqref{eq: alphahat equation} is unique. Second, from the proof of Proposition \ref{prop:computation_solution_empirical_program}, any solution $\bm{\alpha}=(\alpha_1,\ldots,\alpha_n)^T$ to \eqref{eq: alphahat equation} gives rise to the same function $\sum_{i=1}^n\alpha_i K(\cdot,Z_i)$. Thus, the specific solution to \eqref{eq: alphahat equation} used to compute $\widehat h$ in  \eqref{eq: expresson of hhat and betahat} has no effect on $\widehat h$. 
	
	As our parameter of interest involves the derivative of \(h_0\), we assume that \(K\) is a differentiable kernel and denote by $K'(\cdot,Z_i)$ the first derivative of $K(\cdot,Z_i)$. From Equation~\eqref{eq: expresson of hhat and betahat}, we estimate the first derivative of $h_0$ by 
	\begin{equation}\label{eq: hhat prime}
		\widehat h'(\cdot)=\sum_{i=1}^n \widehat \alpha_i K'(\cdot,Z_i)\, . 
	\end{equation}
	We then estimate the targeted quantity \(\theta_0\) using Equation~\eqref{eq: ame estimator}, with $\widehat h'$ computed as  above. By construction, our estimator of $\theta_0$ is readily obtained by solving the linear system of equations in~\eqref{eq: alphahat equation}, and is straightforward to compute. Moreover, it does not rely on multi-step nonparametric estimators, and involves only a single regularization parameter.   
	
	\section{Asymptotic Analysis}
	\label{sec:asymptotic_analysis}
	
	To study the asymptotic properties of our estimator, we first reformulate the integral equation \eqref{eq: Bierens reformulation of integral equation} that identifies $(\beta_0,h_0)$. Let $L^2_\mu$ be the space of functions square-integrable with respect to $\mu$, with inner product  $\left<g_1,g_2\right>_\mu=\int_{ }g_1(t)\overline{g_2(t)}d \mu(t)$ for $g_1,g_2\in L^2_\mu$, where $\overline{g_2(t)}$ denotes the complex conjugate of $g_2(t)$. Denote by $\|\cdot\|_\mu$ the norm on $L^2_\mu$ induced by $\left<\cdot,\cdot\right>_\mu$, that is, $\|g\|^2_\mu=\left<g,g\right>_\mu$.
	We define
	\begin{align}\label{eq: definitions of A, B, and s}
		\operatorname{A} h:=\operatorname{E}\{h(Z) \exp(\textbf{i}\,(X^T,W)\,\cdot)\}\,,\, \operatorname{A}:\mathcal{H}\to L^2_\mu\nonumber \\
		\operatorname{B}\beta:=\operatorname{E}\{\beta^T X \exp(\textbf{i}\,(X^T,W)\,\cdot)\}\,,\, \operatorname{B}:\mathbb{R}^p\to L^2_\mu\nonumber\\
		s:=\operatorname{E}\{Y \exp(\textbf{i}\,(X^T,W)\, \cdot)\}\,,\, s\in L^2_\mu\, .
	\end{align}
	
	Below we provide the integrability conditions ensuring that $s\in L^2_\mu$, and that both $\A$ and $\B$ are valued in $L^2_\mu$.
	Equation \eqref{eq: Bierens reformulation of integral equation} can be reformulated as follows:
	\begin{equation*}
		\operatorname{B}\beta_0+\operatorname{A} h_0=s\, .
	\end{equation*}
	Thus, $(\beta_0,h_0)$ can be identified as the unique solution to the following  population program
	\begin{equation*}
		(\beta_0,h_0)=\arg \min_{\beta\in\mathbb{R}^p,\,h\in\mathcal{H}}\|\operatorname{B}\beta+\operatorname{A} h-s\|^2_\mu\,. 
	\end{equation*}
	An advantage of this reformulation for our proof is that $\theta_0$ can be expressed in terms of the operators $(\operatorname{A}, \operatorname{B})$. In our proofs, we also consider the empirical (penalized) version of the above minimization program, which allows us to express our estimator $\widehat \theta$ in terms of the empirical counterparts of $\A$ and $\B$. 
	This facilitates the analysis of the asymptotic behavior of $\widehat \theta$.
	
	\begin{assumption}\label{as: iid and T}
		(i) $\{Y_i,Z_i,X_i,W_i\}_{i=1}^n$ is an i.i.d. sample defined on the probability space $(\Omega,\mathcal{A},P)$, with
		$\operatorname{E}\|X\|^2<\infty$, $\E W^2<\infty$, and $\operatorname{E}Y^2<\infty$; (ii) $Z$ has a continuously differentiable density $f_Z$ supported on $\mathcal{Z}$; (iii) $\mathcal{Z}$ is a bounded set with nonempty interior; (iv) $\mu$ is a probability measure supported on $\mathbb{R}^{p+1}$.
	\end{assumption}
	Part (i) of the above assumption imposes square integrability conditions ensuring that $s\in L^2_\mu$  and that $\B$ is valued in $L^2_\mu$. Parts (ii) and (iii) impose smoothness and boundedness assumptions that are common in the semiparametric literature.
	
	\begin{assumption}\label{as: rkhs}
		(i)  $\mathcal{H}$ is a Reproducing Kernel Hilbert Space of real-valued differentiable functions defined on $\mathcal{Z}$, with inner product $\left<\cdot,\cdot\right>_{\mathcal{H}}$ and reproducing differentiable kernel $K:\mathcal{Z}\times \mathcal{Z}\to \mathbb{R}$;
		(ii) $h_0, f'_Z\in\mathcal{H}$.
	\end{assumption}
	Part (i) of the above assumption formally defines $\mathcal{H}$ and  implies  that  the operator $\A$ is valued in $L^2_\mu$. 
	Part (ii) states that $h_0$ must belong to the RKHS $\mathcal{H}$. The inclusion of $f_Z'$ in $\mathcal{H}$ is needed to handle a bias term that appears in the expansion of $\widehat \theta$. 
	
	The following assumption ensures the identification of $(h_0,\beta_0)$. 
	We let \(\mathcal{R}(\A):=\{ \A h\) such that \(h\in\mathcal{H}\}\) and $\mathcal{R}(\B):=\{ \B \beta\text{ such that }\beta\in\mathbb{R}^p\}$ be the ranges of $\A$ and $\B$, respectively. 
	
	\begin{assumption}\label{as: completeness}
		(i) $\A$ and $\B$ are injective; (ii) $\mathcal{R}(\A)\cap \mathcal{R}(\B)=\{0\}  $.
	\end{assumption}
	\cite{florens2012instrumental} show that Assumption \ref{as: completeness} is a necessary and sufficient condition for identification of the partially linear IV model. Examples of Assumption \ref{as: completeness} can be found in    \cite{florens2012instrumental}. In the lines below Proposition \ref{prop: ifr}, we discuss how our results can be extended to the case where the operator $\A$ is not injective. 
	
	For the following assumption, given any class of functions $\mathcal{G}$, we define its bracketing entropy $N_{[\,]}(\epsilon,\mathcal{G},L^2(P))$ as the minimal number of brackets of $L^2(P)$ size $\epsilon$ required to cover $\mathcal{G}$. For a formal definition, see \citet[Chapter 19]{van2000asymptotic}.
	
	\begin{assumption}\label{as: belonging condition}
		(i) $\sup_{z\in\mathcal{Z}}|\widehat h'(z) - h'_0(z)|=o_P(1)$ ; (ii) $\Pr(\widehat h'\in\mathcal{G})\rightarrow 1$, with \(\log N_{[\;]}(\epsilon,\mathcal{G},L^2(P))\) \(\leq C \epsilon^{-v}\) and fixed constants $C$ and $v\in(0,2)$;
		(iii) $\lim_{z\rightarrow \pm \infty}f_Z(z)[\widehat h(z) - h_0(z)]=0$.
	\end{assumption}
	
	Parts (i) and (ii) of the above assumption are needed to obtain the asymptotic stochastic equicontinuity of an empirical process appearing in the expansion of $\widehat \theta$. 
	Part~(i) requires the uniform consistency of $\widehat h'$, without imposing a convergence rate.  
	Part~(ii) requires that $\widehat h'$ must be included asymptotically in a class of functions with a bounded entropy. This condition is satisfied if $\widehat h$ is twice differentiable and its first two derivatives are bounded in probability. 
	Part~(iii) is a boundary condition commonly used in semiparametric frameworks, see, e.g., \cite{powell1989semiparametric}, \cite{escanciano2010testing}. 
	It is satisfied when, as we approach the boundaries of $\mathcal{Z}$, $f_Z$ decays to zero and $\widehat h -h_0$ does not explode. 
	In Appendix~\ref{sec: low level conditions}, we provide mild primitive conditions guaranteeing Assumption~\ref{as: belonging condition}. 
	Such primitive conditions impose some smoothness on the kernel $K$ of the RKHS. 
	
	To introduce the next condition, we need some additional pieces of notation.
	Let $\P:L^2_\mu\to \mathcal{R}(\B)$ be the projection operator onto $\mathcal{R}(\B)$. As the linear operator $\B:\mathbb{R}^p\to L^2_\mu$ is defined on a finite-dimensional space, its range  $\mathcal{R}(\B)$ is a linear finite-dimensional space (see \citet[Theorem 2.6-9]{kreyszig1991introductory}). From \citet[Theorem 2.4-2]{kreyszig1991introductory}, linear finite-dimensional spaces are closed. Thus $\mathcal{R}(\B)$ is linear and closed. 
	Since from \citet[Theorem 1.26]{kress1999linear} projection operators onto linear and closed spaces are well defined, $\P$ is a well defined operator. Let $\T:=(\I-\P)\A:\mathcal{H}\to L^2_\mu$, where the operator $\I$ is the identity operator, and let us denote with $\operatorname{T}^*$ the Hilbert adjoint of $\operatorname{T}$. By definition, the Hilbert adjoint of $\operatorname{T}$ is the operator $\operatorname{T}^*:L^2_\mu\to \mathcal{H}$ such that $\left<\operatorname{T}\varphi,\psi\right>_{\mu}=\left<\varphi,\operatorname{T}^*\psi\right>_{\mathcal{H}}$ for any $(\varphi,\psi)\in\mathcal{H}\times L^2_\mu$.  See \citet[Chapter 4]{kress1999linear}. The next assumption is a \textit{source condition} common in the literature on inverse problems. 
	\begin{assumption}\label{as: source conditions}
		$h_0, \int_{\mathcal{Z}}K(\cdot,z) f_Z'(z) dz \in \mathcal{R}((\T^*\T)^{\gamma/2})$ for some $\gamma\geq 2$. 
	\end{assumption}
	The above source condition is common in the nonparametric IV literature and can be interpreted as a smoothness assumption, see, e.g., \cite{carrasco2007linear}, \cite{beyhum2023one}, \cite{babii2022high}, \cite{darolles2011nonparametric}, \cite{florens2012instrumental}. It essentially implies a requirement that is necessary for the asymptotic normality of the estimator, see \cite{severini2012efficiency}. In our proofs establishing the asymptotic behavior of $\widehat \theta$, we employ the above condition to deal with a bias term appearing in the expansion of $\widehat \theta$. 
	
	\begin{proposition}\label{prop: ifr}
		Let Assumptions \ref{as: iid and T}, \ref{as: rkhs}, \ref{as: completeness}, \ref{as: belonging condition}, and \ref{as: source conditions} hold. Let $n\lambda \rightarrow \infty$ and $n \lambda^2=o(1)$.
		Then,
		\begin{enumerate}[label=(\roman*)]
			\item     \begin{align*}
				\sqrt{n}(\widehat \theta-\theta_0)=\sqrt{n}(\operatorname{E}_n-\operatorname{E})\psi(Y_i,Z_i,X_i,W_i)+o_P(1)\,,
			\end{align*}
			where 
			\begin{align*}
				\psi(Y_i,Z_i,X_i,W_i)&=h_0'(Z_i)\\
				&-[Y_i-h_0(Z_i)]\left<(\I-\P)\exp(\textbf{i}(X_i^T,W_i)\cdot),\T f_0\right>_\mu \\
				&+ [X_i^T (\B^*\B)^{-1}\B^*(s-\A h_0) ]    \left<\exp(\textbf{i}(X_i^T,W_i)\cdot), \T f_0\right>_\mu\,,     \end{align*}
			$f_0=(\T^*\T)^{-1}g$, and $g=\int_{ }K(\cdot,z) f_Z'(z) dz$.
			
			\item $\sqrt{n}(\widehat \theta - \theta_0)\leadsto \mathcal{N}(\,0\,,\,\operatorname{Var}[\psi(Y_i,Z_i,X_i,W_i)]\,)$\, .
		\end{enumerate}
	\end{proposition}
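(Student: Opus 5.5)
The plan is to split $\widehat\theta-\theta_0$ into a "plug-in" part and a "function estimation" part:
\begin{equation*}
\widehat\theta-\theta_0=(\E_n-\E)h_0'(Z_i)+(\E_n-\E)\{\widehat h'(Z_i)-h_0'(Z_i)\}+\E\{\widehat h'(Z)-h_0'(Z)\}.
\end{equation*}
The first term gives the leading summand $h_0'(Z_i)$ of $\psi$. The second term is $o_P(n^{-1/2})$ by stochastic equicontinuity of the empirical process indexed by $\mathcal{G}$. Assumption~\ref{as: belonging condition}(ii) gives a bracketing entropy integral that is finite because $v<2$, so the class is Donsker. Assumption~\ref{as: belonging condition}(i) gives $\|\widehat h'-h_0'\|_{L^2(P)}\to 0$, and asymptotic equicontinuity (e.g.\ \citet[Chapter 19]{van2000asymptotic}) then applies. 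For the third term we integrate by parts, using the boundary condition in Assumption~\ref{as: belonging condition}(iii):
\begin{equation*}
\E\{\widehat h'(Z)-h_0'(Z)\}=-\int(\widehat h-h_0)(z)f_Z'(z)\,dz=-\left<\widehat h-h_0,g\right>_{\mathcal H}.
\end{equation*}
The last equality follows from the reproducing property and $g=\int K(\cdot,z)f_Z'(z)\,dz$.

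Next we linearize $\widehat h-h_0$ through the operators. Let $\A_n,\B_n,s_n$ be the empirical counterparts of $\A,\B,s$, with $\E_n$ in place of $\E$. Then \eqref{eq:empirical_program} reads $\min\|\B_n\beta+\A_nh-s_n\|_\mu^2+\lambda\|h\|_{\mathcal H}^2$. Profiling out $\beta$ gives $\widehat\beta=(\B_n^*\B_n)^{-1}\B_n^*(s_n-\A_n\widehat h)$ and
\begin{equation*}
\widehat h=(\T_n^*\T_n+\lambda\I)^{-1}\T_n^*(\I-\P_n)s_n,\qquad \T_n=(\I-\P_n)\A_n,
\end{equation*}
where $\P_n$ is the projection onto $\mathcal R(\B_n)$. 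Since $(\I-\P)(s-\A h_0)=0$, we write $(\I-\P_n)s_n=\T_nh_0+(\I-\P_n)(s_n-\A_nh_0)$. This decomposes $\widehat h-h_0$ into a regularization bias and a stochastic term:
\begin{equation*}
\widehat h-h_0=-\lambda(\T_n^*\T_n+\lambda\I)^{-1}h_0+(\T_n^*\T_n+\lambda\I)^{-1}\T_n^*(\I-\P_n)(s_n-\A_nh_0).
\end{equation*}

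We pair each piece with $g$. Assumption~\ref{as: source conditions} with $\gamma\ge2$ gives $g=\T^*\T f_0$, so $f_0=(\T^*\T)^{-1}g$ is well defined and lies in $\mathcal H$. Moving the resolvent onto $g$ gives $\left<(\T_n^*\T_n+\lambda)^{-1}\T_n^*\,\cdot,g\right>\approx\left<\cdot,\T f_0\right>_\mu$. The errors are controlled by $\|\T_n-\T\|=O_P(n^{-1/2})$, which follows from Hilbert-space CLTs under Assumption~\ref{as: iid and T}, together with $\|(\T^*\T+\lambda)^{-1}\T^*\T f_0-f_0\|=O(\lambda)$.

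The bias term $-\lambda\left<(\T^*\T+\lambda)^{-1}h_0,g\right>$ is bounded by $\lambda\|(\T^*\T+\lambda)^{-1}h_0\|\,\|g\|$. The source condition on $h_0$ makes this $O(\lambda)$, since $(\T^*\T+\lambda)^{-1}(\T^*\T)^{\gamma/2}$ is bounded for $\gamma\ge2$. Hence it is $o(n^{-1/2})$ because $n\lambda^2\to0$. Replacing $\T_n$ by $\T$ in the bias costs terms of order $\lambda\cdot n^{-1/2}\lambda^{-1}$ times a factor that vanishes as $\lambda\to0$; the condition $n\lambda\to\infty$ keeps these cross remainders negligible.

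The stochastic term becomes $\left<(\I-\P_n)(s_n-\A_nh_0),\T f_0\right>_\mu$. Expanding $s_n-\A_nh_0=(\E_n-\E)[(Y-h_0(Z))e^{\textbf{i}(X^T,W)\cdot}]+(s-\A h_0)$ and $\P_n-\P$ to first order yields the remaining terms of $\psi$. Note that $\T f_0\in\mathcal R(\I-\P)$ and $(\I-\P)(s-\A h_0)=0$. The first piece produces $-[Y_i-h_0(Z_i)]\left<(\I-\P)e^{\textbf{i}(X_i^T,W_i)\cdot},\T f_0\right>_\mu$; the minus sign comes from the integration by parts above. The perturbation $-(\P_n-\P)(s-\A h_0)$ produces the third term. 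Here $s-\A h_0=\B\beta_0$, with $\beta_0=(\B^*\B)^{-1}\B^*(s-\A h_0)$. Differentiating $\P_n$ gives a term $-(\B_n-\B)\beta_0$ projected onto $\mathcal R(\B)^\perp$, and $\B_n\beta_0$ has summand $X_i^T\beta_0e^{\textbf{i}(X_i^T,W_i)\cdot}$. Part (ii) then follows from part (i) by the Lindeberg--Lévy CLT, since $\psi$ has finite variance under Assumption~\ref{as: iid and T}.

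\textbf{Main obstacle.} I expect the hardest step to be controlling the operator-perturbation remainders uniformly: replacing $(\T_n^*\T_n+\lambda)^{-1}$ by $(\T^*\T+\lambda)^{-1}$ costs a factor $\lambda^{-1/2}$ or $\lambda^{-1}$. My first attempt is to exploit the smoothness of $g$, namely $g\in\mathcal R(\T^*\T)$, so that the resolvent acting on $g$ stays bounded. Then every remainder is a product of two $O_P(n^{-1/2})$ perturbations times at most $\lambda^{-1/2}$. That gives $O_P(n^{-1}\lambda^{-1/2})=o_P(n^{-1/2})$ provided $n\lambda\to\infty$.
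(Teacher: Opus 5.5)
Your proposal is correct and follows essentially the paper's route. It uses the same three-way split with stochastic equicontinuity, integration by parts to $-\sqrt{n}\langle \hhat-h_0,g\rangle_{\mathcal{H}}$, the Tikhonov closed form after profiling out $\beta$, replacement of $\T(\T^*\T+\lambda\I)^{-1}g$ by $\T f_0$, and the first-order expansion of $\P_n-\P$ using $\T f_0\perp\mathcal{R}(\B)$. Your bias/stochastic split is just a regrouping of the paper's $\Theta_1,\dots,\Theta_4$ and $h_\lambda-h_0$, and your direct reproducing-property argument for $\int(\hhat-h_0)f_Z'=\langle \hhat-h_0,g\rangle_{\mathcal{H}}$ is a shortcut for the paper's Mercer expansion.

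One small correction: when $\gamma=2$, $f_0$ has no extra smoothness, so $\|(\T^*\T+\lambda\I)^{-1}\T^*\T f_0-f_0\|_{\mathcal{H}}$ is only $o(1)$ (or $O(\lambda^{1/2})$ after applying $\T$), not $O(\lambda)$. This still suffices, because it multiplies $\sqrt{n}\,\|(\I-\P_n)(s_n-\A_n h_0)\|_\mu=O_P(1)$.
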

	The proof of Proposition \ref{prop: ifr} can be found in Appendix \ref{sec: appendix proofs of of the main results}. The condition $n \lambda \rightarrow \infty$ guarantees that several variance terms appearing in the expansion of $\sqrt{n}( \widehat  \theta - \theta_0)$ are asymptotically negligible, while the condition $n \lambda^2=o(1)$
	ensures that a (regularization) bias term vanishes. The result can be extended to accommodate a data-dependent choice of $\lambda$. However, for simplicity, we consider only a deterministic regularization parameter.\\
	The asymptotic normality result of Proposition \ref{prop: ifr} is obtained under Assumption \ref{as: completeness}, which guarantees the identification of $h_0$. Such an assumption requires the injectivity of $\A$, that is, the completeness of the distribution of $Z$ conditional on $(X,W)$. We point out that the convergence in distribution of our statistic can also be obtained without requiring the injectivity of $\A$. Under Assumption \ref{as: source conditions}, the conditions in \citet[Lemma 3.1 and 4.1]{severini2012efficiency} are satisfied, and the AME $\theta_0$ is identified also if identification of $h_0$ does not hold. We can then extend our results and obtain the asymptotic normality of  $\widehat \theta$ by replacing $h_0$ with the projection of $h_0$ onto $\mathcal{N}(\T)^\perp$, the orthogonal complement of the null space of $\T$. For simplicity of presentation, we choose not to provide such a more general version of our results, and to work under the completeness condition in Assumption \ref{as: completeness}. \\ 
	We finally note that our estimator is based on a moment condition that is not locally robust with respect to the preliminary estimation of $h_0$; see \cite{chernozhukov2022locally} and \cite{bennett2023source}. A locally robust approach would require estimating additional correction terms and hence selecting  multiple regularization parameters. Differently, our approach relies on a single regularization parameter, which simplifies its implementation.
	
	Although from Proposition \ref{prop: ifr}   \(\widehat{\theta}\) is asymptotically normal,  the variance of the limiting distribution has a rather involved expression. This makes it inconvenient to use the quantiles of the asymptotic distribution for hypothesis testing on \(\theta_0\). Therefore, in the next section, we construct a simple bootstrap test.
	
	\section{Bootstrap test for \texorpdfstring{\(\theta_0\)}{theta0}}
	\label{sec:bootstrap_test_for_theta0}
	
	We construct a Bayesian bootstrap test for the null hypothesis 
	\begin{equation*}
		\mathcal{H}_0:\theta_0=\theta_{\mathcal{H}_0}\quad\text{versus}\quad\mathcal{H}_1:\theta_0 \neq \theta_{\mathcal{H}_0}\,,
	\end{equation*}
	with \(\theta_{\mathcal{H}_0} \in \Reels\) a user-chosen tested value of the AME.
	For the sake of brevity, we focus on a two-sided test, but our procedure can be easily adapted to implement a one-sided test as well. In view of Proposition~\ref{prop: ifr}, we construct a bootstrap test for $\theta_0$ by bootstrapping the Wald statistic $\sqrt{n}(\widehat \theta - \theta_0)$. To this end, we first define the Bayesian bootstrap versions of $\widehat h$ and $\widehat \beta$, and then we construct the Bayesian bootstrap version of $\widehat \theta$. 
	
	Let $\{\xi_i\}_{i=1, \ldots, n}$ be i.i.d. random variables with $\E\{\xi\} = \operatorname{Var}\{\xi\} = 1$, and independent from the sample data.
	For example, we can sample each $\xi_i$ from an exponential distribution with density $f(\xi)=\exp(-\xi)$. Let $\overline{\xi}:= n^{-1} \sum_{i=1}^n \xi_i$. Note that we use $\overline{\cdot}$ for both complex conjugation and empirical means to avoid introducing additional notation. The distinction will be clear from the context. 
	We define the bootstrap version of $(\betahat, \hhat)$ as 
	\begin{equation}
		\label{eq: bootstrap version of hhat}
		(\betahat_b, \hhat_b)
		:=
		\argmin_{\beta\in\Reels^{p}\,,\,h\in\mathcal{H}}
		\int 
		\left|
		\E_n\!\left\{
		\big[Y_i- X_i^T \beta - h(Z_i)\big]
		\exp\!\big(\bm{i}(W_i, X_i^T) t\big)
		\frac{\xi_i}{\overline{\xi}}
		\right\}
		\right|^2 d\mu(t) 
		+ \lambda \|h\|^2_{\mathcal{H}}\,.
	\end{equation} 
	The above objective function is similar to the one used for $(\betahat, \hhat)$, with the exception that the $i$th observation has weight $\xi_i/\overline{\xi}$, so each observation is represented in the bootstrap sample with that weight.
	From the minimization program in \eqref{eq: bootstrap version of hhat}, the bootstrapped estimators $\widehat h_b$ and $\widehat \beta_b$ can be computed in the same way as their sample counterparts in Equations \eqref{eq: expresson of hhat and betahat} and \eqref{eq: alphahat equation}, provided that the weighting matrix  $\bm{F}=\{\mathcal{F}_\mu((X_i^T,W_i)-(X_j^T,W_j)) : i,j=1,\ldots,n\}$ is replaced with 
	\begin{equation*}
		\bm{F}_b:=[\mathcal{F}_\mu((X_i^T,W_i)-(X_j^T,W_j)) \xi_i \xi_j /(\overline{\xi})^2 : i,j=1,\ldots,n] \, .  
	\end{equation*}
	Similarly, we define \(\bm{C}_b:= \bm{X}^T \bm{F}_b \bm{X}\).
	Thus, with $\bm{\widehat \alpha}_b=(\widehat \alpha_{b,1},\ldots,\widehat \alpha_{b,n})^T$ minimum norm solution to 
	\begin{equation}\label{eq: alphahat bootstrap equation}
		\big( 
		\bm{K}\bm{F}_b \bm{K} 
		-
		\bm{K} \bm{F}_b \bm{X} \bm{C}_b^{-1} \bm{X}^T \bm{F}_b \bm{K}
		+
		n^2 \lambda \bm{K}
		\big) 
		\bm{\alphahat}_b
		=
		\bm{K}\bm{F}_b
		\big( 
		\bm{Y} - \bm{X} \bm{C}_b^{-1} \bm{X}^T \bm{F}_b\bm{Y} 
		\big)
	\end{equation}
	we compute 
	\begin{equation}\label{eq: hhat prime bootstrap}
		\widehat h_b'(\cdot)=\sum_{i=1}^n \widehat \alpha_{b,i}K'(\cdot,Z_i)
	\end{equation}
	and define the bootstrap version of $\widehat \theta$ as 
	\begin{equation}\label{eq: thetahat bootstrapped}
		\thetahat_{b} :=
		\E_n\!\left\{\frac{\xi_i}{\overline{\xi}} \, \hhat_b'(Z_i) \right\}
		\,. 
	\end{equation}
	Hence, the bootstrap version of the statistic $\sqrt{n}(\widehat \theta - \theta_0)$ is 
	\begin{equation}
		\label{eq: bootstrapped statistic}
		\sqrt{n}(\widehat \theta_{b} - \widehat \theta)\, .
	\end{equation}
	We implement the test by using symmetric bootstrap p-values, but our procedure can be easily modified to implement equal-tail bootstrap p-values.
	To run a test of nominal size $\alpha$, we set the critical value to the $(1-\alpha)$ quantile of the bootstrap distribution of the bootstrapped statistic $|\sqrt{n}(\widehat \theta_b - \widehat \theta)|$. Formally,
	\begin{align*}
		\widehat c _{1 - \alpha}=\inf_c\left\{{\Pr}_{\xi}\left(|\sqrt{n}(\widehat \theta_b - \widehat \theta)|\leq c\right)\geq 1-\alpha\right\}\, ,
	\end{align*}
	where $\Pr_{\xi}$ represents the probability that considers as random only the bootstrap weights $\{\xi_i:i=1,\ldots,n\}$ and as fixed the sample data.  We then reject the null hypothesis if and only if $|\sqrt{n}(\widehat \theta - \theta_{\mathcal{H}_0})|>\widehat c_{1-\alpha}$. 
	
	\medskip
	
	To show the validity of the bootstrap test, we introduce some regularity conditions.  To avoid introducing additional notation, we make a slight abuse of notation and denote by $\Pr$ the probability treating both the bootstrap weights $\{\xi_i:i=1,\ldots,n\}$ and the sample data $\{(Y_i,Z_i,X_i,W_i):i=1,\ldots,n\}$ as random. 
	
	\begin{assumption}\label{as: bootstrap weights}
		The bootstrap weights $\{\xi_i:i=1,\ldots,n\}$ are i.i.d., independent from the sample data $\{(Y_i,Z_i, X_i,W_i):i=1,\ldots,n\}$, and satisfy $\operatorname{E}\{\xi\}=\operatorname{Var}\{\xi\}=1$.
	\end{assumption}
	The following assumption is the bootstrap counterpart of Assumption 
	\ref{as: belonging condition}. Mild primitive conditions guaranteeing this assumption are provided in Appendix~\ref{sec: low level conditions}. 
	
	\begin{assumption}\label{as: belonging condition for bootstrap}
		(i) $\sup_{z\in\mathcal{Z}}|\widehat h'_b(z) - h'_0(z)|=o_P(1)$ ; (ii) $\Pr(\widehat h'_b\in\mathcal{G})\rightarrow 1$, with \(\log N_{[\;]}(\epsilon,\mathcal{G},L^2(P)) \) \(\leq C \epsilon^{-v}\) and fixed constants $C$ and $v\in(0,2)$; (iii) $\lim_{z\rightarrow \pm \infty}f_Z(z)[\widehat h_b(z) - h_0(z)]=0$.
	\end{assumption}
	
	\begin{proposition}\label{prop: bootstrap validity} 
		Let the assumptions of Proposition \ref{prop: ifr}, Assumption 
		\ref{as: bootstrap weights}, 
		and Assumption \ref{as: belonging condition for bootstrap} hold. Then,
		\begin{enumerate}[label=(\roman*)]
			\item    \begin{align*}
				\sqrt{n}(\widehat \theta_b - \widehat \theta)=\sqrt{n}\operatorname{E}_n\!\left\{\left(\frac{\xi_i}{\overline{\xi}}-1\right)\psi(Y_i,Z_i,X_i,W_i)\right\}+o_P(1)\,,
			\end{align*}
			
			where $\psi$ is defined in Proposition \ref{prop: ifr}, and the probability space is the joint probability on the random bootstrap weights and the sample data.
			\item Under $\mathcal{H}_0: \theta_0=\theta_{\mathcal{H}_0}$, $\Pr( |\sqrt{n}(\widehat \theta - \theta_{\mathcal{H}_0})|>\widehat c_{1-\alpha} ) \rightarrow \alpha$.
			\item Under $\mathcal{H}_1: \theta_0\neq\theta_{\mathcal{H}_0}$ , $\Pr( |\sqrt{n}(\widehat \theta - \theta_{\mathcal{H}_0})|>\widehat c_{1-\alpha} ) \rightarrow 1$.
		\end{enumerate}
	\end{proposition}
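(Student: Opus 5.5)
The plan is to mirror the proof of Proposition~\ref{prop: ifr}, replacing the empirical measure $\E_n$ with the weighted empirical measure $\E_n\{(\xi_i/\overline{\xi})\,\cdot\}$ throughout. First I write the weighted objective in \eqref{eq: bootstrap version of hhat} in operator form. Define the bootstrap empirical operators $\A_b h:=\E_n\{(\xi_i/\overline{\xi})h(Z_i)\exp(\bm{i}(X_i^T,W_i)\cdot)\}$, and analogously $\B_b$ and $s_b$. Then
\[
(\betahat_b,\hhat_b)=\argmin_{\beta,h}\|\B_b\beta+\A_b h-s_b\|_\mu^2+\lambda\|h\|_{\mathcal{H}}^2 .
\]
Profiling out $\beta$ gives $\hhat_b=(\T_b^*\T_b+\lambda \I)^{-1}\T_b^*(\I-\P_b)s_b$, with $\T_b=(\I-\P_b)\A_b$ and $\P_b$ the projection onto $\mathcal{R}(\B_b)$. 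This is exactly the structure used for $\hhat$, with $(\A_n,\B_n,s_n)$ replaced by their weighted versions.

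Next I write $\thetahat_b-\theta_0$ as a sum of three pieces. The first is $\E_n\{(\xi_i/\overline{\xi})h_0'(Z_i)\}-\theta_0$. The second is the empirical-process term $(\E_n^\xi-\E)(\hhat_b'-h_0')$. The third is the functional term $\E\{\hhat_b'(Z)-h_0'(Z)\}$. Integration by parts with the boundary condition Assumption~\ref{as: belonging condition for bootstrap}(iii) gives $-\int(\hhat_b-h_0)f_Z'=-\langle \hhat_b-h_0,g\rangle_{\mathcal{H}}$.

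The empirical-process term is $o_P(n^{-1/2})$. This follows from Assumption~\ref{as: belonging condition for bootstrap}(i)--(ii), via the multiplier empirical process version of bracketing equicontinuity: the weights have finite second moment and are independent of the data, so the multiplier CLT (van der Vaart--Wellner, Thm.~2.9.2) applies, and $\overline{\xi}\to1$.

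For the functional term I expand $\hhat_b-h_0$ around $h_0$. Since $s_b=\B_b\beta_0+\A_b h_0+\E_n^\xi\{\eps_i e^{\bm{i}(\cdot)}\}$, this yields a regularization bias $-\lambda(\T_b^*\T_b+\lambda)^{-1}h_0$ and a stochastic part $(\T_b^*\T_b+\lambda)^{-1}\T_b^*(\I-\P_b)\E_n^\xi\{\eps_i e^{\bm{i}(\cdot)}\}$. I pair each of these with $g=\T^*\T f_0$, using the source condition Assumption~\ref{as: source conditions} with $\gamma\ge2$.

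The bias contributes $O(\lambda)$ plus a remainder controlled by $\|\T_b^*\T_b-\T^*\T\|=O_P(n^{-1/2})$. The stochastic part, after replacing $(\T_b,\P_b)$ by $(\T,\P)$, gives
\[
\E_n^\xi\{\eps_i\langle(\I-\P)e^{\bm{i}(X_i^T,W_i)\cdot},\T f_0\rangle_\mu\} .
\]
The perturbation of $\P_b$ around $\P$ produces the $\B$-related term in $\psi$, evaluated at $(\B^*\B)^{-1}\B^*(s-\A h_0)$. The replacement errors are of the form $\|\T_b-\T\|\cdot\|\cdot\|/\sqrt{\lambda}$, and these are $o_P(n^{-1/2})$ because $n\lambda\to\infty$ and $n\lambda^2\to0$. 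This part is literally the argument of Proposition~\ref{prop: ifr} with weights. The weighted operators satisfy the same $O_P(n^{-1/2})$ bounds, since $\E\xi^2<\infty$ and $\xi$ is independent of the data.

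Collecting terms gives $\sqrt n(\thetahat_b-\theta_0)=\sqrt n\,\E_n\{(\xi_i/\overline{\xi})\psi_i\}-\sqrt n\,\E\psi+o_P(1)$. Subtracting the expansion in Proposition~\ref{prop: ifr}(i) yields (i).

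For (ii), conditionally on the data, $\sqrt n\,\E_n\{(\xi_i/\overline{\xi}-1)\psi_i\}=\overline{\xi}^{-1}\sqrt n\,\E_n\{(\xi_i-1)(\psi_i-\E_n\psi_i)\}$. By the conditional Lindeberg CLT, this converges to $\mathcal{N}(0,\operatorname{Var}\psi)$ in probability, uniformly in the bounded Lipschitz metric. A remainder that is $o_P(1)$ jointly is also $o_{P_\xi}(1)$ in probability by Markov's inequality. Continuity of the limiting law of the absolute value then gives $\widehat c_{1-\alpha}\to_P c_{1-\alpha}$, the quantile of $|\mathcal{N}(0,\operatorname{Var}\psi)|$, assuming $\operatorname{Var}\psi>0$. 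Combined with Proposition~\ref{prop: ifr}(ii), this gives size $\alpha$. For (iii), $\widehat c_{1-\alpha}=O_P(1)$ while $|\sqrt n(\thetahat-\theta_{\mathcal{H}_0})|\ge\sqrt n|\theta_0-\theta_{\mathcal{H}_0}|-O_P(1)\to\infty$.

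The main obstacle is the functional-term step: showing uniformly that replacing $(\T_b,\P_b)$ by $(\T,\P)$ inside the regularized inverse costs only $o_P(n^{-1/2})$. I would first try the resolvent identity $(\T_b^*\T_b+\lambda)^{-1}-(\T^*\T+\lambda)^{-1}=(\T_b^*\T_b+\lambda)^{-1}(\T^*\T-\T_b^*\T_b)(\T^*\T+\lambda)^{-1}$. I would apply it to $g=\T^*\T f_0$, so that one inverse is absorbed by $\T^*\T$ and costs no factor $\lambda^{-1}$.
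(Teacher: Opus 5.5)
Your proof is correct in substance and follows the paper's skeleton:
\begin{itemize}
\item integration by parts via Assumption~\ref{as: belonging condition for bootstrap}(iii), giving $-\langle \hhat_b-h_0,g\rangle_{\mathcal H}$;
\item a Tikhonov expansion in which the weighted operators satisfy the same $O_P(n^{-1/2})$ rates as the unweighted ones;
\item bracketing equicontinuity for the weighted empirical process;
\item a conditional multiplier CLT and quantile convergence for (ii)--(iii).
\end{itemize}
Centering at $\theta_0$ and subtracting Proposition~\ref{prop: ifr}(i), rather than splitting $\thetahat_b-\thetahat$ directly, is cosmetic.

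The genuine difference is the stochastic term. You write $s_b=\B_b\beta_0+\A_b h_0+\E_n\{(\xi_i/\overline{\xi})\eps_i e^{\bm{i}(X_i^T,W_i)\cdot}\}$. Then $(\I-\P_b)s_b-\T_b h_0=(\I-\P_b)\E_n\{(\xi_i/\overline{\xi})\eps_i e^{\bm{i}(X_i^T,W_i)\cdot}\}$ is centered and $O_P(n^{-1/2})$. The paper instead works with $Y_i-h_0(Z_i)$, whose mean $s-\A h_0=\B\beta_0$ is generally nonzero, so it must recover the $\B$-term of $\psi$ from a first-order expansion of $\widehat\P-\P$.

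Your route is cleaner. Using $\T f_0\perp\mathcal{R}(\B)$ and $(\B^*\B)^{-1}\B^*(s-\A h_0)=\beta_0$, it yields $\psi$ in the simpler form $h_0'(Z)-\eps\,\langle e^{\bm{i}(X^T,W)\cdot},\T f_0\rangle_\mu$. But this makes your sentence attributing the $\B$-term to the perturbation of $\P_b$ wrong. The term $(\P_b-\P)\E_n\{(\xi_i/\overline{\xi})\eps_i e^{\bm{i}(X_i^T,W_i)\cdot}\}$ is $O_P(n^{-1})$ and contributes nothing at first order. The $\B$-term is already contained in your $\eps_i$-term, so adding it again would double count.

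Your treatment of (ii) is more careful than the paper's. You transfer the joint $o_P(1)$ remainder to the conditional law via Markov's inequality, and you make explicit the need for $\operatorname{Var}\psi>0$.

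Two smaller points.
\begin{itemize}
\item Consider the bias remainder $\lambda\langle[(\T_b^*\T_b+\lambda\I)^{-1}-(\T^*\T+\lambda\I)^{-1}]h_0,g\rangle_{\mathcal H}$. Absorbing one inverse into $g=\T^*\T f_0$ only gives $O_P(n^{-1/2})$, which is still $O_P(1)$ after multiplying by $\sqrt n$. You also need the source condition on $h_0$, namely $h_0=\T^*\T v_0$. This gives $\lambda\|(\T_b^*\T_b+\lambda\I)^{-1}h_0\|_{\mathcal H}=O_P(\lambda+n^{-1/2})$ and hence an $o_P(n^{-1/2})$ remainder. This is where the paper uses $\|(\T^*\T+\lambda\I)^{-1}h_0\|_{\mathcal H}=O(1)$ in bounding $\Theta_4$.
\item Van der Vaart--Wellner's Theorem 2.9.2 requires $\|\xi\|_{2,1}<\infty$, which Assumption~\ref{as: bootstrap weights} does not provide. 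A direct bracketing argument for $\{(\xi,z)\mapsto \xi f(z): f\in\mathcal{G}\}$, as in Lemma~\ref{lem: ase}, needs only $\E\xi^2<\infty$. By independence, brackets for $\xi f$ have size $\|\xi\|_{L^2}\,\|u-l\|_{L^2(P)}$.
\end{itemize}
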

	The above result establishes the validity of the bootstrap test, and its proof  can be found in Appendix \ref{sec: appendix proofs of of the main results}. Part (i) obtains the bootstrap influence function representation of $\widehat \theta_b$, showing that such a representation is a reweighed version of the expansion of $\widehat \theta$ from Proposition \ref{prop: ifr}. The new weights are the bootstrap weights $\{\xi_i/\overline{\xi}-1\;:\;i=1,\ldots,n\}$. Such a reweighting ensures that the leading term of the  bootstrap expansion estimates consistently the distribution of the statistic under the null, while remaining bounded in probability under the alternative. Parts (ii) and (iii) of Proposition \ref{prop: bootstrap validity} establish that the test has the correct size asymptotically and is consistent, with power converging to one when the sample size goes to infinity.
	
	\section{Implementation}
	\label{sec:implementation}
	
	This section describes the implementation of our method. We first discuss the choice of our single regularization parameter. We select $\lambda$ by minimizing a cross-validated version of the objective function in \eqref{eq:def_objective_function_Mn_beta_h}. Specifically, let us focus on a  two-fold cross-validation.
	We partition the sample into two folds, denoted $S_1$ and $S_2$. We first use  $S_1$ for estimation and $S_2$ for validation, and then interchange their roles. Let $\widehat \beta_{S_1,\lambda}$ and $\widehat h_{S_1,\lambda}$ denote the estimators obtained from the first fold for a given value of $\lambda$. Define  $\widehat \beta_{S_2,\lambda}$ and $\widehat h_{S_2,\lambda}$ analogously for the second fold. The cross-validated version of \eqref{eq:def_objective_function_Mn_beta_h} is given by
	\begin{align}\label{eq: 2fold CvM criterion}
		\int 
		\Bigg|
		\frac{1}{n}
		\bigg(
		\sum_{i \in S_1}
		\Big[\big(Y_i - X_i^T \betahat_{S_2, \lambda} & - \hhat_{S_2, \lambda}(Z_i)\big)  \exp\!\big(\bm{i}(W_i, X_i^T) t\big)
		\Big] \nonumber  \\
		& +
		\sum_{i \in S_2}
		\Big[\big(Y_i - X_i^T \betahat_{S_1, \lambda} - \hhat_{S_1, \lambda}(Z_i)\big)  \exp\!\big(\bm{i}(W_i, X_i^T) t\big)
		\Big]
		\bigg)    
		\Bigg|^2 
		d\mu(t),
	\end{align}
	where $\sum_{i\in S_{j}}$ denotes the sum over observations in fold $S_j$, for $j=1,2$. Finally, we minimize the above criterion with respect to $\lambda$, and we take the minimizer as the selected regularization parameter.\\ 
	Compared to the usual out-of-sample mean squared error criterion, the above cross-validation objective employs the appropriate weights, those corresponding to our objective function \(M_n(\beta, h)\) in Equation~\eqref{eq:def_objective_function_Mn_beta_h}. In our simulations and empirical applications, we implement the above two-fold cross-validation method with good results. \\
	To reduce the computational burden of the bootstrap procedure, we fix the regularization parameter at the value selected from the original sample and use this same value across all bootstrap replications. As shown in our simulation study, this is enough to provide a good performance of our procedure. Algorithm~\ref{algo:our_bootstrap_test} summarizes the practical steps required to implement our test.
	
	\begin{algorithm}[ht]
		\caption{Bootstrap test for the null hypothesis \(\mathcal{H}_0:\theta_0 = \theta_{\mathcal{H}_0}\)}
		\label{algo:our_bootstrap_test}
		\textbf{Inputs}: a sample \(\{Y_i, Z_i, X_i, W_i\}_{i=1, \ldots, n}\), a tested value \(\theta_{\mathcal{H}_0}\), a number \(B\) of bootstrap repetitions, a nominal level \(\alpha \in (0, 1)\), a distribution for \(\xi\).
		
		\begin{enumerate}
			
			\item 
			Obtain a cross-validated \(\lambdahat\) by minimizing the two-fold Cramér-von Mises criterion in \eqref{eq: 2fold CvM criterion}.
			
			\item 
			Compute \(\thetahat\) using the penalization \(\lambdahat\) as in Equations \eqref{eq: ame estimator}, \eqref{eq: alphahat equation}, and \eqref{eq: hhat prime}.
			
			\item \textbf{For} \(b = 1, \ldots, B\) 
			\begin{enumerate}
				\item Draw an i.i.d. sample $(\xi_1, \ldots, \xi_n)$ from the distribution of $\xi$.
				\item Compute $\thetahat_b$ using the penalization \(\lambdahat\) and the weights $(\xi_1, \ldots, \xi_n)$ as in Equations~\eqref{eq: alphahat bootstrap equation}, \eqref{eq: hhat prime bootstrap}, and~\eqref{eq: thetahat bootstrapped}.
			\end{enumerate}
			
			\item Compute the $1-\alpha$ quantile of $\big\{|\widehat \theta_{b} - \widehat \theta|:b=1,\ldots, B \big\}$, denoted $\widehat q_{1-\alpha}$.
			
			\item Reject $\mathcal{H}_0$ if and only if $|\widehat \theta - \theta_{\mathcal{H}_0}|>\widehat q_{1-\alpha}$.
		\end{enumerate}
	\end{algorithm}
	
	\section{Small Sample Behavior}
	\label{sec:small_sample_behavior_simulations}
	
	In this section, we study the finite-sample performance of our test in two settings.
	We first consider a fully nonparametric model corresponding to Equation \eqref{eq: npiv regression} without~\(X\). We then consider the general case of a partially linear model, where we also include the regressor $X$.
	
	\paragraph{Fully nonparametric model.}
	We use a data-generating process consistent with model~\eqref{eq: npiv regression} without~\(X\), where we draw the variables \(Z\), \(W\), and \(\eps\) as follows:
	\begin{equation}
		\label{eq: simulations setting for epx and Z}
		\eps = \frac{aV + U}{\sqrt{1 + a^2}}, 
		\; \text{where} \; a = \sqrt{\frac{\rho_{\eps V}^2}{1 - \rho_{\eps V}^2}} \,,
		\quad \text{and} \quad 
		Z = \frac{b W + V}{\sqrt{1 + b^2}},
		\; \text{where} \;
		b = \sqrt{\frac{\rho_{ZW}^2}{1 - \rho_{ZW}^2}}\,,
	\end{equation}
	with \((W, V, U)\) mutually independent standard Gaussians. 
	In this setup, the marginal distributions of $\varepsilon$ and $Z$ are also standard Gaussians. 
	For a non-zero value of $a$, the regressor $Z$ is endogenous and correlated with the error $\varepsilon$. The parameter $\rho_{\varepsilon V}$ measures the degree of such correlation, and hence the level of endogeneity of $Z$.  For a non-null value of $b$, $W$ is a valid instrument for $Z$.
	
	We implement our estimator with $K(z,u)=\exp(-|z-u|^2/2)$, which corresponds to a Gaussian kernel. 
	This is a well-established choice for kernel ridge regression in the machine learning literature (see \citet[Chapter 4]{steinwart2008support}) and enables the approximation of a wide range of functions.
	Specifically, it has a universal approximation property, in the sense that any continuous function can be arbitrarily well approximated by a function in the RKHS with Gaussian reproducing kernel (see \citet[Corollary 4.58]{steinwart2008support}).
	We set $\mu$ to a Laplace distribution with mean zero and unit variance, thus its characteristic function $\mathcal{F}_\mu$ is a Cauchy density. We choose the penalty parameter $\lambda$ by the two-fold cross-validation described in Section \ref{sec:implementation}, and perform a grid search to minimize the cross-validated objective function.
	For bootstrapping, we draw the bootstrap weights $\xi$ from an Exponential distribution with expectation one.
	
	In this nonparametric setting, we compare our procedure to an inference method for the average derivative based on two-stage series regressions \citep{ai2007estimation}. 
	The method is implemented as follows. We set the series basis to B-Splines and select the number of series terms in the first and second stages using the method proposed in \citet*{chen2025adaptive}. Given the two-stage series estimator of the IV regression, we compute the AME similarly to Equation \eqref{eq: ame estimator}. We then obtain the p-values by pairwise bootstrap. Specifically, we resample with replacement from the original sample $\{Y_i,Z_i,W_i\}_{i=1}^n$ and, for each bootstrap sample, compute the AME, keeping the same number of series terms selected in the original sample. 
	
	We consider two functional forms for the treatment effects: $h_{0,1}(z)=z^2/\sqrt{2}$, corresponding to a quadratic function, and $h_{0,2}(z)=\sqrt{3 \sqrt{3}}\exp(-z^2/2)$, corresponding to a non-polynomial function. Both expressions are normalized so that $h_{0,1}(Z)$ and $h_{0,2}(Z)$ have unit variance. With these functions and the marginal distribution of $Z$, the corresponding true values of the AMEs (\(\theta_0\)) are, respectively, 0 in the quadratic case, and approximately 0.81 in the non-polynomial case. Finally, we set $\rho_{ZW} = 0.8$ and consider two levels of endogeneity: 
	$\rho_{\varepsilon V} = 0.5$, corresponding to a moderate level of endogeneity, 
	and $\rho_{\varepsilon V} = 0.8$, corresponding to a high level of endogeneity.
	
	We ran 5,000 simulations with the sample sizes of $n=100, 400$. To speed up computations, we use the warp-speed method of \cite{davidson2007improving} and \cite{giacomini2013warp}.
	Thus, for each simulated sample, we draw one bootstrap sample and use the collection of bootstrapped statistics to compute the bootstrap p-value for each original statistic.
	Table~\ref{tab:rejection_rate_under_HO_quadratic_and_nonpolynomial_case} reports the rejection probabilities under the null hypothesis at the usual nominal levels of 5\% and 10\% for the different functions \(h_0\) and levels of endogeneity \(\rho_{\eps V}\).\footnote{
		We also consider the nominal level of~1\%, but the rejection rates for the two-stage series regression method are very close to~0. 
	}
	We label our method ``one-step'' and the two-stage series regression method ``two-step''.
	Under the null hypothesis, both methods control Type~I error, with rejection probabilities below the nominal levels.
	However, our method delivers rejection rates closer to the targeted nominal levels in most configurations -- it is systematically the case with \(n = 100\) and with the non-polynomial function \(h_{0,2}\).
	The improvement is notably large for the smaller sample size \(n = 100\). 
	
	\begin{table}[H]
		\centering
		\caption{Rejection probabilities under the null hypothesis in the fully nonparametric model.}
		\label{tab:rejection_rate_under_HO_quadratic_and_nonpolynomial_case}
		\begin{tabular}{c c | c c | c c | c c | c c}
			\hline 
			\multicolumn{2}{c|}{Case} &
			\multicolumn{4}{c|}{quadratic} & \multicolumn{4}{c}{non-polynomial} \\
			\hline 
			\multicolumn{2}{c|}{Nominal level} &
			\multicolumn{2}{c|}{5\%} & \multicolumn{2}{c|}{10\%} &
			\multicolumn{2}{c|}{5\%} & \multicolumn{2}{c}{10\%}
			\\
			\hline 
			\multicolumn{2}{c|}{Method} & 1-step & 2-step & 1-step & 2-step & 1-step & 2-step & 1-step & 2-step \\
			\hline         
			\(\rho_{\eps V} = 0.5\) &\(n = 100\) & 
			2.20 & 0.73 & 
			5.14 & 3.13 & 
			2.42 & 0.65 & 
			6.70 & 3.11 
			\\
			& \(n = 400\) &
			2.20 & 2.69 & 
			5.94 & 7.42 & 
			3.38 & 1.57 & 
			7.22 & 5.30 
			\\
			\hline 
			\(\rho_{\eps V} = 0.8\) & \(n = 100\) & 
			2.48 & 0.77 & 
			5.56 & 3.36 & 
			3.00 & 0.80 & 
			6.92 & 2.87 
			\\
			& \(n = 400\) & 
			1.98 & 2.72 & 
			6.16 & 8.01 & 
			3.26 & 1.51 & 
			7.42 & 5.43 
			\\
			\hline
		\end{tabular}
		\tablefigurenote{Rejection probabilities for our one-step regression test (labeled ``1-step'') and for the two-step series regression test (labeled ``2-step'') in the fully nonparametric model under the null hypothesis. 
			The left and right panels report results for the quadratic case, \(h_{0, 1}(z) = z^2 / \sqrt{2}\), and the 
			non-polynomial case, \(h_{0, 2}(z) = \sqrt{3\sqrt{3}} \times z \times \exp(-{z^2}/{2})\).
			The upper and lower panels report results with a moderate (\(\rho_{\eps V} = 0.5\)) or a high (\(\rho_{\eps V} = 0.8\)) level of endogeneity, for \(n = 100\) and \(n = 400\), at nominal levels of 5\% and of 10\%.
			The rejection probabilities are based on 5,000 Monte Carlo replications and are displayed in percentages.}
	\end{table}
	
	
	We also study the behavior of our test under the alternative. 
	To do so, we compute the rejection rates of the test of the hypothesis \(\theta_{\mathcal{H}_0} = \theta_0 + \gamma\), for a deviation \(\gamma \in (0, 1)\) from the null, using 5,000 Monte Carlo replications.
	In Figures~\ref{fig:power_curve_RKHS_CCK_case_quadratic_moderate_endogeneity} and~\ref{fig:power_curve_RKHS_CCK_case_nonpolynomial_high_endogeneity}, we compare our one-step regression test (displayed in blue) with the two-step series regression test (displayed in red).
	Figure~\ref{fig:power_curve_RKHS_CCK_case_quadratic_moderate_endogeneity} presents the results for the quadratic function $h_{0,1}$ and a moderate level of endogeneity, while Figure~\ref{fig:power_curve_RKHS_CCK_case_nonpolynomial_high_endogeneity} presents the results for the non-polynomial function $h_{0,2}$ and a high level of endogeneity.
	The other settings display similar results and are reported in Appendix~\ref{section:appendix_results_simulations}, Figures~\ref{fig:power_curve_RKHS_CCK_case_quadratic_high_endogeneity} and~\ref{fig:power_curve_RKHS_CCK_case_nonpolynomial_moderate_endogeneity}.
	They show that our method has considerably higher power for the small sample size \(n = 100\).
	For \(n = 400\), both methods have comparable power: the two-step test slightly outperforms in the quadratic case \(h_{0, 1}\), whereas ours outperforms in the non-polynomial case \(h_{0, 2}\).
	
	\begin{figure}[ht]
		\centering
		\caption{Power curves in the fully nonparametric model: quadratic case and moderate endogeneity.}
		\label{fig:power_curve_RKHS_CCK_case_quadratic_moderate_endogeneity}
		\includegraphics[width=0.88\textwidth]{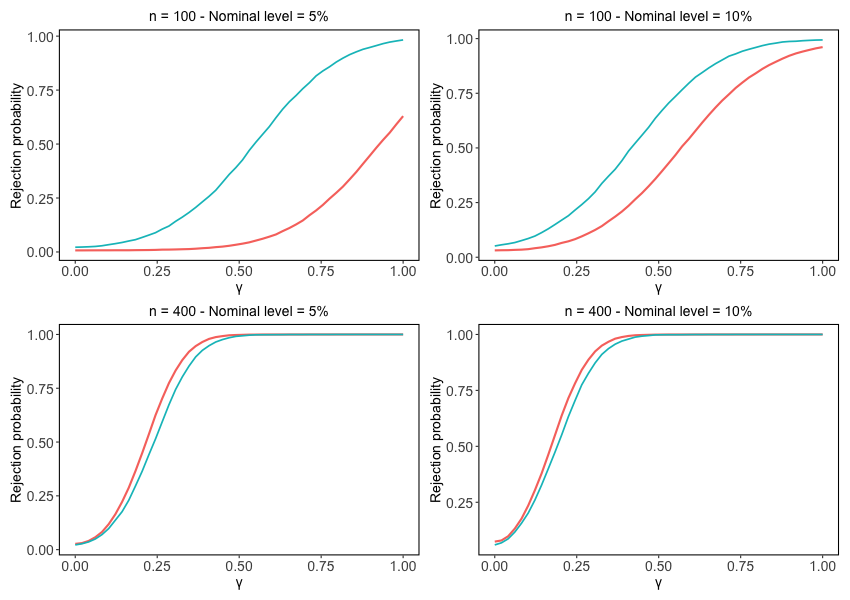}
		\tablefigurenote{Power curves for our one-step regression test (in blue) and for the two-step series regression test (in red) in the fully nonparametric model, for the quadratic case, \(h_{0, 1}(z) = z^2 / \sqrt{2}\), with a moderate level of endogeneity, \(\rho_{\eps V} = 0.5\). The upper and lower panels report results for \(n = 100\) and \(n = 400\). The left and right panels report results at nominal levels of 5\% and 10\%. The x-axis shows the deviation $\gamma$ from the null, and the y-axis shows the rejection probability based on 5,000 Monte Carlo replications.}
	\end{figure}
	
	\begin{figure}[ht]
		\centering
		\caption{Power curves in the fully nonparametric model: non-polynomial case and high endogeneity.}
		\label{fig:power_curve_RKHS_CCK_case_nonpolynomial_high_endogeneity}
		\includegraphics[width=0.88\textwidth]{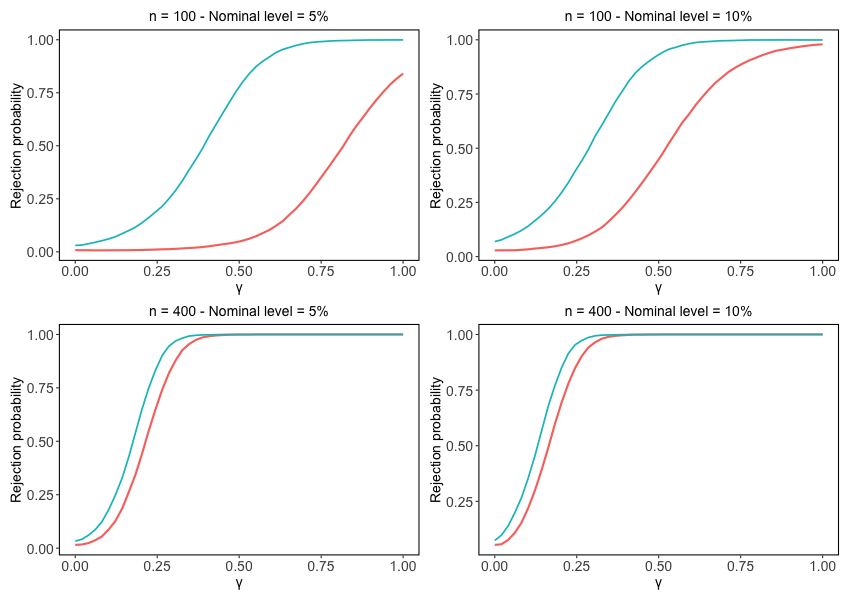}
		\tablefigurenote{Power curves for our one-step regression test (in blue) and for the two-step series regression test (in red) in the fully nonparametric model, for the non-polynomial case, \(h_{0, 2}(z) = \sqrt{3\sqrt{3}} \times z \times \exp(-{z^2}/{2})\), with a high level of endogeneity, \(\rho_{\eps V} = 0.8\). The upper and lower panels report results for \(n = 100\) and \(n = 400\). The left and right panels report results at nominal levels of 5\% and 10\%. The x-axis shows the deviation $\gamma$ from the null, and the y-axis shows the rejection probability based on 5,000 Monte Carlo replications.}
	\end{figure}
	
	Overall, when considering both size and power, our test performs best in almost all cases, often substantially outperforming the two-step series regression test, while in the few scenarios where it is slightly outperformed, the difference is minor.
	
	\paragraph{Partially linear model.}
	
	We now study the finite-sample performance of our test in the general partially linear setting corresponding to Equation \eqref{eq: npiv regression}.
	Similarly to the fully nonparametric model, ($Z$, $\varepsilon$) are drawn from Equation \eqref{eq: simulations setting for epx and Z}.  
	$U$ and $V$ are mutually independent standard Gaussians, and are jointly independent of the covariate $X$ and the instrument $W$. The pair $(X,W)$ is drawn from a zero-mean bivariate Gaussian, with $\operatorname{Var}(X)=\operatorname{Var}(W)=1$, and a correlation coefficient between $X$ and $W$ equal to 0.5.  
	In the simulations, we set the slope coefficient of \(X\) to~1.
	We consider the same two scenarios as in the fully nonparametric model, depending on \(\rho_{\eps V}\) to monitor the magnitude of the endogeneity, as well as the two cases of a quadratic and a non-polynomial functional form for \(h_0\). 
	Our test is implemented as described in the fully nonparametric case. 
	
	\begin{table}[ht]
		\centering
		\caption{Rejection probabilities under the null hypothesis in the partially linear model.}    \label{tab:rejection_rate_under_HO_partlylinearinX}
		\begin{tabular}{c c | c c | c c}
			\hline 
			\multicolumn{2}{c|}{Case} &
			\multicolumn{2}{c|}{quadratic} & \multicolumn{2}{c}{non-polynomial} \\
			\hline 
			\multicolumn{2}{c|}{Nominal level} & 5\% & 10\% & 5\% & 10\% \\
			\hline
			\(\rho_{\eps V} = 0.5\) &\(n = 100\) & 
			1.64 & 4.82 &
			1.92 & 5.18
			\\
			& \(n = 400\) &
			2.06 & 5.52 &
			2.46 & 6.62
			\\
			\hline 
			\(\rho_{\eps V} = 0.8\) & \(n = 100\) & 
			1.42 & 4.14 &
			2.30 & 5.28
			\\
			& \(n = 400\) & 
			2.04 & 5.22 &
			2.28 & 6.20
			\\
			\hline
		\end{tabular}
		\tablefigurenote{Rejection probabilities for our one-step regression test in the partially linear model under the null hypothesis.
			The left and right panels report results for the quadratic case, \(h_{0, 1}(z) = z^2 / \sqrt{2}\), and the 
			non-polynomial case, \(h_{0, 2}(z) = \sqrt{3\sqrt{3}} \times z \times \exp(-{z^2}/{2})\).
			The upper and lower panels report results with a moderate (\(\rho_{\eps V} = 0.5\)) or a high (\(\rho_{\eps V} = 0.8\)) level of endogeneity, for \(n = 100\) and \(n = 400\), at nominal levels of 5\% and of 10\%.
			The rejection probabilities are based on 5,000 Monte Carlo replications and are displayed in percentages.}
	\end{table}
	
	\medskip
	
	Table~\ref{tab:rejection_rate_under_HO_partlylinearinX} reports the rejection rates of our test under the null hypothesis \(\theta_{\mathcal{H}_0} = \theta_0\) at the nominal levels of 5\% and 10\%, based on 5,000 Monte Carlo replications.
	The results show that our test controls Type~I error with rejection rates below the nominal levels and approaching them when the sample size increases.
	Like in the fully nonparametric model, we also study the behavior of our test under the alternative hypothesis by testing \(\theta_{\mathcal{H}_0} = \theta_0 + \gamma\), with \(\gamma\) ranging between 0 and 1.
	Figure~\ref{fig:power_curve_PLinX_case_quadratic_moderate_endogeneity} considers the quadratic case (\(h_{0, 1}\)) with a moderate level of endogeneity (\(\rho_{\eps V} = 0.5\)).
	The results are qualitatively the same for the other settings reported in Appendix~\ref{section:appendix_results_simulations} (Figures~\ref{fig:power_curve_PLinX_case_quadratic_high_endogeneity} to~\ref{fig:power_curve_PLinX_case_nonpolynomial_high_endogeneity}). 
	The power curves illustrate the consistency of our test.
	Overall, these simulations indicate that the solid performance of our test in the fully nonparametric model is preserved when the model is extended to a partially linear specification with additional exogenous covariates. 
	
	\begin{figure}[H]
		\centering
		\caption{Power curves in the partially linear model: quadratic case and moderate endogeneity.} \label{fig:power_curve_PLinX_case_quadratic_moderate_endogeneity}
		\includegraphics[width=0.88\textwidth]{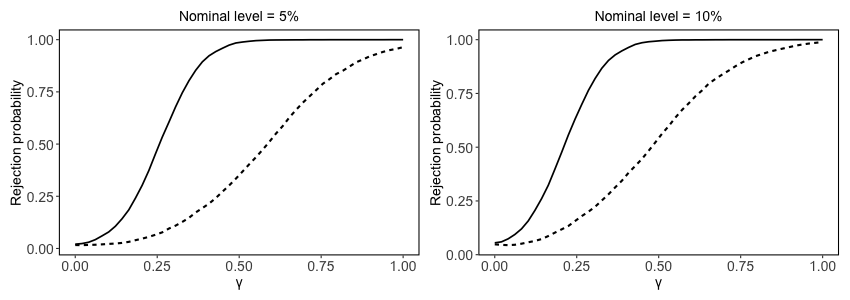}
		\tablefigurenote{Power curves for our one-step regression test in the partially linear model, for the quadratic case, $h_{0,1}(z) = z^2 / \sqrt{2}$, with a moderate level of endogeneity, $\rho_{\varepsilon V} = 0.5$. The dashed and solid lines correspond to $n = 100$ and $n = 400$, respectively. The left and right panels report results at nominal levels of 5\% and 10\%. The x-axis shows the deviation $\gamma$ from the null, and the y-axis shows the rejection probability based on 5,000 Monte Carlo replications.}
	\end{figure}
	
	\section{Empirical applications}
	\label{sec:empirical_application}
	
	We apply our method to three empirical datasets to illustrate its potential.
	
	\subsection{Class size effects on student achievement}\label{subsec: class size effects on student achievement}
	
	Our first application revisits the seminal study by \citet{angrist1999using} on the effects of class size on student achievement. 
	We briefly recall the setting and refer to the original article for further details.
	The data come from a national testing program conducted in Israeli elementary schools in June 1991, with test scores in mathematics and reading comprehension scaled from 0 to 100. 
	\citet{angrist1999using} exploits Maimonides' Rule -- an administrative regulation that caps class sizes at 40 students -- to construct the predicted class size, \(W = \text{Enrollment} / (\lfloor(\text{Enrollment} - 1)/40\rfloor + 1)\).
	The discontinuities at enrollment multiples of 40 provide exogenous variation in class size, making \(W\) a valid instrument for the actual class size, denoted \(Z\) in our notation.
	We focus on fifth-grade students, using a sample of \(n = 2{,}024\) classes.
	Following the original study, we also consider an exogenous covariate \(X\), the percentage of disadvantaged students (PD index).
	The outcome variable \(Y\) is the average grade across students for a given class, either in mathematics or reading.
	
	\medskip
	
	We estimate the partially linear IV model~\eqref{eq: npiv regression}.
	Our parameter of interest is the AME \(\theta_0 = \operatorname{E}[h'_0(Z)]\), which here measures the average impact of a one-student increase in class size on test scores.
	In comparison, we recall that a classical application of two-stage least squares assumes a linear relationship between average test score~\(Y\) and class size~\(Z\), so that the AME equals the corresponding coefficient.
	We implement our RKHS-based method as described in Sections~\ref{sec:implementation} and \ref{sec:small_sample_behavior_simulations}.
	Inference is conducted using the Bayesian bootstrap test with \(B = 499\) replications.
	To report the confidence intervals, we invert the bootstrap test of Algorithm~\ref{algo:our_bootstrap_test}. Specifically, for any \(\theta\), we do not reject the null hypothesis {\(\theta_{0} = \theta\)} if and only if \(|\thetahat - \theta| \leq \widehat{q}_{1-\alpha}\), with \(\widehat{q}_{1-\alpha}\) the \(1 - \alpha\) quantile of the bootstrap distribution of \(|\thetahat_b - \thetahat|\). This leads to the interval \([\thetahat \pm \widehat{q}_{1-\alpha}]\) that uses the quantiles of the bootstrapped statistic 
	\(|\thetahat_b - \thetahat|\) as critical values.
	
	\medskip
	
	Table~\ref{tab:fifth_grade_results} presents the results. Applying our method, we find that the effects of class size on test scores are not statistically significant in either mathematics or reading.
	This finding contrasts with the statistically significant negative effects estimated by two-stage least squares in the original study: \(-0.372\) for reading and \(-0.355\) for mathematics (see Table~IV of \citet{angrist1999using}).
	Our results are consistent with \citet{horowitz2011applied}, Section~5.2, who notes that
	``the [fully] nonparametric [IV] model does not support the conclusion drawn from the linear model that increases in class size are associated with decreased test scores'' (page~374).
	Our results confirm the findings of \citet[Section~5.2]{horowitz2011applied}, which are based on an informal bootstrap inference procedure for the nonparametric IV model. He notes that  ``the [fully] nonparametric [IV] model does not support the conclusion drawn from the linear model that increases in class size are associated with decreased test scores'' (page~374).
	Importantly, our analysis shows that, 
	absent strong functional form assumptions, the data provide limited information about the precise nature of class-size effects. 
	This highlights the value of our inference procedure, which properly accounts for uncertainty in a flexible semiparametric model with theoretical guarantees, while remaining computationally simple through a single regularization parameter.
	
	\begin{table}[ht]
		\centering
		\caption{Estimates and confidence intervals for the AME of class size on test scores.}
		\begin{tabular}{lccc}
			\hline
			Outcome & Estimate & 90\% CI & 95\% CI \\
			\hline
			Mathematics &
			\(0.133\) & \([-0.339,0.605]\) & \([-0.450, 0.716] \) \\
			Reading &
			\(-0.076\) & \([-0.387, 0.235]\) & \([-0.488, 0.337]\) \\
			\hline
		\end{tabular}
		\tablefigurenote{Estimates and confidence intervals from our method for the AME of class size on student mathematics and reading scores in fifth grade. The confidence intervals are obtained by inverting the bootstrap test with \(B = 499\) replications.
			Data: \citet{angrist1999using}, \(n = 2{,}024\) observations.}
		\label{tab:fifth_grade_results}
	\end{table}
	
	\subsection{Trade effects on per capita income}
	
	The previous application uses samples of about 2,000 observations.
	We now turn to two applications with smaller sample sizes to illustrate the small-sample performance of our method.
	As we show below, our procedure proves able to detect significant effects while allowing a flexible semiparametric relationship even with small samples.
	
	\medskip
	
	First, we revisit \citet{frankel1999}, who investigate whether international trade causes economic growth using a sample of \(n = 150\) countries.
	The endogenous treatment, the trade share \(Z\) of a given country, is measured as imports plus exports divided by GDP.
	\citet{frankel1999} address the endogeneity issue by a constructed trade share based on geographic factors (bilateral distances, country sizes, and whether countries share borders or are landlocked).
	We refer to the original article for further details.
	Here, we use their data and instrumental variable strategy while allowing for a flexible semiparametric relationship between trade and income.
	More precisely, we estimate the following partially linear model 
	\begin{equation}
		\label{eq:our_specification_for_trade}
		\log(Y_i) = h_0(Z_i) + \beta_1 N_i + \beta_2 A_i  + \varepsilon_i, \quad \operatorname{E}\{\varepsilon_i | N_i, A_i,W_i\} = 0,
	\end{equation}
	where the outcome $Y_i$ is GDP per capita in country $i$, the endogenous treatment $Z_i$ is the trade share, the exogenous covariates are population (\(N_i\)) and area (\(A_i\)), and \(W_i\) is the constructed geographic component of trade obtained from the original study. 
	Equation~\eqref{eq:our_specification_for_trade} corresponds to the main specification of \citet{frankel1999}, with the difference that the trade share enters nonlinearly through \(h_0\), which is nonparametrically specified.
	Our application thus relaxes the linearity assumption, allowing for a flexible relationship between trade and income.
	
	\medskip
	
	Table~\ref{tab:our_results_trade} presents the estimation and inference results for the AME using our method.
	We find a positive and statistically significant effect of the trade share on GDP per capita, which illustrates that our method can detect effects even with a few hundred observations.
	The resulting estimate implies that a one-percentage-point increase in trade share raises GDP per capita by 1.15 percent.
	In comparison, \citet{frankel1999} find that a one-percentage-point increase raises income per capita by 1.97 percent (see their Table~3, column~2, page~387).
	This suggests that allowing for nonlinearities in the relationship changes the estimate, reducing the effects of trade on income, although the effects remain of the same order of magnitude and statistically significant.
	Based on \citet{frankel1999} data and their identification strategy, our semiparametric method thus confirms a positive causal effect of trade on income while relaxing the linearity assumption.
	This application illustrates the robustness of our flexible inference procedure, which avoids imposing functional form restrictions while delivering reliable inference even in small samples.
	
	\begin{table}[ht]
		\centering
		\caption{Estimates and confidence intervals for the AME of trade share on GDP per capita.}
		\label{tab:our_results_trade}
		\begin{tabular}{lccc}
			\hline Outcome & Estimate & 90\% CI & 95\% CI \\
			\hline
			GDP per capita & \(1.15\) & \([0.42, 1.89]\) & \([0.38, 1.92]\)
			\\
			\hline
		\end{tabular}
		\tablefigurenote{Estimates and confidence intervals from our method for the AME of trade share on GDP per capita. The confidence intervals are obtained by inverting the bootstrap test with \(B = 499\) replications. Data: \citet{frankel1999}, \(n = 150\) observations.}
	\end{table}
	
	\subsection{Advertisement effects on reader newspaper demand}
	
	Our third application is based on \citet{sokullu2016semi}, who proposes an empirical semiparametric model for two-sided markets applied to local daily newspapers in the US.
	Two-sided markets (where a platform serves two distinct groups of users who benefit from each other's participation) are a central object of study in empirical industrial organization.
	Examples include newspaper platforms, connecting readers and advertisers \citep{rysman2004competition, chandra2009mergers}, and payment card networks, connecting merchants and cardholders \citep{rysman2007empirical}; see, e.g., \citet{rochet2006two} or \citet{armstrong2006competition} for theoretical contributions and \citet{sanchez2021multisided} for a recent review.
	A key feature of these markets is the presence of \emph{indirect network effects}: the value that agents on one side derive from the platform depends on the level of participation on the other side.
	In the newspaper context, advertisers value newspapers with larger readership, and readers may be attracted or deterred by the volume of advertising.
	Correctly estimating these network effects is important for competition policy analysis.
	
	Several empirical studies of two-sided markets have estimated network effects under linearity assumptions.
	\citet{sokullu2016semi} relaxes this assumption by specifying the network effect functions nonparametrically. 
	Specifically, she estimates a nonparametric IV model with Tikhonov regularization on a sample of $n = 117$ monopoly newspaper markets.
	The main finding is that network effects are neither linear nor monotonic: readers' benefits initially increase with advertising but decrease beyond a threshold, and similarly for advertisers with respect to circulation.
	
	\medskip
	
	Here, to illustrate our method, we consider the readers' side and estimate the demand function.
	\citet{sokullu2016semi} uses the results from the nonparametric IV model to select a parametric specification and re-estimates the model.
	On the readers' side, the nonparametric estimates suggest approximating the network effect of advertising on readers by a third-order polynomial, leading to the following equation
	\begin{equation}
		\label{eq:sokullu_reader_demand}
		\log\!\left(\frac{1 - N^r}{N^r}\right)
		=
		\alpha_0 
		+
		\alpha_1 N^a
		+
		\alpha_2 (N^a)^2
		+
		\alpha_3 (N^a)^3
		+
		\beta_0 P^r
		+ U,
	\end{equation}
	where all variables are defined at the level of each local newspaper market:
	$N^r$ is the share of readers buying the newspaper;
	$N^a$ is the share of advertisements in the newspaper;
	$P^r$ is the daily cover price;
	and $U$ is an error term capturing unobserved newspaper characteristics affecting readers' demand.
	The left-hand side is the log-odds transformation of the readers' market share, which arises from the parametric specification of the distribution of readers' and advertisers' benefits (see Section~4.3.1 in \citet{sokullu2016semi}).
	The polynomial $\alpha_1 N^a + \alpha_2 (N^a)^2 + \alpha_3 (N^a)^3$ captures the \emph{indirect network effect} of advertising on reader demand: it measures how the presence of advertisers on the platform affects readers' propensity to buy the newspaper.
	We apply our RKHS-based method to nonparametrically estimate this network effect by considering
	\begin{equation}
		\label{eq:sokullu_ours}
		\log\!\left(\frac{1 - N^r}{N^r}\right)
		=
		h_0(N^a)
		+
		\beta_0 P^r    
		+ U.
	\end{equation}
	Compared to~\eqref{eq:sokullu_reader_demand}, the advertising share $N^a$ enters through the unknown function $h_0$, which we estimate nonparametrically while maintaining the linear specification for the price.
	Our goal is to estimate and perform inference on the average marginal effect $\theta_0 = \operatorname{E}\{h_0'(N^a)\}$, which summarizes how an increase in advertising share affects readers' demand on average.
	We provide a formal inference procedure for the AME, complementing \citet{sokullu2016semi}'s nonparametric estimation results.
	
	Both $N^a$ and $P^r$ are endogenous in this model: the advertising share is jointly determined with readership on the platform, and the cover price is set by the newspaper in response to unobserved demand characteristics.
	Our method can be straightforwardly adapted to handle such cases with multiple endogenous regressors.
	We require two instruments $(W_1, W_2)$ and adapt the matrix $\boldsymbol{F}$ as $[\mathcal{F}_\mu((W_{1i}, W_{2i})^\top - (W_{1j}, W_{2j})^\top) : i, j = 1, \ldots, n]$.
	Following \citet{sokullu2016semi}, we use the area of the city (in square miles) as $W_1$ and the average wages in the printing industry at the county level as $W_2$.
	The first instrument is correlated with advertising rates but plausibly independent of unobserved reader demand characteristics; the second is an income-related variable correlated with circulation but independent of unobserved newspaper characteristics in the advertising equation.
	
	\medskip
	
	Table~\ref{tab:sokullu_our_results} presents the results obtained from our method, implemented as described in the previous sections.
	As a comparison, \citet{sokullu2016semi} obtains the following estimates for \eqref{eq:sokullu_reader_demand}: $\widehat{\alpha}_1 = -48.34$, $\widehat{\alpha}_2 = 96.75$, and $\widehat{\alpha}_3 = -62.06$.
	Combined with the descriptive statistics for $N^a$ reported in Table~1 of \citet{sokullu2016semi}, the average marginal effect of $N^a$ implied by the cubic polynomial specification is thus
	$
	\widehat{\alpha}_1
	+
	2 \, \widehat{\alpha}_2
	\, 
	\operatorname{E}_n\{N^a\}
	+
	3 \, \widehat{\alpha}_3 \, 
	\operatorname{E}_n\{(N^a)^2\}
	=
	-8.19.
	$
	In contrast, our nonparametrically estimated AME is $-5.53$ and is statistically significant.
	The estimate is smaller, although both remain of the same order of magnitude; in particular, $-8.19$ falls within our confidence intervals.
	Recall that the cubic specification in~\eqref{eq:sokullu_reader_demand} is itself guided by \citet{sokullu2016semi}'s prior nonparametric estimation.
	The similarity between our AME estimate and that implied by the cubic specification therefore yields supporting evidence for the validity of our method.
	
	\begin{table}[ht]
		\centering
		\caption{Estimates and confidence intervals for the average marginal network effect of advertisement on reader demand.}
		\label{tab:sokullu_our_results}
		\begin{tabular}{lccc}
			\hline Outcome & Estimate & 90\% CI & 95\% CI \\
			\hline
			Reader demand & $-5.53$ & $[-8.50, -2.56]$ & $[-9.06, -2.01]$ \\
			\hline
		\end{tabular}
		\tablefigurenote{Estimates and confidence intervals from our method for the average marginal network effects of advertisement on reader demand. The confidence intervals are obtained by inverting the bootstrap test with \(B = 499\) replications. Data: \citet{sokullu2016semi}, \(n = 117\) observations.}
	\end{table}
	
	\medskip
	
	In conclusion, the three applications presented in this section, spanning sample sizes from 117 to over 2,000 observations, illustrate the value of our method: it provides a formal inference procedure for the AME that delivers economically meaningful results and maintains computational simplicity with a single regularization parameter.
	
	\section{Conclusion}
	\label{sec:conclusion}
	
	We propose a bootstrap inference procedure for the average marginal effect of an endogenous treatment in the partially linear IV model.
	Our procedure builds on the RKHS framework and requires a single regularization parameter, which simplifies its practical implementation relative to existing approaches.
	Our approach can be extended to consider linear functionals of nonparametric IV regressions.
	In our simulations, we show that our procedure has good size control and power in small and moderate samples, while three empirical applications illustrate its effectiveness on real datasets. 
	
	\section*{Appendix}
	
	The Appendix is divided into four sections.
	In Appendix~\ref{section:appendix_results_simulations}, we provide additional simulation results.
	Appendix~\ref{sec: appendix proofs of of the main results} contains the proofs of our main results, namely Propositions \ref{prop:computation_solution_empirical_program}, \ref{prop: ifr}, and \ref{prop: bootstrap validity}. 
	In Appendix~\ref{sec: appendix auxiliary lemmas}, we collect the auxiliary lemmas and their proofs.
	Finally, Appendix~\ref{sec: low level conditions} contains primitive conditions that satisfy the high-level assumptions.
	
	\appendix
	
	\section{Additional results for the simulations}
	\label{section:appendix_results_simulations}
	
	This section presents the power curves for the remaining settings not shown in Section~\ref{sec:small_sample_behavior_simulations}.
	
	\paragraph{Fully nonparametric model.}
	
	In the fully non-parametric model, Figure~\ref{fig:power_curve_RKHS_CCK_case_quadratic_high_endogeneity} presents the quadratic case with high level of endogeneity while Figure~\ref{fig:power_curve_RKHS_CCK_case_nonpolynomial_moderate_endogeneity} presents the non-polynomial case with moderate level of endogeneity.
	The two remaining settings are presented in the body of the article (Figure~\ref{fig:power_curve_RKHS_CCK_case_quadratic_moderate_endogeneity} for the quadratic case with moderate level of endogeneity and Figure~\ref{fig:power_curve_RKHS_CCK_case_nonpolynomial_high_endogeneity} for the non-polynomial case with high level of endogeneity).
	
	\paragraph{Partially linear model.}
	
	In the partially linear model, Figure~\ref{fig:power_curve_PLinX_case_quadratic_moderate_endogeneity} presents the quadratic case with high level of endogeneity, Figure~\ref{fig:power_curve_PLinX_case_nonpolynomial_moderate_endogeneity} and~\ref{fig:power_curve_PLinX_case_nonpolynomial_high_endogeneity} show the non-polynomial case with, respectively, low and high level of endogeneity. 
	The remaining setting (quadratic case with moderate level of endogeneity) is presented in Section~\ref{sec:small_sample_behavior_simulations} (Figure~\ref{fig:power_curve_PLinX_case_quadratic_moderate_endogeneity}).
	
	\begin{figure}[H]
		\centering
		\caption{Power curves in the fully nonparametric model: quadratic case and high endogeneity.}
		\label{fig:power_curve_RKHS_CCK_case_quadratic_high_endogeneity}
		\includegraphics[width=0.88\textwidth]{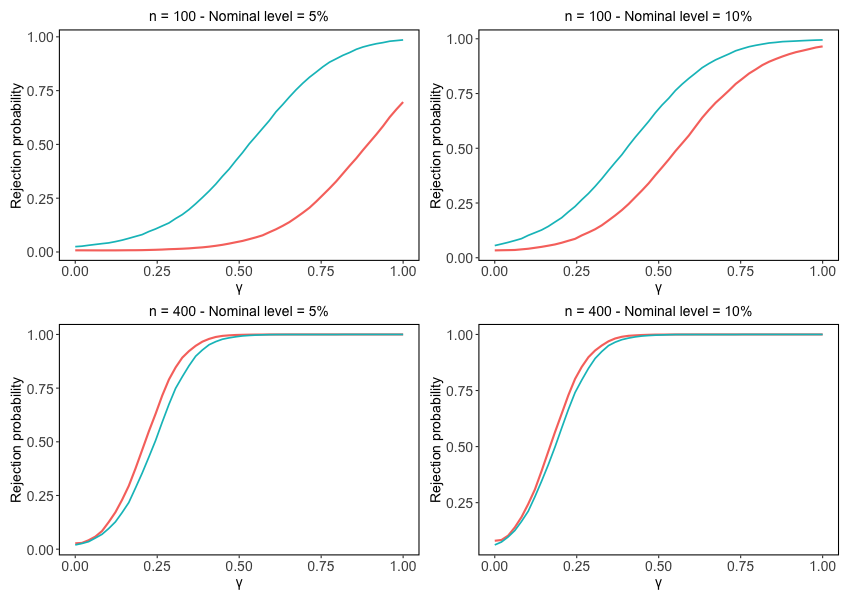}
		\tablefigurenote{Power curves for our one-step regression test (in blue) and for the two-step series regression test (in red) in the fully nonparametric model, for the quadratic case, \(h_{0, 1}(z) = z^2 / \sqrt{2}\), with a high level of endogeneity, \(\rho_{\eps V} = 0.8\). The upper and lower panels report results for \(n = 100\) and \(n = 400\). The left and right panels report results at nominal levels of 5\% and 10\%. The x-axis shows the deviation $\gamma$ from the null, and the y-axis shows the rejection probability based on 5,000 Monte Carlo replications.}
	\end{figure}
	
	\begin{figure}[H]
		\centering
		\caption{Power curves in the fully nonparametric model: non-polynomial case and moderate endogeneity.}
		\label{fig:power_curve_RKHS_CCK_case_nonpolynomial_moderate_endogeneity}
		\includegraphics[width=0.88\textwidth]{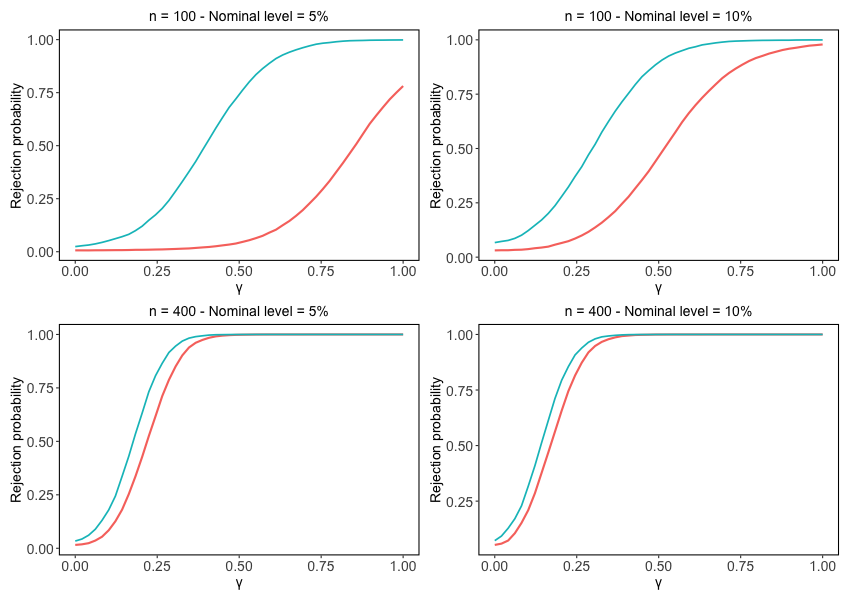}
		\tablefigurenote{Power curves for our one-step regression test (in blue) and for the two-step series regression test (in red) in the fully nonparametric model, for the non-polynomial case, \(h_{0, 2}(z) = \sqrt{3\sqrt{3}} \times z \times \exp(-{z^2}/{2})\), with a moderate level of endogeneity, \(\rho_{\eps V} = 0.5\). The upper and lower panels report results for \(n = 100\) and \(n = 400\). The left and right panels report results at nominal levels of 5\% and 10\%. The x-axis shows the deviation $\gamma$ from the null, and the y-axis shows the rejection probability based on 5,000 Monte Carlo replications.}
	\end{figure}
	
	\begin{figure}[H]
		\centering
		\caption{Power curves in the partially linear model: quadratic case and high endogeneity.} \label{fig:power_curve_PLinX_case_quadratic_high_endogeneity}
		\includegraphics[width=0.88\textwidth]{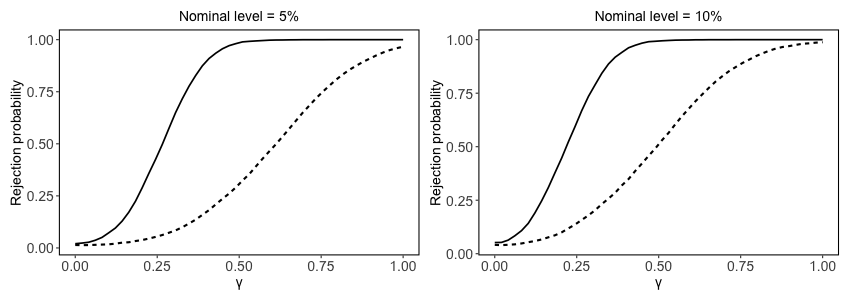}
		\tablefigurenote{Power curves for our one-step regression test in the partially linear model, for the quadratic case, $h_{0,1}(z) = z^2 / \sqrt{2}$, with a high level of endogeneity, $\rho_{\varepsilon V} = 0.8$. The dashed and solid lines correspond to $n = 100$ and $n = 400$, respectively. The left and right panels report results at nominal levels of 5\% and 10\%. The x-axis shows the deviation $\gamma$ from the null, and the y-axis shows the rejection probability based on 5,000 Monte Carlo replications.}
	\end{figure}
	
	\begin{figure}[H]
		\centering
		\caption{Power curves in the partially linear model: non-polynomial case and moderate endogeneity.} \label{fig:power_curve_PLinX_case_nonpolynomial_moderate_endogeneity}
		\includegraphics[width=0.88\textwidth]{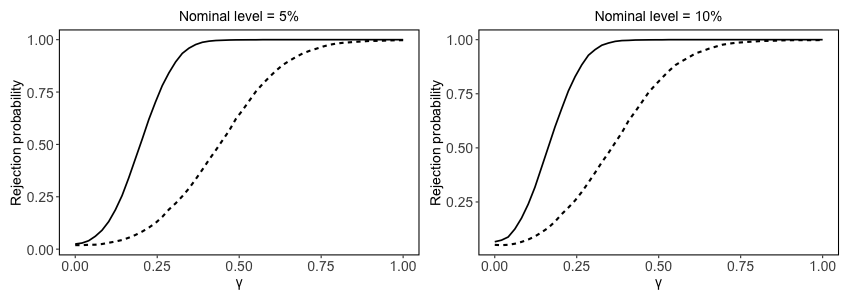}
		\tablefigurenote{Power curves for our one-step regression test in the partially linear model, for the non-polynomial case, \(h_{0, 2}(z) = \sqrt{3\sqrt{3}} \times z \times \exp(-{z^2}/{2})\), with a moderate level of endogeneity, \(\rho_{\varepsilon V} = 0.5\). The dashed and solid lines correspond to $n = 100$ and $n = 400$, respectively. The left and right panels report results at nominal levels of 5\% and 10\%. The x-axis shows the deviation $\gamma$ from the null, and the y-axis shows the rejection probability based on 5,000 Monte Carlo replications.}
	\end{figure}
	
	\begin{figure}[H]
		\centering
		\caption{Power curves in the partially linear model: non-polynomial case and high endogeneity.} \label{fig:power_curve_PLinX_case_nonpolynomial_high_endogeneity}
		\includegraphics[width=0.88\textwidth]{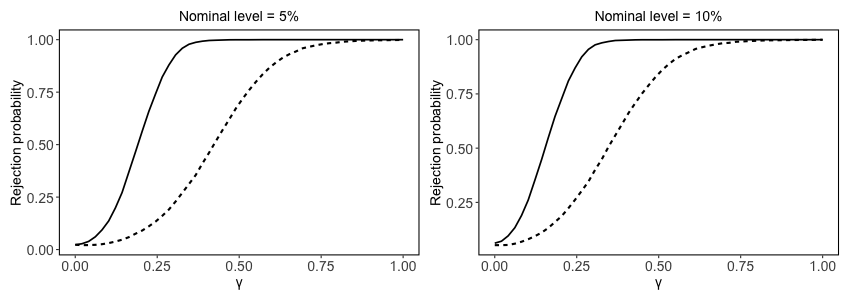}    
		\tablefigurenote{Power curves for our one-step regression test in the partially linear model, for the non-polynomial case, \(h_{0, 2}(z) = \sqrt{3\sqrt{3}} \times z \times \exp(-{z^2}/{2})\), with a high level of endogeneity, \(\rho_{\varepsilon V} = 0.8\). The dashed and solid lines correspond to $n = 100$ and $n = 400$, respectively. The left and right panels report results at nominal levels of 5\% and 10\%. The x-axis shows the deviation $\gamma$ from the null, and the y-axis shows the rejection probability based on 5,000 Monte Carlo replications.}
	\end{figure}
	
	\newpage
	
	\section{Proofs of the main results}
	\label{sec: appendix proofs of of the main results}
	Before delving into the proofs, we recall some standard definitions and notation. For a linear operator $\operatorname{C}:\mathcal{X}\to \mathcal{Y}$ between inner product spaces$(\mathcal{X}, \left<\cdot,\cdot\right>_{\mathcal{X}})$  and $(\mathcal{Y}, \left<\cdot,\cdot\right>_{\mathcal{Y}})$ with associated norms $\|\cdot\|_{\mathcal{X}}=\sqrt{\left<\cdot,\cdot\right>_\mathcal{X}}$ and $\|\cdot\|_{\mathcal{Y}}=\sqrt{\left<\cdot,\cdot\right>_\mathcal{Y}}$,
	the operator norm is defined as  
	\begin{equation*}
		\|\operatorname{C}\|_{op}:=\sup_{\varphi\in\mathcal{X},\|\varphi\|_{\mathcal{X}}=1}\|\operatorname{C}\varphi\|_{\mathcal{Y}}\, .
	\end{equation*}
	An operator is bounded if $\|\operatorname{C}\|_{op}<\infty$; see \citet[Chapter 2]{kress1999linear}. The adjoint of $\operatorname{C}$ is the operator $\operatorname{C}^*:\mathcal{Y}\to \mathcal{X}$ such that $\left<\operatorname{C}\varphi,\psi\right>_{\mathcal{Y}}=\left<\varphi,\operatorname{C}^*\psi\right>_{\mathcal{X}}$ for any $(\varphi,\psi)\in\mathcal{X}\times \mathcal{Y}$. Whenever $\operatorname{C}$ is linear and bounded, it admits an adjoint $\operatorname{C}^*$ that is also linear and bounded, and such that $\|\operatorname{C}^*\|_{op}=\|\operatorname{C}\|_{op}$. See \citet[Theorem 4.9]{kress1999linear}.\\
	
	\noindent \textbf{Proof of Proposition \ref{prop:computation_solution_empirical_program}}.  
	Let us define 
	\begin{align}\label{eq: Ahat and shat}
		\widehat{\A}h:=\E_n\{h(Z_i) \exp(\textbf{i}\,(X^T_i,W_i)\, \cdot)\}\,&,\, \widehat{\A}:\mathcal{H}\to L^2_\mu\,,\nonumber \\
		\widehat{\B}\beta=\operatorname{E}_n \{X_i^T\beta \exp(\textbf{i}\,(X_i^T,W_i)\,\cdot)\}\,,\, \widehat{\operatorname{\B}}:\mathbb{R}^p\to L^2_\mu\,,\nonumber \\
		\text{ and }\widehat s:=\E_n \{Y_i \exp(\textbf{i}\,(X^T_i,W_i)\, \cdot)\}\,&,\, \widehat s\in L^2_\mu\, .
	\end{align}
	We reformulate the the minimization problem in \eqref{eq:empirical_program} as 
	\begin{align}\label{eq: min problem reformulation}    \min_{h\in\mathcal{H},\beta\in\mathbb{R}^p}\int_{ }\Big|\operatorname{E}_n \Big\{[Y_i - X_i^T\beta - h(Z_i)]\exp(\textbf{i}(X_i^T,W_i)t)\Big\}\Big|^2& d \mu(t)+\lambda \|h\|^2_\mathcal{H}\\
		=&\min_{h\in\mathcal{H},\beta\in\mathbb{R}^p} \|\widehat s- \widehat{\B}\beta-\widehat{\A}h \|^2_\mu + \lambda \|h\|^2_{\mathcal{H}}\ .\nonumber 
	\end{align}
	By Lemma \ref{lem: compactness, and expressions of h0, hhat}\ref{lem: compactness, and expressions of h0, hhat:: hhat and betahat}, the solution to the above problem exists, is unique, and is given by  
	\begin{equation}\label{eq: hhat Tikhonov expression}
		\widehat h=(\widehat{\T}^*\widehat{\T}+\lambda\I)^{-1}\widehat{\T}^*\widehat r\,, \, \widehat \beta=(\widehat \B^*\widehat \B)^{-1}\widehat \B^*(\widehat s - \widehat \A \widehat h)\,,
	\end{equation}
	where $\widehat \T=(\I-\widehat \P)\widehat \A$, $\widehat r=(\I-\widehat{\P})\widehat{s}$, and $\widehat \P=\widehat \B(\widehat \B^*\widehat \B)^{-1}\widehat \B^*$ (see Lemma \ref{lem: rates and compacness}\ref{lem: rates and compacness:: expression of P and Phat}).\\
	Let us now show that $\widehat h$ and $\widehat \beta$ can be expressed as in \eqref{eq: expresson of hhat and betahat}. We begin by using arguments similar to those in the proof of  \citet[Proposition 12.32]{wainwright2019high}. Let  $\mathbb{L}_\alpha:=\{\sum_{i=1}^n \alpha_i K(\cdot,Z_i):(\alpha_1,\ldots,\alpha_n)\in\mathbb{R}_n\}$. Since $K(\cdot,Z_i)\in \mathcal{H}$ for each $i=1,\ldots,n$, $\mathbb{L}_\alpha$ is a finite-dimensional linear subspace of $\mathcal{H}$. From \citet[Theorem 2.4-2]{kreyszig1991introductory}, any finite-dimensional linear space is closed. So, $\mathbb{L}_\alpha$ closed. Thus, by the Direct Sum Theorem (see \citet[Theorem 3.3-4]{kreyszig1991introductory}), any $h\in\mathcal{H}$ can be decomposed as $h=h_\alpha+h_\alpha^\perp$, where  $h_\alpha\in\mathbb{L}_\alpha$,  $h_\alpha^\perp\in\mathbb{L}_\alpha^\perp$, and $\mathbb{L}_\alpha^\perp$ denotes the orthogonal complement of $\mathbb{L}_\alpha$ in $\mathcal{H}$. By the reproducing property of the RKHS, see Definition \ref{def: rkhs definition}, for any  $h\in\mathcal{H}$ we have $h(Z_i)=\left<h,K(\cdot,Z_i)\right>_\mathcal{H}$. It follows that 
	\begin{align*}
		h(Z_i)=\left<h,K(\cdot,Z_i)\right>_\mathcal{H}=\left<h_\alpha+h_\alpha^\perp,K(\cdot,Z_i)\right>_\mathcal{H}=&\left<h_\alpha,K(\cdot,Z_i)\right>_\mathcal{H}    +\left<h_\alpha^\perp,K(\cdot,Z_i)\right>_\mathcal{H}\\
		&= \left<h_\alpha,K(\cdot,Z_i)\right>_\mathcal{H}=h_\alpha(Z_i)\,.
	\end{align*}
	Also, 
	\begin{align*}
		\|h\|_\mathcal{H}^2=\|h_\alpha+h_\alpha^\perp\|_\mathcal{H}^2=\|h_\alpha\|_\mathcal{H}^2+\|h_\alpha^\perp\|_\mathcal{H}^2\geq \|h_\alpha\|_\mathcal{H}^2 \,,
	\end{align*}
	with strict inequality if $h\notin \mathbb{L}_\alpha$. From the two  displays above, the solution $(\widehat \beta, \widehat h)$ to the problem in the first line of \eqref{eq: min problem reformulation} must satisfy $\widehat h\in\mathbb{L}_\alpha$. Thus, $\widehat h$ can be expressed as in \eqref{eq: expresson of hhat and betahat} and 
	\begin{align*}
		\int_{ }\big|\operatorname{E}_n&[Y_i - X_i^T\widehat \beta-\widehat h(Z_i)]\exp(\textbf{i}(X_i,W_i)t)\Big|^2 d \mu(t)+\lambda \|\widehat h \|_\mathcal{H}^2\\
		=&\min_{\beta\in\mathbb{R}^p,h_\alpha\in\mathbb{L}_\alpha}\int_{ }\big|\operatorname{E}_n[Y_i - X_i^T \beta- h_\alpha(Z_i)]\exp(\textbf{i}(X_i,W_i)t)\Big|^2 d \mu(t)+\lambda \|h_\alpha \|_\mathcal{H}^2\, .
	\end{align*}
	Now, $\|h_\alpha\|_{\mathcal{H}}^2=\left<h_\alpha,h_\alpha\right>_\mathcal{H}=\left<\sum_i\alpha_i K(\cdot,Z_i),\sum_j\alpha_j K(\cdot,Z_j)\right>_\mathcal{H}$ $=\sum_{i,j}\alpha_i\,\alpha_j\left<K(\cdot,Z_i),K(\cdot,Z_j)\right>_\mathcal{H}$ $=\sum_{i,j}\alpha_i \alpha_j K(Z_i,Z_j)$, where in the last equality we have used the reproducing property of the space $\mathcal{H}$ by the kernel $K$. Using this and Equation \eqref{eq:def_objective_function_Mn_beta_h}, we deduce the following (see the comments below):
	\begin{align}\label{eq: min problem in alpha and beta}
		\min_{\beta\in\mathbb{R}^p,h_\alpha\in\mathbb{L}_\alpha}&\int_{ }\big|\operatorname{E}_n[Y_i - X_i^T \beta- h_\alpha(Z_i)]\exp(\textbf{i}(X_i^T,W_i)t)\Big|^2 d \mu(t)+\lambda \|h_\alpha \|_\mathcal{H}^2\nonumber\\
		=& \min_{\beta\in\mathbb{R}^p,h_\alpha\in\mathbb{L}_\alpha} \frac{1}{n^2}\sum_{1\leq i,j\leq n}[Y_i-X_i^T\beta-h_\alpha(Z_i)]\,\mathcal{F}_\mu((X_i^T,W_i)-(X_j^T,W_j))\,[Y_j-X_j^T\beta-h_\alpha(Z_j)]\nonumber\\
		&\hspace{13.8cm}+\lambda \|h_\alpha\|_\mathcal{H}^2\nonumber\\
		\overset{(a)}{=}& \min_{\beta\in\mathbb{R}^p,\bm{\alpha}\in\mathbb{R}^n} \frac{1}{n^2}[\bm{Y}-\bm{X}\beta - \bm{K}\bm{\alpha}]^T \bm{F} [\bm{Y}-\bm{X}\beta - \bm{K}\bm{\alpha}] + \lambda \bm{\alpha}^T\bm{K} \bm{\alpha}\nonumber \\
		=:& \frac{1}{n^2} \min_{\beta\in\mathbb{R}^p,\bm{\alpha}\in\mathbb{R}^n} f(\bm{\alpha},\beta)\, ,
	\end{align}
	where in equality \textit{(a)} we have used the fact that, for any $h_\alpha\in\mathbb{L}_\alpha$,  $(h_\alpha(Z_1),\ldots,h_\alpha(Z_n))^T=\bm{K}\bm{\alpha}$, with $\bm{\alpha}=(\alpha_1,\ldots,\alpha_n)^T$. Note that, from the two displays above and since $\widehat h\in \mathbb{L}_\alpha$ (as established earlier in this proof), the problem $\min_{\beta\in\mathbb{R}^p\bm{\alpha}\in\mathbb{R}^n}f(\alpha,\beta)$ admits at least one solution. Let us now derive an expression for such solutions. Since the kernel function $K$ is symmetric and positive semidefinite, the Gram matrix $\bm{K}$ is a symmetric positive semidefinite matrix; see \citet[Definition 12.6]{wainwright2019high}. $\bm{F}$ is symmetric, and from Lemma \ref{lem: positive definiteness of F} it is also positive semidefinite.
	The symmetry and positive semidefiniteness of both $\bm{K}$ and $\bm{F}$ imply that $f$ is convex. Accordingly, any solution to the first-order  conditions of the problem $\min_{\beta\in\mathbb{R}^p\bm{\alpha}\in\mathbb{R}^n}f(\alpha,\beta)$ is a global minimizer. To derive such first-order conditions, notice that 
	\begin{align}\label{eq: expression of f(alpha,beta)}
		f(\bm{\alpha},\beta)=& \bm{Y}^T\bm{F}\bm{Y} - 2 \beta^T\bm{X}^T\bm{F}\bm{Y} - 2 \bm{\alpha}^T \bm{K}\bm{F}\bm{Y} + 2 \bm{\alpha}^T \bm{K}\bm{F}\bm{X}\beta\nonumber \\
		&\hspace{5cm}+ \beta^T \bm{X}^T \bm{F}\bm{X}\beta +\bm{\alpha}^T \bm{K}\bm{F}\bm{K}\bm{\alpha}+n^2\lambda \bm{\alpha}^T \bm{K}\bm{\alpha}\, .
	\end{align}
	Thus, denoting by $(\bm{\widehat \alpha},\widehat \beta)$ a solution to the minimization problem in \eqref{eq: min problem in alpha and beta}, the first-order conditions are
	\begin{align}\label{eq: first order coniditions}
		\nabla_{\bm{\alpha}}f(\bm{\widehat \alpha},\widehat \beta)=&-2 \bm{K}\bm{F}\bm{Y} + 2 \bm{K}\bm{F}\bm{X}\widehat\beta+2 \bm{K}\bm{F}\bm{K}\bm{\widehat\alpha}+2 n^{2}\lambda \bm{K}\bm{\widehat\alpha}=0\nonumber \\
		\nabla_\beta f(\bm{\widehat\alpha},\widehat\beta)=& -2 \bm{X}^T\bm{F}\bm{Y} + 2 \bm{X}^T \bm{F}\bm{K}\bm{\widehat\alpha}+ 2 \bm{X}^T \bm{F}\bm{X}\widehat\beta=0\, .
	\end{align}
	From the second equation, we obtain 
	\begin{equation}\label{eq: beta matrix expression}
		\widehat \beta=(\bm{X}^T\bm{F}\bm{X})^{-1}\bm{X}^T\bm{F}(\bm{Y} - \bm{K}\bm{\widehat \alpha})\, .
	\end{equation}
	Note that since $\bm{X}$ is full column rank by assumption, and $\bm{F}$ is invertible by Lemma \ref{lem: positive definiteness of F}, the matrix $\bm{X}^T\bm{F}\bm{X}$ is also invertible. Plugging \eqref{eq: beta matrix expression} into the first equation of \eqref{eq: first order coniditions} and then rearranging terms, yields
	\begin{equation}\label{eq: computation of alphahat}
		[\bm{K}\bm{F}\bm{K} - \bm{K}\bm{F}\bm{X}(\bm{X}^T \bm{F}\bm{X})^{-1}\bm{X}^T \bm{F}\bm{K} + n^2\lambda \bm{K}]\bm{\widehat \alpha}=\bm{K}\bm{F}[\bm{Y} - \bm{X}(\bm{X}^T \bm{F}\bm{X})^{-1}\bm{X}^T \bm{F}\bm{Y}]\, .
	\end{equation}
	Let $\mathcal{N}(\bm{K})^\perp$ denote the orthogonal complement of $\mathcal{N}(\bm{K}):=\{\bm{\gamma}\in\mathbb{R}^n:\bm{K}\bm{\gamma}=0\}$, the null space of $\bm{K}$. We next show that there exists a unique $\bm{\alpha}\in\mathcal{N}(\bm{K})^\perp$ solving Equation \eqref{eq: computation of alphahat}. Let $\bm{\alpha}\in\mathbb{R}^n$. By the Direct Sum Theorem (\citet[Theorem 3.3-4]{kreyszig1991introductory}), there exists a unique $\bm{\alpha}_\perp\in\mathcal{N}(\bm{K})^\perp$ and a unique  $\bm{\alpha}_N\in\mathcal{N}(\bm{K})$ such that $\bm{\alpha}=\bm{\alpha}_\perp+\bm{\alpha}_N$. Since $\bm{K}\bm{\alpha}=\bm{K}\bm{\alpha}_\perp$, it follows from \eqref{eq: expression of f(alpha,beta)} that $f(\bm{\alpha},\beta)=f(\bm{\alpha}_\perp,\beta)$. Consequently,
	\begin{equation}\label{eq: min problem over N(K) perp}
		\min_{\bm{\alpha}\in\mathbb{R}^n,\beta\in\mathbb{R}^p}f(\bm{\alpha},\beta)=\min_{\bm{\alpha}\in\mathcal{N}(\bm{K})^\perp,\beta\in\mathbb{R}^p}f(\bm{\alpha},\beta)\, .
	\end{equation}
	To show that there exists a unique $\bm{\alpha}\in\mathcal{N}(\bm{K})^\perp$ solving Equation \eqref{eq: computation of alphahat}, we show that the problem on the right-hand side (RHS) of the above display admits a unique solution. Clearly, since the problem on the left-hand side of \eqref{eq: min problem over N(K) perp} admits a solution, as established earlier in this proof, the problem on the RHS also admits a solution. To show uniqueness, we establish the strict convexity  of  $f$ over $\mathcal{N}(\bm{K})^\perp\times\mathbb{R}^p$,  that is for any $(\bm{\alpha}_1,\beta_1),(\bm{\alpha}_2,\beta_2)\in \mathcal{N}(\bm{K})^\perp\times\mathbb{R}^p$ with $(\bm{\alpha}_1,\beta_1)\neq (\bm{\alpha}_2,\beta_2)$ and any $\gamma\in(0,1)$, 
	\begin{equation}\label{eq: strict convexity}
		f(\gamma \bm{\alpha}_1+(1-\gamma)\bm{\alpha}_2, \gamma \beta_1+(1-\gamma)\beta_2)<\gamma f(\bm{\alpha}_1,\beta_1)+(1-\gamma)f(\bm{\alpha}_2,\beta_2)\, .
	\end{equation}
	Since $\bm{F}$ and $\bm{K}$ are symmetric positive semidefinite matrices, they admit symmetric positive semidefinite square roots $\bm{F}^{1/2}$ and $\bm{K}^{1/2}$  satisfying $\bm{F}=\bm{F}^{1/2} \bm{F}^{1/2}$ and $\bm{K}=\bm{K}^{1/2} \bm{K}^{1/2}$. It follows that $f(\bm{\alpha},\beta)=\|\bm{F}^{1/2}(\bm{Y}-\bm{X}\beta-\bm{K}\bm{\alpha})\|^2+\lambda \|\bm{K}^{1/2}\bm{\alpha}\|^2$, and
	\begin{align*}
		f(\gamma \bm{\alpha}_1+(1-\gamma)\bm{\alpha}_2, \gamma \beta_1+(1-\gamma)\beta_2)=&\|\gamma\bm{F}^{1/2}(\bm{Y}-\bm{X}\beta_1-\bm{K}\bm{\alpha}_1) +(1-\gamma)\bm{F}^{1/2}(\bm{Y}-\bm{X}\beta_2-\bm{K}\bm{\alpha}_2) \|^2\\
		&+\lambda \|\gamma\bm{K}^{1/2}\bm{\alpha}_1 + (1-\gamma)\bm{K}^{1/2}\bm{\alpha}_2\|^2\,.
	\end{align*}
	Let us consider first the case where $\bm{\alpha}_1\neq \bm{\alpha}_2$.  Since $\bm{K}$ is injective on $\mathcal{N}(\bm{K})^\perp$, so is $\bm{K}^{1/2}$, thus $\bm{K}^{1/2}\bm{\alpha}_1\neq \bm{K}^{1/2}\bm{\alpha}_2$. By the strict convexity of the  squared norm, for any $a_1,a_2\in\mathbb{R}^n$, $\|\gamma a_1+(1-\gamma)a_2\|^2\leq \gamma\|a_1\|^2+(1-\gamma)\|a_2\|^2$,  with the inequality being strict whenever $a_1\neq a_2$. Thus, $\|\gamma\bm{K}^{1/2}\bm{\alpha}_1 + (1-\gamma)\bm{K}^{1/2}\bm{\alpha}_2\|^2< \gamma\|\bm{K}^{1/2}\bm{\alpha}_1\|^2 + (1-\gamma)\|\bm{K}^{1/2}\bm{\alpha}_2\|^2$ and $\|\gamma\bm{F}^{1/2}(\bm{Y}-\bm{X}\beta_1-\bm{K}\bm{\alpha}_1) +(1-\gamma)\bm{F}^{1/2}(\bm{Y}-\bm{X}\beta_2-\bm{K}\bm{\alpha}_2) \|^2$ $\leq \gamma \|\bm{F}^{1/2}(\bm{Y}-\bm{X}\beta_1-\bm{K}\bm{\alpha}_1) \|^2$ $+(1-\gamma)\|\bm{F}^{1/2}(\bm{Y}-\bm{X}\beta_2-\bm{K}\bm{\alpha}_2) \|^2$. Accordingly, \eqref{eq: strict convexity} holds. Let us now consider the case where $\bm{\alpha}_1=\bm{\alpha}_2$, but $\beta_1\neq \beta_2$. Since $\bm{X}$ is full rank, $\bm{X}\beta_1\neq \bm{X}\beta_2$. By Lemma \ref{lem: positive definiteness of F}, $\bm{F}$ is invertible, hence  $\bm{F}^{1/2}$ is also invertible. It follows that $\bm{F}^{1/2}(\bm{Y}-\bm{X}\beta_1-\bm{K}\bm{\alpha}_1)\neq \bm{F}^{1/2}(\bm{Y}-\bm{X}\beta_2-\bm{K}\bm{\alpha}_2)$. Thus, by the strict convexity of the squared norm, $\|\gamma\bm{F}^{1/2}(\bm{Y}-\bm{X}\beta_1-\bm{K}\bm{\alpha}_1) +(1-\gamma)\bm{F}^{1/2}(\bm{Y}-\bm{X}\beta_2-\bm{K}\bm{\alpha}_2) \|^2$ $< \gamma \|\bm{F}^{1/2}(\bm{Y}-\bm{X}\beta_1-\bm{K}\bm{\alpha}_1) \|^2$ $+(1-\gamma)\|\bm{F}^{1/2}(\bm{Y}-\bm{X}\beta_2-\bm{K}\bm{\alpha}_2) \|^2$ and $\|\gamma\bm{K}^{1/2}\bm{\alpha}_1 + (1-\gamma)\bm{K}^{1/2}\bm{\alpha}_2\|^2= \gamma\|\bm{K}^{1/2}\bm{\alpha}_1\|^2 + (1-\gamma)\|\bm{K}^{1/2}\bm{\alpha}_2\|^2$. Consequently, \eqref{eq: strict convexity} holds. We have therefore established that \eqref{eq: strict convexity} holds for any $(\bm{\alpha}_1,\beta_1)\neq (\bm{\alpha}_2,\beta_2)$ and $\gamma\in(0,1)$. \\
	Since $f$ is strictly convex over $\mathcal{N}(\bm{K})^\perp\times\mathbb{R}^p$, the problem $\min_{\bm{\alpha}\in\mathcal{N}(\bm{K})^\perp,\beta\in\mathbb{R}^p}f(\bm{\alpha},\beta)$ admits a unique solution. We next use this to conclude that (i) Equation \eqref{eq: computation of alphahat} admits a unique solution over $\mathcal{N}(\bm{K})^\perp$, and (ii) this solution has the smallest norm among all the solutions of \eqref{eq: computation of alphahat} over $\mathbb{R}^n$. Let $\bm{\widehat \alpha}$ be a solution to \eqref{eq: computation of alphahat}, and let us define
	\begin{equation*}
		\widehat \beta_{\bm{\alpha}}=(\bm{X}^T\bm{F}\bm{X})^{-1}\bm{X}^T\bm{F}(\bm{Y} - \bm{K}\bm{\alpha})\, .
	\end{equation*}
	By the Direct Sum Theorem, there exists a unique $\bm{\widehat \alpha}_\perp\in\mathcal{N}(\bm{K})^\perp$ and a unique $\bm{\widehat \alpha}_N\in\mathcal{N}(\bm{K})$ such that $\bm{\widehat \alpha}=\bm{\widehat \alpha}_\perp+\bm{\widehat \alpha}_N$. Since $\bm{K}\bm{\widehat \alpha}=\bm{K}\bm{\widehat \alpha}_\perp$, $(\bm{\widehat \alpha}_\perp,\widehat \beta_{\bm{\widehat \alpha}_\perp})$ satisfies  \eqref{eq: computation of alphahat} and \eqref{eq: beta matrix expression}, and hence the first-order conditions in \eqref{eq: first order coniditions}. Thus, $(\bm{\widehat \alpha}_\perp,\widehat \beta_{\bm{\widehat \alpha}_\perp})$ is a global minimizer of $f$ and also solves the problem on the RHS of \eqref{eq: min problem over N(K) perp}. Since the solution to such a problem is unique, we find that there exists a unique element of $\mathcal{N}(\bm{K})^\perp$ satisfying \eqref{eq: computation of alphahat}. Moreover, denoting such a unique element by  $\bm{ \widehat \alpha}_\perp$, any solution $\bm{\widetilde \alpha}$ to \eqref{eq: computation of alphahat} can be written in the form $\bm{\widetilde \alpha}=\bm{\widehat \alpha}_\perp+\bm{\widetilde \alpha}_N$,  with $\bm{\widetilde \alpha}_N\in\mathcal{N}(\bm{K})$ and $\|\bm{\widetilde \alpha}\|^2=\|\bm{\widehat \alpha}_\perp\|^2+\|\bm{\widetilde \alpha}_N\|^2$ $\geq \|\bm{\widehat \alpha}_\perp\|^2$. Thus,  $\bm{\widehat \alpha}_\perp$ is also the minimum norm solution to \eqref{eq: computation of alphahat}. This concludes the proof. \\
	
	We finally note that all solutions to \eqref{eq: computation of alphahat} yield the same function $\widehat h$ in \eqref{eq: expresson of hhat and betahat}. In fact, if $\bm{\alpha}_1=(\alpha_{1,1},\ldots,\alpha_{1,n})^T$ and $\bm{\alpha}_2=(\alpha_{2,1},\ldots,\alpha_{2,n})^T$ are two solutions to \eqref{eq: computation of alphahat}, they can be written as $\bm{\alpha}_1=\bm{\widehat \alpha}_\perp + \bm{\alpha}_{1 N}$ and  $\bm{\alpha}_2=\bm{\widehat \alpha}_\perp + \bm{\alpha}_{2 N}$, with $\bm{\widehat \alpha}_\perp$ denoting the unique solution to \eqref{eq: computation of alphahat} over $\mathcal{N}(\bm{K})^\perp$ and $\bm{\alpha}_{1 N}, \bm{\alpha}_{2 N}\in \mathcal{N}(\bm{K})$. Hence,  $\bm{\alpha}_1-\bm{\alpha}_2=\bm{\alpha}_{1 N} - \bm{\alpha}_{2 N}$ and $\|\sum_{i=1}^n \alpha_{1,i} K(\cdot,Z_i) - \sum_{i=1}^n \alpha_{2,i} K(\cdot,Z_i)\|^2_{\mathcal{H}}$ $=(\bm{\alpha}_1-\bm{\alpha}_2)^T\bm{K} (\bm{\alpha}_1-\bm{\alpha}_2)$ $=(\bm{\alpha}_{1 N}-\bm{\alpha}_{2 N})^T\bm{K} (\bm{\alpha}_{1 N}-\bm{\alpha}_{2 N})$ $=0$. Thus, $\widehat h$ in \eqref{eq: expresson of hhat and betahat} does not depend on which solution of \eqref{eq: computation of alphahat} is used. 
	
	\begin{flushright}
		$\blacksquare$
	\end{flushright}
	
	\noindent \textbf{Proof of Proposition \ref{prop: ifr}}. We start with the following decomposition:
	\begin{align}\label{eq: decomposition of the empirical process}
		\sqrt{n}(\widehat \theta - \theta_0)=&\sqrt{n}\,\{\operatorname{E}_n[\widehat h'(Z_i)]-\operatorname{E}[h_0'(Z_i)]\,\}\nonumber\\
		=&\sqrt{n}(\E_n - \E)[\widehat h'(Z_i) - h_0'(Z_i)]\nonumber\\
		&+\sqrt{n}(\E_n-\E)[h'_0(Z_i)]\nonumber\\
		&+ \sqrt{n}\E [\widehat h'(Z_i) - h_0'(Z_i)]\, , 
	\end{align}
	where $\operatorname{E}[\widehat h(Z_i)]:=\int_{ }\widehat h(z) P(dz)$. 
	From Assumption \ref{as: belonging condition}(i)(ii) and Lemma \ref{lem: ase}\ref{lem: ase:: sample}, it follows that 
	\begin{equation*}
		\sqrt{n}(\E_n - \E)[\widehat h'(Z_i) - h_0'(Z_i)]=o_P(1)\, .  
	\end{equation*}
	Thus, to prove the desired result, we need to obtain an influence function representation for the third term on the RHS of \eqref{eq: decomposition of the empirical process}. We have (see the comments below):
	\begin{align}\label{eq: integration by parts}
		\sqrt{n}\E [\widehat h'(Z) - h_0'(Z)]=&\int_{\mathcal{Z}} [\widehat h'(z) - h_0'(z)] f_Z(z) dz\nonumber\\
		\overset{(a)}{=}& \left\{\lim_{z\rightarrow \infty}[\widehat h(z) - h_0(z) ] f_Z(z) - \lim_{z\rightarrow -\infty}[\widehat h(z) - h_0(z) ] f_Z(z) \right\} \nonumber\\
		&- \sqrt{n}\int_{ \mathcal{Z}} [\widehat h(z) - h_0(z)] f_Z'(z) dz \nonumber\\
		\overset{(b)}{=}& -\sqrt{n}\int_{\mathcal{Z}} [\widehat h(z) - h_0(z)] f_Z'(z) dz\, .
	\end{align}
	Equality $(a)$ follows from an integration by parts, while  equality $(b)$ is a direct consequence of Assumption \ref{as: belonging condition}(iii). Let us now deal with the RHS of the above display. By construction, $\widehat h \in\mathcal{H}$, and by Assumption \ref{as: rkhs}(ii), $h_0, f'_Z\in\mathcal{H}$. Thus, by Lemma \ref{lem: RKHS results}\ref{lem: RKHS results: H eig space}\ref{lem: RKHS results: equality of H and H eig space}, there exists square-summable sequences $(\widehat \gamma_j)_j\,,\,(\gamma_j)_j\,,\,(\pi_j)_j$ such that
	\begin{align}\label{eq: expression of hhat h0 and f' in terms of Heig}
		\widehat h(z)=\sum_j \widehat \gamma_j \widetilde \phi_j(z)\,,\, h_0(z)=\sum_j\gamma_j \widetilde \phi_j(z)\,,\text{ and}\, f_Z'(z)=\sum_j \pi_j \widetilde{\phi}_j(z)\,,
	\end{align}
	where $(\widetilde \phi_j)_j$ is a sequence of functions defined in Lemma \ref{lem: RKHS results}. 
	We therefore have (see the comments below):
	\begin{align}\label{eq: from L2 inner product to ell 2 inner product}
		\sqrt{n}\int_{\mathcal{Z}} [\widehat h(z) - h_0(z)] f_Z'(z) dz=& \sqrt{n} \int_{\mathcal{Z}} \sum_j (\widehat \gamma_j - \gamma_j) \widetilde \phi_j (z) \sum_j \pi_j \widetilde{\phi}_j (z) dz \nonumber \\
		\overset{(a)}{=}&  \sqrt{n} \lim_{J\rightarrow \infty} \int_{\mathcal{Z}} \sum_{j=1}^J (\widehat \gamma_j - \gamma_j) \widetilde{\phi}_j(z) \sum_{k=1}^J \pi_k \widetilde{\phi}_k(z) dz \nonumber \\
		\overset{(b)}{=}& \sqrt{n} \lim_{J\rightarrow \infty} \sum_{j=1}^J (\widehat \gamma_j - \gamma_j) \pi_j \int_{\mathcal{Z}} |\widetilde \phi_j(z)|^2 dz \nonumber \\
		\overset{(c)}{=}& \sqrt{n} \lim_{J\rightarrow \infty} \sum_{j=1}^J (\widehat \gamma_j - \gamma_j) \pi_j \eta_j\nonumber  \\
		\overset{(d)}{=}& \sqrt{n} \left<(\widehat \gamma_j - \gamma_j)_j , (\pi_j\eta_j)_j\right>_{\ell^2}\,.
	\end{align}
	Here $\ell^2$ denotes the space of square summable sequences, and for any $(a_j)_j,(b_j)_j\in\ell^2$, $\left\langle(a_j)_j,(b_j)_j\right\rangle_{\ell^2}=\sum_{j=1}^\infty a_j b_j$. 
	To obtain equality $(a)$, note that by the Cauchy-Schwarz inequality, $|\sum_{j=1}^J \pi_j \widetilde{\phi}_j(z)|^2$ $\leq \sum_{j=1}^J \pi_j^2$ $ \sum_{j=1}^J \widetilde{\phi}_j(z)^2$ $\leq \sum_{j=1}^\infty \pi_j^2$ $ \sum_{j=1}^\infty \widetilde{\phi}_j(z)^2$ , where $\sum_j \pi_j^2=\|f_Z'\|_{\mathcal{H}}^2$ by \eqref{eq: expression of hhat h0 and f' in terms of Heig} and Lemma \ref{lem: RKHS results}\ref{lem: RKHS results: H eig space}\ref{lem: RKHS results: equality of H and H eig space}. Similarly, $|\sum_{j=1}^J (\widehat \gamma_j - \gamma_j) \widetilde{\phi}_j(z)|^2\leq $ $\sum_{j=1}^\infty (\widehat \gamma_j - \gamma_j)^2 \sum_{j=1}^\infty \widetilde \phi_j(z)^2$, where $\sum_{j=1}^\infty (\widehat \gamma_j - \gamma_j)^2=\|\widehat h - h_0\|_{\mathcal{H}}^2$. From Lemma \ref{lem: RKHS results}\ref{lem: RKHS results: upper bound for l2 sum}, $\sum_{j=1}^\infty \widetilde{\phi}^2_j(z)\leq K(z,z)$. Gathering the above result, we have $|\sum_{j=1}^J (\widehat \gamma_j - \gamma_j)\widetilde \phi_j(z)$ $ \sum_{j=1}^J \pi_j \widetilde \phi_j(z)|\leq \|\widehat h - h_0\|_{\mathcal{H}} \|f_Z'\|_{\mathcal{H}} K(z,z)$, where $\int_{\mathcal{Z}} K(z,z) dz$ $ <\infty$ by the continuity of $K$ and the boundedness of $\mathcal{Z}$, see Assumptions \ref{as: iid and T}(iii) and \ref{as: rkhs}(i). Thus, we can apply the Lebesgue Dominated Convergence Theorem to get  equality $(a)$ of the above display. To show equality $(b)$, note that, from Lemma \ref{lem: RKHS results}\ref{lem: RKHS results: svd of K}\ref{lem: RKHS results: H eig space},   $\widetilde \phi_j=\sqrt{\eta_j} \phi_j$, where $(\eta_j)_j$ is a sequence of positive scalars and $\int_{\mathcal{Z}} \phi_j(z) \phi_k(z)=0$ for any $j\neq k$.
	Thus, equality $(b)$ follows. Equality $(c)$ follows from $\widetilde \phi_j=\sqrt{\eta_j} \phi_j$ and $\int_{\mathcal{Z}}|\phi_j(z)|^2 dz=1$, see Lemma \ref{lem: RKHS results}\ref{lem: RKHS results: svd of K}\ref{lem: RKHS results: H eig space}. 
	Finally, equality $(d)$ follows from the definition of the inner product on $\ell^2$. \\
	Next, let us show that $(\pi_j \eta_j)_j$ are the coefficients of $\int_{\mathcal{Z}} K(\cdot,z) f'_Z(z) dz$ in the space $\mathcal{H}_{EIG}$ from Lemma \ref{lem: RKHS results}\ref{lem: RKHS results: H eig space}. That is, $\int_{\mathcal{Z}} K(\cdot,z) f'_Z(z) dz=\sum_j \pi_j \eta_j \widetilde \phi_j(\cdot)$. First, by Mercer's Theorem (see \citet[Theorem 12.20]{wainwright2019high}),  $K(z,u)=\sum_j \eta_j \phi_j(z)\phi_j(u)=\sum_j \widetilde \phi_j(z) \widetilde \phi_j(u)$. Since, as noted earlier, $f_Z'(z)=\sum_j \pi_j \widetilde \phi_j(z)$, we obtain (see the comment below):
	\begin{align*}
		\int_{\mathcal{Z}} K(z,u) f_Z'(u) du =& \int_{\mathcal{Z}} \sum_j \widetilde \phi_j(z) \widetilde \phi_j(u) \sum_k \pi_k \widetilde \phi_k(u) du\\
		\overset{(a)}{=}& \lim_{J\rightarrow \infty} \int_{\mathcal{Z}} \sum_{j=1}^J \widetilde \phi_j(z) \widetilde \phi_j(u) \sum_{k=1}^J \pi_k \widetilde \phi_k(u) du\\
		\overset{(b)}{=}& \sum_j \widetilde \phi_j(z) \pi_j \int_{\mathcal{Z}} |\widetilde \phi_j(u)|^2 du \\
		\overset{(c)}{=}& \sum_j \widetilde \phi_j(z) \pi_j \eta_j\, .
	\end{align*}
	Equality $(a)$ follows from combining the bounds $|\sum_{j=1}^J \widetilde{\phi}_j(u) \widetilde{\phi}_j(z)|^2\leq \sum_j \widetilde{\phi}_j(u)^2 \sum_j \widetilde{\phi}_j(z)^2 \leq K(z,z) K(u,u)$, $|\sum_{k=1}^J \pi_k \widetilde{\phi}_k(u)|^2\leq \|f_Z'\|_{\mathcal{H}}^2 K(u,u)$, together with $\int_{\mathcal{Z}} K(u,u) du<\infty$ and the Lebesgue Dominated Convergence Theorem, as argued earlier in this proof. Equality $(b)$ follows from the orthogonality of $(\widetilde \phi_j)_j$, that is, $\int_{\mathcal{Z}}\widetilde \phi_j(z) \widetilde \phi_k(z) dz=0$ for $j\neq k$, as noted earlier. Similarly, equality $(c)$ follows from $\int_{\mathcal{Z}}|\widetilde \phi_j(z)|^2dz=\eta_j$, as noted earlier.  
	So, by Lemma \ref{lem: RKHS results}\ref{lem: RKHS results: H eig space},  $\int_{\mathcal{Z}} K(\cdot,z) f_Z'(z) dz\in\mathcal{H}_{EIG}$ as long as the coefficients $(\pi_j \eta_j)_j$ are square summable. To show this, note first that $\sum_j (\pi_j \eta_j)^2\leq [\sup_j \eta_j]^2 \sum_j \pi_j^2 =[\sup_j \eta_j]^2 \|f_Z'\|_{\mathcal{H}}^2$, where we have used  $\sum_j \pi_j^2=\|f_Z'\|_{\mathcal{H}}^2$, as  established earlier in this proof. Next, let us denote by  $L^2(\mathcal{Z})$ the space of square integrable functions defined on $\mathcal{Z}$. Then, the operator $\varphi\mapsto \int_{\mathcal{Z}} K(\cdot,z) \varphi(z) dz$ from $L^2(\mathcal{Z})$ to $L^2(\mathcal{Z})$ is self-adjoint and positive. 
	To see that it is self-adjoint, note that for any $\varphi,\psi\in L^2(\mathcal{Z})$,  
	\begin{align*}
		\left<\int_{\mathcal{Z}} K(\cdot,u)\varphi(u) du,\psi\right>_{L^2(\mathcal{Z})}
		& =
		\int_{\mathcal{Z}} \left[\int_{\mathcal{Z}} K(z,u)\varphi(u) du\right] \overline{\psi(z)} dz \\
		& =
		\int_{\mathcal{Z}} \varphi(u) \overline{\int_{\mathcal{Z}} K(u,z) \psi(z) }dz du \\
		& =
		\left< \varphi, \int_{\mathcal{Z}} K(\cdot,z)\psi(z) dz \right>_{L^2(\mathcal{Z})}.
	\end{align*}
	Thus the operator $\varphi\mapsto \int_{\mathcal{Z}} K(\cdot,z)\varphi(z)dz$ is self-adjoint. To see that such an operator is also positive, note that, since $K(z,u)=\sum_j \widetilde \phi_j(z) \widetilde\phi_j(u)$, as established earlier in this proof,  $\left<\int_{\mathcal{Z}} K(\cdot,u)\varphi(u) du,\varphi\right>_{L^2(\mathcal{Z})}=$ $\int_{\mathcal{Z}} [\int_{\mathcal{Z}} K(z,u)\varphi(u) du] \overline{\varphi(z)} dz=$ $\sum_j \int_{\mathcal{Z}} \widetilde \phi_j(u) \varphi(u) du \overline{\int_{\mathcal{Z}} \widetilde \phi_j(z) \varphi(z) dz }\geq 0 $. Thus, the operator $\varphi\mapsto \int_{\mathcal{Z}} K(\cdot,z)\varphi(z)dz$ is also positive. It follows that (see the comments below):
	\begin{align*}
		\sup_{\varphi\in L^2(\mathcal{Z}),\|\varphi\|_{L^2(\mathcal{Z})}=1 }\Big\|\int_{\mathcal{Z}} K(\cdot,u) \varphi(u) du\Big \|_{L^2(\mathcal{Z})}
		\overset{(a)}{=} & \sup_{\varphi\in L^2(\mathcal{Z}), \|\varphi\|_{L^2(\mathcal{Z})}=1} \left<\int_{\mathcal{Z}} K(\cdot,u) \varphi(u) du, \varphi\right>_{L^2(\mathcal{Z})} \\
		\overset{(b)}{\geq} &\sup_j \left<\int_{\mathcal{Z}} K(\cdot,u)\phi_j(u) du,  \phi_j\right>_{L^2(\mathcal{Z})}\\
		\overset{(c)}{=}&\sup_j \eta_j \left<\phi_j, \phi_j \right>_{L^2(\mathcal{Z})} \\
		\overset{(d)}{=}&\sup_j \eta_j  \, .
	\end{align*}
	Equality $(a)$ is a direct consequence of \citet[Theorem 15.9]{kress1999linear} and the fact the operator $\varphi\mapsto \int_{\mathcal{Z}} K(\cdot,z)\varphi(z)dz$ is positive and self adjoint. Inequality $(b)$ follows from $\int_{\mathcal{Z}}|\phi_j(z)|^2dz=1$, as noted earlier. Equality $(c)$ is a direct consequence of $\int_{\mathcal{Z}}K(\cdot,u)\phi_j(u)du=\eta_j \phi(\cdot)$ from Lemma \ref{lem: RKHS results}\ref{lem: RKHS results: svd of K}. Finally, equality $(d)$ follows from $\left<\phi_j,\phi_j\right>_{L^2(\mathcal{Z})}=\int_{\mathcal{Z}}|\phi_j(z)|^2 ds=1$. 
	Gathering these results shows that $\sum_j (\pi_j \eta_j)^2 < \infty$. Hence, $\int_{\mathcal{Z}} K(\cdot,z) f_Z'(z) dz=\sum_j \pi_j \eta_j \widetilde{\phi}_j(\cdot)\in \mathcal{H}_{EIG}$. \\
	Now, since $\int_{\mathcal{Z}} K(\cdot,z) f_Z'(z) dz=\sum_j \pi_j \eta_j \widetilde{\phi}_j(\cdot)\in \mathcal{H}_{EIG}$, using Lemma \ref{lem: RKHS results}\ref{lem: RKHS results: H eig space}\ref{lem: RKHS results: equality of H and H eig space}, we obtain
	\begin{align}\label{eq: from l2 inner product to H inner product}
		\sqrt{n}\left<(\widehat \gamma_j - \gamma_j), (\pi_j\eta_j)_j\right>_{\ell^2}=& \sqrt{n} \left<\widehat h - h_0, \int_{\mathcal{Z}} K(\cdot,z) f_Z'(z) dz \right>_{EIG}\nonumber\\
		=& \sqrt{n} \left< \widehat h - h_0, \int_{\mathcal{Z}} K(\cdot,z) f_Z'(z) dz \right>_{\mathcal{H}}\, .
	\end{align}
	To derive the influence function representation of the above inner product,  we first reformulate  $\widehat h$ in terms of the operators $\widehat{\A}$ and $\widehat{\B}$, and the estimator $\widehat s$ from Equation \eqref{eq: Ahat and shat}.
	As already noted in Equation \eqref{eq: hhat Tikhonov expression} (see the proof of Proposition \ref{prop:computation_solution_empirical_program}), $\widehat h$ can be expressed as  
	\begin{equation*}
		\widehat h=(\widehat{\T}^*\widehat{\T}+\lambda\I)^{-1}\widehat{\T}^*\widehat r\,,
	\end{equation*}
	where $\widehat \T=(\I-\widehat \P)\widehat \A$, $\widehat r=(\I-\widehat{\P})\widehat{s}$, and $\widehat \P=\widehat \B(\widehat \B^*\widehat \B)^{-1}\widehat \B^*$. Let 
	\begin{equation*}
		g:=\int_{\mathcal{Z}}K(\cdot,z) f'_Z(z) dz\, .
	\end{equation*}
	From \cite{babii2025unobservables},
	\begin{align}\label{eq: decomposition of hhat - h0}
		\widehat h - h_0=\sum_{j=1}^4\Theta_j + h_\lambda - h_0\,,
	\end{align}
	where 
	\begin{align*}
		\Theta_1:=&(\T^*\T+\lambda\I)^{-1}\T^*(\widehat r - \widehat \T h_0)\\
		\Theta_2:=& (\T^*\T+\lambda\I)^{-1}(\widehat \T^*-\T^*)(\widehat r - \widehat \T h_0)\\
		\Theta_3:=&[(\widehat \T^*\widehat \T+\lambda\I)^{-1}-(\T^*\T+\lambda\I)^{-1}]\widehat \T^*(\widehat r - \widehat \T h_0)\\
		\Theta_4:=& (\widehat \T^*\widehat \T+\lambda\I)^{-1} \widehat \T^*\widehat \T h_0 - h_\lambda\\
		h_\lambda:=& (\widehat \T^*\widehat \T+\lambda\I)^{-1} \T^*\T h_0\, .
	\end{align*}
	We now use the above decomposition to obtain an expansion of the RHS of \eqref{eq: from l2 inner product to H inner product}. 
	Let us show that $\sqrt{n}\left<\Theta_j,g\right>_{\mathcal{H}}=o_P(1)$ for $j=2,3,4$. From Lemma \ref{lem: bounds for decomposition of hhat-h0 for asynorm}\ref{lem: bounds for decomposition of hhat-h0 for asynorm:: Theta 2}, 
	\begin{align*}
		\sqrt{n}\left<\Theta_2,g\right>_{\mathcal{H}}=O_P\Big(\sqrt{n}\|\widehat \T-\T\|_{op}\,\|\widehat r - \widehat \T h_0\|_\mu\,\|(\lambda\I+\T^*\T)^{-1}g\|_\mathcal{H}\Big)\, .  
	\end{align*}
	Using Lemma \ref{lem: rates and compacness}\ref{lem: rates and compacness:: That rate}\ref{lem: rates and compacness:: (I - Phat) shat - That h rate}, we have  $\|\widehat \T - \T\|_{op}=O_P(n^{-1/2})$ and $\|\widehat r - \widehat \T h_0\|_\mu=O_P(n^{-1/2})$. Since $g\in\mathcal{R}(\T^*\T)$ by Assumption \ref{as: source conditions}, and $\T$ is a compact operator by Lemma \ref{lem: rates and compacness}\ref{lem: rates and compacness:: compactness of A and T}, we can apply 
	Lemma \ref{lem: Bounds on Compact Operator}\ref{lem: Bounds on Compact Operator: iv} with $\gamma=2$ to obtain $\|(\lambda\I+\T^*\T)^{-1}g\|_\mathcal{H}=O(1)$. Gathering these results shows that the RHS of the above display is $o_P(1)$. Thus, 
	\begin{equation}\label{eq: negligibility of Theta2 g}
		\sqrt{n}\left<\Theta_2,g\right>_{\mathcal{H}}=o_P(1)\, .
	\end{equation}
	Let us now show the negligibility of $\sqrt{n}\left<\Theta_3,g\right>_\mathcal{H}$. From Lemma \ref{lem: bounds for decomposition of hhat-h0 for asynorm}\ref{lem: bounds for decomposition of hhat-h0 for asynorm:: Theta 3}, 
	\begin{align*}
		\sqrt{n}\left<\Theta_3,g\right>_\mathcal{H}= O_P\Big( \sqrt{n}\|\widehat r - \widehat \T h_0\|_\mu\cdot\|\widehat \T-\T\|_{op}\cdot&\\
		&\Big[ \|(\widehat \T^* \widehat \T + \lambda \I)^{-1}\widehat \T^*\|_{op}\,\|\T( \T^*  \T + \lambda \I)^{-1}g\|_\mu\\
		&+ \|\widehat \T(\widehat \T^* \widehat \T + \lambda \I)^{-1}\widehat \T^*\|_{op}\,\|( \T^*  \T + \lambda \I)^{-1}g\|_\mathcal{H}\Big]\Big)\, .  
	\end{align*}
	As already noted above, $\|\widehat r - \widehat \T h_0\|_\mu=O_P(n^{-1/2})$ and $\|\widehat \T - \T\|_{op}=O_P(n^{-1/2})$. Since $\T$ and $\widehat \T$ are compact operators by Lemma \ref{lem: rates and compacness}\ref{lem: rates and compacness:: compactness of A and T}\ref{lem: rates and compacness:: compactness of That}, and $h_0,g\in\mathcal{R}(\T^*\T)$ by Assumption \ref{as: source conditions},  we can apply Lemma \ref{lem: Bounds on Compact Operator}\ref{lem: Bounds on Compact Operator: i}\ref{lem: Bounds on Compact Operator: iii}\ref{lem: Bounds on Compact Operator: iv}\ref{lem: Bounds on Compact Operator: v} with $\gamma=2$ to obtain $\|\widehat \T(\widehat \T^* \widehat \T + \lambda \I)^{-1}\widehat \T^*\|_{op}=O_P(1)$, $\|(\widehat \T^* \widehat \T + \lambda \I)^{-1}\widehat \T^*\|_{op}=O_P(\lambda^{-1/2})$, $\|( \T^*  \T + \lambda \I)^{-1}g\|_\mathcal{H}=O_P(1)$, and $\|\T( \T^*  \T + \lambda \I)^{-1}g\|_\mu=O(1)$. Gathering these results and using the fact that $n \lambda \rightarrow \infty$, we obtain 
	\begin{align}\label{eq: negligibility of Theta3 g}
		\sqrt{n}\left<\Theta_3,g\right>_\mathcal{H}= O_P(n^{-1/2}[\lambda^{-1/2}+1])=o_P(1)\, .
	\end{align}
	Let us now show the negligibility of $\sqrt{n}\left<\Theta_4,g\right>_\mathcal{H}$. From Assumption \ref{as: source conditions}, we have $g\in\mathcal{R}(\T^*\T)$. Hence, we can apply Lemma \ref{lem: bounds for decomposition of hhat-h0 for asynorm}\ref{lem: bounds for decomposition of hhat-h0 for asynorm:: Theta 4} to obtain 
	\begin{align}\label{eq: first bound for Theta4 g}
		\sqrt{n}\left<\Theta_4, g\right>_\mathcal{H}=O_P\Big(&\sqrt{n}\|(\widehat \T^* \widehat \T + \lambda \I)^{-1}\widehat \T^* \widehat \T h_0-h_\lambda\|_\mathcal{H}\,\|\widehat \T - \T\|_{op}\nonumber\\
		&+\sqrt{n}\|\widehat \T [(\widehat \T^* \widehat \T + \lambda \I)^{-1}\widehat \T^* \widehat \T h_0-h_\lambda]_\mu\|\Big)\,,
	\end{align}
	where
	\begin{align}\label{eq: 1st term of bound of Theta4 g}
		\|(\widehat \T^* \widehat \T + \lambda \I)^{-1}\widehat \T^* \widehat \T h_0 - h_\lambda\|_\mathcal{H}\leq \lambda \cdot\|\widehat \T-\T\|_{op}\cdot &\|(\T^*\T+\lambda\I)^{-1}h_0\|_\mathcal{H}\cdot\nonumber\\
		&\Big[\|(\widehat \T^* \widehat \T + \lambda \I)^{-1}\widehat \T^*\|_{op}+\|(\widehat \T^* \widehat \T + \lambda \I)^{-1}\|_{op} \|\T\|_{op}\Big]
	\end{align}
	and 
	\begin{align}\label{eq: 2nd term of bound of Theta4 g}
		\|\widehat \T [(\widehat \T^* \widehat \T + \lambda \I)^{-1}\widehat \T^* \widehat \T h_0 - h_\lambda] \|_\mu\leq \lambda &\cdot \|\widehat \T-\T\|_{op}\cdot \|(\T^*\T+\lambda\I)^{-1}h_0\|_{\mathcal{H}}\nonumber \\
		&\cdot\Big[\|\widehat \T(\widehat \T^* \widehat \T + \lambda \I)^{-1}\widehat \T^*\|_{op}+\|\T\|_{op}\|\widehat \T(\widehat \T^* \widehat \T + \lambda \I)^{-1}\|_{op}\Big]\, .
	\end{align}
	Now, as already noted earlier in this proof, $\|\widehat \T - \T\|_{op}=O_P(n^{-1/2})$, $\|(\widehat \T^*\widehat \T + \lambda \I)^{-1}\widehat \T^*\|_{op}=O_P(\lambda^{-1/2})$, and $\|\widehat \T(\widehat \T^*\widehat \T + \lambda \I)^{-1}\widehat \T^*\|_{op}=O_P(1)$. Also, since  the norm of an operator equals the norm of its adjoint, $\|\widehat \T(\widehat \T^*\widehat \T+\lambda \I)^{-1}\|_{op}=\|[\widehat \T (\widehat \T^*\widehat \T+\lambda \I)^{-1}]^*\|_{op}$ $=\| (\widehat \T^* \widehat \T + \lambda \I)^{-1}\widehat \T^*\|_{op}=O_P(\lambda^{-1/2})$. Since $\T$ is a compact operator and compact operators are bounded (see \citet[Theorem 2.14]{kress1999linear}), $\|\T\|_{op}<\infty$. Furthermore, because  $h_0\in\mathcal{R}(\T^*\T)$ by Assumption \ref{as: source conditions}, we can apply Lemma \ref{lem: Bounds on Compact Operator}\ref{lem: Bounds on Compact Operator: iv}  with $\gamma=2$ to obtain $\|(\T^*\T+\lambda \I)^{-1}h_0\|_\mathcal{H}=O(1)$. Also, from Lemma    \ref{lem: Bounds on Compact Operator}\ref{lem: Bounds on Compact Operator: ii}, we have $\|(\widehat \T^*\widehat\T+\lambda \I)^{-1}\|_{op}=O(\lambda^{-1})$. From these results, together with \eqref{eq: 1st term of bound of Theta4 g} and \eqref{eq: 2nd term of bound of Theta4 g}, it follows that
	\begin{gather*}
		\|(\widehat \T^*\widehat \T + \lambda \I)^{-1}\widehat \T^*\widehat \T h_0 - h_\lambda\|_\mathcal{H}=O_P\Big(\lambda n^{-1/2}[\lambda^{-1/2}+\lambda^{-1}] \Big)\,\nonumber\\
		\text{ and }\|\widehat \T[(\widehat \T^*\widehat \T+\lambda \I)^{-1}\widehat \T^*\widehat \T h_0 - h_\lambda]\|_\mu=O_P\Big(\lambda n^{-1/2}[1+\lambda^{-1/2}]\Big)\, .
	\end{gather*}
	The above display, together with $\|\widehat \T-\T\|_{op}=O_P(n^{-1/2})$ and \eqref{eq: first bound for Theta4 g}, yields
	\begin{align}\label{eq: negligibility of Theta4 g}
		\sqrt{n}\left<\Theta_4,g\right>_\mathcal{H}=O_P\Big(&\sqrt{n} \lambda n^{-1/2}[\lambda^{-1/2}+\lambda^{-1}] n^{-1/2}\nonumber\\
		&+ \sqrt{n}\lambda n^{-1/2}[1+\lambda^{-1/2}]\Big)=o_P(1)\, .
	\end{align}
	Let us finally show the negligibility of $\sqrt{n}\left<h_\lambda-h_0,g\right>_\mathcal{H}$. From Assumption \ref{as: source conditions}, we have $h_0\in\mathcal{R}(\T^*\T)$, and by definition $h_\lambda=(\T^*\T+\lambda\I)^{-1}\T^*\T h_0$.  Hence, applying Lemma \ref{lem: Bounds on Compact Operator}\ref{lem: Bounds on Compact Operator: vi} with $\gamma=2$ gives
	\begin{equation}\label{eq: negligibility of hlambda - h0}
		\sqrt{n}\left<h_\lambda-h_0,g\right>_\mathcal{H}=O(\sqrt{n}\lambda)=O(\sqrt{n \lambda^2})=o(1)\,.
	\end{equation}
	From the decomposition in \eqref{eq: decomposition of hhat - h0} and Equations \eqref{eq: negligibility of Theta2 g}, \eqref{eq: negligibility of Theta3 g}, \eqref{eq: negligibility of Theta4 g}, and \eqref{eq: negligibility of hlambda - h0}, we obtain 
	\begin{align}\label{eq: expansion for sqrt n hhat - h0 g}
		\sqrt{n}\left<\widehat h - h_0,g\right>_{\mathcal{H}}=&\sqrt{n}\left<(\T^*\T+\lambda\I)^{-1}\T^*(\widehat r - \widehat \T h_0),g\right>_{\mathcal{H}}+o_P(1)\nonumber\\
		=&\sqrt{n}\left<(\widehat r - \widehat \T h_0),\T(\T^*\T+\lambda\I)^{-1}g\right>_{\mu}+o_P(1)\nonumber\\
		=& \sqrt{n}\left<(\widehat r - \widehat \T h_0),\T(\T^*\T+\lambda\I)^{-1}\T^*\T f_0\right>_{\mu}+o_P(1)\,,
	\end{align}
	where the last equality follows from $g=(\T^*\T)f_0$ for some $f_0\in\mathcal{H}$. This holds by Assumption \ref{as: source conditions}, since $g\in\mathcal{R}(\T^*\T)$. To obtain the final result, it remains to derive an influence function representation for the leading term of the above display.\\
	Now, by Lemma \ref{lem: rates and compacness}\ref{lem: rates and compacness:: compactness of A and T}\ref{lem: rates and compacness:: injectivity of T}, $\T$ is compact and injective. Hence, from \citet[Definition 15.5 and Theorem 15.23]{kress1999linear},  we have $\|(\T^*\T+\lambda \I)^{-1}\T^*\T f_0 - f_0\|_{\mathcal{H}}=o(1)$ as $\lambda\rightarrow 0$. Also, from Lemma \ref{lem: rates and compacness}\ref{lem: rates and compacness:: (I - Phat) shat - That h rate} and the fact that  $\widehat r=(\I-\widehat \P)\widehat s$, we have $\sqrt{n}\|\widehat r - \widehat \T h_0\|_\mu=O_P(1)$. Combining these results with the Cauchy-Schwarz inequality gives
	\begin{align*}
		\sqrt{n}&\left<(\widehat r - \widehat \T h_0),\T[(\T^*\T+\lambda\I)^{-1}(\T^*\T)f_0-  f_0]\right>_{\mu}\\
		&=O_P(\sqrt{n}\|\widehat r - \widehat \T h_0\|_\mu\,\|\T\|_{op}\|(\T^*\T+\lambda\I)^{-1}\T^*\T f_0-  f_0\|_\mathcal{H})=o_P(1)\, .
	\end{align*}
	Accordingly, in the leading term on the RHS of \eqref{eq: expansion for sqrt n hhat - h0 g}, we may replace  $\T(\T^*\T+\lambda\I)^{-1}(\T^*\T)f_0$ with  $\T f_0$ at the cost of an $o_P(1)$ reminder. Using this and recalling that $\widehat r=(\I-\widehat \P)\widehat s$ and $\widehat \T=(\I-\widehat \P)\widehat \A$, we obtain 
	\begin{align}\label{eq: expansion for sqrt n hhat - h0 g :: ii}
		\sqrt{n}\left<(\widehat r - \widehat \T h_0),\T(\T^*\T+\lambda\I)^{-1}(\T^*\T)f_0\right>_{\mu}=& \sqrt{n}\left<(\widehat r - \widehat \T h_0),\T f_0\right>_{\mu} + o_P(1)\nonumber\\
		=& \sqrt{n}\left< (\I-\widehat \P)(\widehat s - \widehat \A h_0) , \T f_0\right>_{\mu}+o_P(1)\nonumber\\
		=& \sqrt{n}\left< (\I-\P)(\widehat s - \widehat \A h_0) , \T f_0\right>_{\mu}\nonumber\\
		&- \sqrt{n}\left< (\widehat \P - \P)(\widehat s - \widehat \A h_0) , \T f_0\right>_{\mu}\nonumber\\
		&+ o_P(1)\, .
	\end{align}
	Let us now derive an influence function representation for the first term on the RHS of \eqref{eq: expansion for sqrt n hhat - h0 g :: ii}. To this end, using $\widehat \A h_0=\operatorname{E}_n\{ h_0(Z_i) \exp(\textbf{i}(X_i^T,W_i)\cdot)\}$ and $\widehat s =\operatorname{E}_n\{ Y_i \exp(\textbf{i}(X_i^T,W_i)\cdot)\}$, we obtain 
	\begin{align}\label{eq: (I-P)(shat - Ahat h0)}
		\sqrt{n}\left<(\I-\P)(\widehat s - \widehat \A h_0), \T f_0\right>_\mu=&\sqrt{n}\left<(\I-\P)\operatorname{E}_n [Y_i-h_0(Z_i)]\exp(\textbf{i}(X_i^T,W_i)\cdot), \T f_0\right>_\mu\nonumber\\
		=& \sqrt{n}\operatorname{E}_n [Y_i-h_0(Z_i)] \left<(\I-\P) \exp(\textbf{i}(X_i^T,W_i)\cdot), \T f_0\right>_\mu\, .
	\end{align}
	Let us prove that the RHS has zero expectation. Since $\P$ is the projection operator onto the range of $\B$, we have $(\I-\P)\B \beta=0$ for any $\beta\in\mathbb{R}^p$. Applying $(\I-\P)$ to both sides of  $s=\B\beta_0+\A h_0$ gives $(\I-\P)(s-\A h_0) = 0$. Recalling that $\A h_0=\operatorname{E} \{h_0(Z_i) \exp(\textbf{i}(X_i^T,W_i)\cdot)\}$ and $s =\operatorname{E} \{Y_i \exp(\textbf{i}(X_i^T,W_i)\cdot)\}$, we obtain  
	\begin{align}\label{eq: zero xpectation for (I-P)E}
		(\I-\P) \operatorname{E}\{[Y_i- h_0(Z_i)] \exp(\textbf{i}(X_i^T,W_i)\cdot)\}=0\, .
	\end{align}
	To show that the RHS of \eqref{eq: (I-P)(shat - Ahat h0)} has zero expectation, let us first prove that we can interchange the operators $\E$ and $(\I-\P)$ in the above display. By Lemma \ref{lem: rates and compacness}\ref{lem: rates and compacness:: expression of P and Phat}, $\P= \B( \B^* \B)^{-1} \B^*$. Since $ \B\beta=\operatorname{E}\{X_i^T\beta \exp(\textbf{i}(X_i^T,W_i)\cdot)\}\in L^2_\mu$, for any $\beta\in\mathbb{R}^p$ and $\psi\in L^2_\mu$ we can compute the adjoint $\B^*:L^2_\mu\to \mathbb{R}^p$ as follows (see the comments below):
	\begin{align*}
		\left<\B \beta,\psi\right>_\mu=&\int_{ }\operatorname{E}\{X_i^T\beta \exp(\textbf{i}(X_i^T,W_i)t)\} \overline{\psi(t)} d\mu(t)\\
		=& \int_{ } \int_{ }\beta^T X_1 \exp(\textbf{i}(X_1^T,W_1)t) \overline{\psi(t)} d P(X_1,W_1) d\mu(t)\\
		=& \beta^T\overline{\int_{ }X_1\int_{ }\exp(-\textbf{i}(X_1^T,W_1)t) \;\psi(t)d\mu(t) d P(X_1,W_1)}\\
		=& \beta^T \B^*\psi\,.
	\end{align*}
	In the third line of the above display, to exchange $\operatorname{E}$ and $\int_{ }d\mu(t)$, we have used Tonelli-Fubini's Theorem. This applies because $\|\beta\|\,\|X_1\| \,|\psi(t)|$ bounds  the absolute value of the integrand and $\int_{ }\int_{ }\|X_1\|\, |\psi(t)|dP(X_1)d\mu(t)<\infty$, where $\|\cdot\|$ denotes the Euclidean norm on $\mathbb{R}^p$. The last line of the above display follows from the definition of the adjoint operator. Thus (see the comments below), 
	\begin{align*}
		\B^* \operatorname{E}\{[Y_i- h_0(Z_i)]&\exp(\textbf{i}(X_i^T,W_i)\cdot)\}\\
		\overset{(a)}{=}&\int_{ }X_1\int_{ }\exp(-\textbf{i}(X_1^T,W_1)t)\operatorname{E}\{[Y_i - h_0(Z_i)]\exp(\textbf{i}(X_i^T,W_i)t)\} d\mu(t) d P(X_1,W_1)\\
		\overset{(b)}{=}& \operatorname{E}\left\{[Y_i - h_0(Z_i)]\int_{ }X_1\int_{ }\exp(-\textbf{i}(X_1^T,W_1)t)\;\exp(\textbf{i}(X_i^T,W_i)t) d \mu(t) d P(X_1,W_1)\right\}\\
		\overset{(c)}{=}& \operatorname{E}\{[Y_i-h_0(Z_i)]\B^* \exp(\textbf{i}(X_i^T,W_i)\cdot)\}\, .
	\end{align*}
	Equality $(a)$ follows from the form of $\B^*$. Equality $(b)$ follows from Tonelli-Fubini's Theorem, which allows exchanging the integrals. This applies because the absolute value of the integrand is bounded by   $|Y_i-h_0(Z_i)|\cdot \|X_1\|$, and $\int_{ }\operatorname{E}\{|Y_i-h_0(Z_i)|\} \cdot \|X_1\| d P(X_1)<\infty$. Finally, equality $(c)$ follows again from the definition of $\B^*$. Next, since $\B^*\B$ maps $\mathbb{R}^p$ to $\mathbb{R}^p$, it is a $p\times p$ matrix (see \citet[pages 112-113]{kreyszig1991introductory}). Therefore, its inverse $(\B^*\B)^{-1}$  is also  $p\times p$ matrix. Consequently, from the previous display, we obtain
	\begin{align*}
		(\B^*\B)^{-1}\B^* \operatorname{E}\{[Y_i- h_0(Z_i)]\exp(\textbf{i}(X_i^T,W_i)\cdot)\}=& \operatorname{E}\{[Y_i-h_0(Z_i)](\B^*\B)^{-1}\B^* \exp(\textbf{i}(X_i^T,W_i)\cdot)\}\, .
	\end{align*}
	By Lemma \ref{lem: rates and compacness}\ref{lem: rates and compacness:: expression of P and Phat}, $\P=\B(\B^*\B)^{-1}\B^*$.
	Hence, applying $\B$ to both sides of the above equation yields (see the comments below)
	\begin{align}\label{eq: P applied to expectation}
		\P \operatorname{E}\{[Y_i- h_0(Z_i)]\exp(\textbf{i}(X_i^T,W_i)\cdot)\}=& \B\operatorname{E}\{[Y_i-h_0(Z_i)]\underbrace{(\B^*\B)^{-1}\B^* \exp(\textbf{i}(X_i^T,W_i)\cdot)}_{=: \delta(X_i,W_i)\in\mathbb{R}^p}\}\nonumber\\
		\overset{(a)}{=}& \int_{ }X_1^T \operatorname{E}\{[Y_i-h_0(Z_i)]\delta(X_i,W_i)\} \exp(\textbf{i}(X_1^T,W_1)\cdot) dP(X_1,W_1)\nonumber\\
		\overset{(b)}{=}& \operatorname{E}\left\{[Y_i-h_0(Z_i)]\int_{ }X_1^T\delta(X_i,W_i)\,\exp(\textbf{i}(X_1^T,W_1)\cdot) dP(X_1,W_1)\right\}\nonumber\\
		=& \operatorname{E}\left\{[Y_i-h_0(Z_i)]\B\delta(X_i,W_i)\right\}\nonumber\\
		=& \operatorname{E}\left\{[Y_i-h_0(Z_i)]\B(\B^*\B)^{-1}\B^*\exp(\textbf{i}(X_i^T,W_i)\cdot)\right\}\nonumber\\
		=& \operatorname{E}\left\{[Y_i-h_0(Z_i)]\P\exp(\textbf{i}(X_i^T,W_i)\cdot)\right\}\, .
	\end{align}
	Equality $(a)$ follows from the definition of $\B$. Equality $(b)$ follows from Tonelli-Fubini's Theorem. This applies because the integrand is bounded in absolute value by $|Y_i-h_0(Z_i)|\cdot |X_1^T\delta(X_i,W_i)|$
	$\leq |Y_i-h_0(Z_i)|\cdot\|X_1\|\cdot \|\delta(X_i,W_i)\|$ $\leq |Y_i-h_0(Z_i)|\cdot\|X_1\|\cdot \|(\B^*\B)^{-1}\|\cdot \|\B^* \exp(\textbf{i}(X_i^T,W_i)\cdot)\|$, with $\|\B^* \exp(\textbf{i}(X_i^T,W_i)\cdot)\|\leq \|\B\|_{op} $ and $\int_{ }\operatorname{E}\{|Y_i-h_0(Z_i)|\} \|X_1\| d P(X_1)<\infty$, where, with a slight abuse of notation, $\|\cdot\|$ denotes both the Euclidean and the Frobenius norm.  The remaining equalities follow from the definitions of $\B$, $\delta(X_i,W_i)$, and $\P=\B(\B^*\B)^{-1}\B^*$. \\
	Equation \eqref{eq: P applied to expectation} shows that we can interchange $\operatorname{E}$ and $\P$, so that
	\begin{equation*}
		(\I-\P) \operatorname{E}\{[Y_i- h_0(Z_i)]\exp(\textbf{i}(X_i^T,W_i)\cdot)\}=\operatorname{E}\{[Y_i- h_0(Z_i)]\;(\I-\P)\exp(\textbf{i}(X_i^T,W_i)\cdot)\}\, .
	\end{equation*}
	From the above display and \eqref{eq: zero xpectation for (I-P)E}, we obtain 
	\begin{equation*}
		\operatorname{E}\{[Y_i- h_0(Z_i)]\;(\I-\P)\exp(\textbf{i}(X_i^T,W_i)\cdot)\}=0\, .
	\end{equation*}
	Thus, by exchanging the integrals as done earlier, we obtain 
	\begin{align}\label{eq: null expectation}
		\operatorname{E}&\left\{[Y_i-h_0(Z_i)]\left<(\I-\P)\exp(\textbf{i}(X_i^T,W_i)\cdot),\T f_0\right>_{\mu}\right\}\nonumber\\
		&=\operatorname{E}\left\{\int_{ }[Y_i-h_0(Z_i)]\;[(\I-\P)\exp(\textbf{i}(X_i^T,W_i)\cdot)](t)\;\overline{(\T f_0)(t)} d\mu(t)\right\}\nonumber\\
		&= \int_{ }\operatorname{E}\left\{[Y_i-h_0(Z_i)]\;[(\I-\P)\exp(\textbf{i}(X_i^T,W_i)\cdot)](t)\right\}\overline{(\T f_0)(t)} d\mu(t)=0\, .
	\end{align}
	Accordingly,  \eqref{eq: (I-P)(shat - Ahat h0)} provides an influence function representation for the first term on the RHS of \eqref{eq: expansion for sqrt n hhat - h0 g :: ii}.  \\
	We now derive an influence function representation for the second term on the RHS of \eqref{eq: expansion for sqrt n hhat - h0 g :: ii}. By Lemma \ref{lem: rates and compacness}\ref{lem: rates and compacness:: expression of P and Phat}, $\widehat \P=\widehat \B(\widehat \B^*\widehat \B)^{-1}\widehat \B^*$ and $\P=\B(\B^*\B)^{-1}\B^*$. Also, by Lemma \ref{lem: rates and compacness}\ref{lem: rates and compacness:: Bhat rate}\ref{lem: rates and compacness:: Bhatstar Bhat inverse rate} $\|\widehat \B - \B\|_{op}=O_P(n^{-1/2})$,  $\|\widehat \B^*\widehat \B-\B^* \B\|_{op}=O_P(n^{-1/2})$, and $\|(\widehat \B^*\widehat \B)^{-1}- ( \B^* \B)^{-1}\|_{op}=O_P(n^{-1/2})$. Thus, we obtain
	\begin{align}\label{eq: decomposition of Phat - P}
		\widehat \P - \P=&   (\widehat \B - \B)(\widehat \B^*\widehat \B)^{-1}\widehat \B^*\nonumber\\
		&+\B(\widehat \B^*\widehat \B)^{-1}[ \B^*\B- \widehat \B^*\widehat \B ](\B^*\B)^{-1}\widehat \B^*\nonumber\\
		&+ \B(\B^*\B)^{-1}(\widehat \B^*-\B^*)\nonumber\\
		=& (\widehat \B - \B)(\B^* \B)^{-1} \B^*\nonumber\\
		&+ \B(\B^* \B)^{-1}[ \B^*\B- \widehat \B^*\widehat \B ](\B^*\B)^{-1} \B^*\nonumber\\
		&+ \B(\B^*\B)^{-1}(\widehat \B^*-\B^*)+o_P(n^{-1/2})\,,
	\end{align}
	where the remainder term is $o_P(n^{-1/2})$ in the sense that its operator norm is $o_P(n^{-1/2})$. From the above display, and since $\|\widehat \B - \B\|_{op}=O_P(n^{-1/2})$ and  $\|\widehat \B^*\widehat \B-\B^* \B\|_{op}=O_P(n^{-1/2})$,  as noted earlier, we obtain $\|\widehat \P - \P\|_{op}=O_P(n^{-1/2})$. Also, from Lemma \ref{lem: rates and compacness}\ref{lem: rates and compacness:: Ahat rate}\ref{lem: rates and compacness:: shat rate}, $\|\widehat\A h_0 -  \A h_0\|_\mu=O_P(n^{-1/2})$ and $\|\widehat s - s\|_\mu=O_P(n^{-1/2})$. Using these rates and the above display, for the second term on the RHS of \eqref{eq: expansion for sqrt n hhat - h0 g :: ii}, we have 
	\begin{align*}
		\sqrt{n}\left<(\widehat \P - \P)(\widehat s-\widehat \A h_0),\T h_0\right>_\mu=&\sqrt{n}\left<(\widehat \P - \P)( s- \A h_0),\T h_0\right>_\mu+o_P(1)\\
		=& \sqrt{n}\left< (\widehat \B - \B)( \B^* \B)^{-1} \B^*(s-\A h_0), \T h_0 \right>_\mu\\
		&+ \sqrt{n}\left< \underbrace{\B( \B^* \B)^{-1}[ \B^*\B- \widehat \B^*\widehat \B ](\B^*\B)^{-1} \B^*(s - \A h_0)}_{=: a_1} , \T h_0 \right>_\mu\\
		&+ \sqrt{n}\left< \underbrace{\B(\B^*\B)^{-1}(\widehat \B^*-\B^*)(s - \A h_0)}_{=:a_2} , \T h_0 \right>_\mu\\
		&+ o_P(1)\, .
	\end{align*}
	Now, let us recall that $\T=(\I-\P)\A$, where $(\I-\P)$ is the projection operator onto $\mathcal{R}(\B)^\perp$, the orthogonal complement of the range of $\B$. Then, $\T f_0=(\I-\P)\A f_0\in\mathcal{R}(\B)^\perp$. Since $a_1$ and $a_2$, defined above, belong to $\mathcal{R}(\B)$,  the last two leading terms on the RHS of the above display are exactly zero. By recalling that $\widehat \B \beta=\operatorname{E}_n[X_i^T\beta \exp(\textbf{i}(X_i^T,W_i)\cdot)]$ and $\B \beta=\operatorname{E}[X_i^T\beta \exp(\textbf{i}(X_i^T,W_i)\cdot)]$, the first leading term can be expressed as  
	\begin{align}\label{eq: expansion of leading term of Phat - P in innner product}
		\sqrt{n}\left<(\widehat \B - \B)\underbrace{(\B^*\B)^{-1}\B^*(s-\A h_0)}_{\in\mathbb{R}^p},\T h_0\right>_\mu=\sqrt{n}(\operatorname{E}_n - \operatorname{E})\Big\{[X_i^T &(\B^*\B)^{-1}\B^*(s-\A h_0) ] \nonumber    \\
		& \left<\exp(\textbf{i}(X_i^T,W_i)\cdot), \T f_0\right>_\mu\big\}\, .
	\end{align}
	
	Finally, combining the two equations above with \eqref{eq: expansion for sqrt n hhat - h0 g :: ii}, \eqref{eq: (I-P)(shat - Ahat h0)}, and \eqref{eq: null expectation}, we obtain the influence function representation for the leading term on the RHS of \eqref{eq: expansion for sqrt n hhat - h0 g}. This completes the proof of the desired result.\\
	
	$(ii)$ The result is a direct consequence of the influence function representation in Part $(i)$. 
	
	\begin{flushright}
		$\blacksquare$
	\end{flushright}
	
	\noindent \textbf{Proof of Proposition \ref{prop: bootstrap validity}}. \\
	
	$(i)$ In what follows, we consider statements relative to joint probability space of the bootstrap weights  and the sample data. We start with the following decomposition:
	\begin{align}\label{eq: decomposition for bootstrap i}
		\sqrt{n}(\widehat \theta_b - \widehat \theta)=&\sqrt{n}\left\{ \operatorname{E}_n \frac{\xi_i}{\overline{\xi}}\widehat h'_b(Z_i)  - \operatorname{E}_n[\widehat h'(Z_i)] \right\} \nonumber  \\
		=& \sqrt{n}\operatorname{E}_n \frac{\xi_i}{\overline{\xi}} [\widehat h _b'(Z_i) - \widehat h'(Z_i)]  \nonumber \\
		&+ \sqrt{n}\operatorname{E}_n  \left\{\frac{\xi_i}{\overline{\xi}}\widehat h'(Z_i) - \widehat h'(Z_i)\right\} \, .
	\end{align}
	Let $\overline{\widehat h '(Z)}:=\operatorname{E}_n h '(Z_i)$. Then, for the second term on the RHS of \eqref{eq: decomposition for bootstrap i},  we have
	\begin{align}\label{eq: 2nd term  of decomposition for bootstrap i}
		\sqrt{n}\operatorname{E}_n \left\{ \frac{\xi_i}{\overline{\xi}}\widehat h'(Z_i) - \widehat h'(Z_i) \right\}= & \sqrt{n}\operatorname{E}_n \frac{\xi_i}{\overline{\xi}}\Big[\widehat h'(Z_i)-\overline{\widehat h '(Z)}\Big]\nonumber\\
		=& \sqrt{n}\operatorname{E}_n \frac{(\xi_i-1)}{\overline{\xi}}\Big[\widehat h'(Z_i)-\overline{\widehat h '(Z)}\Big]\nonumber\\
		=& \sqrt{n}\operatorname{E}_n \frac{(\xi_i-1)}{\overline{\xi}} [\widehat h '(Z_i) - h_0'(Z_i)]\nonumber\\
		&+ \sqrt{n}\operatorname{E}_n \frac{(\xi_i-1)}{\overline{\xi}} \Big[h_0'(Z_i)-\overline{h_0'(Z)}\Big]\nonumber\\
		&- \sqrt{n}\operatorname{E}_n \frac{(\xi_i-1)}{\overline{\xi}}\overline{[\widehat h'(Z)-h_0'(Z)]}\, .
	\end{align}
	We now consider the first term on the RHS of the above display. By Assumption \ref{as: bootstrap weights}, the bootstrap weights $\{\xi_i\}_i$ are independent from the sample data and satisfy $\operatorname{E}\xi_i=1$. Hence, $\operatorname{E} (\xi_i-1) [\widehat h '(Z_i) - h_0'(Z_i)]=\int_{ }(\xi-1)[\widehat h'(z)-h_0'(z)] d P_Z(z) d P_\xi(\xi)=0$, where $P_Z$ and $P_\xi$ denote the probability distributions of $Z$ and $\xi$, respectively. It follows that the empirical process $\sqrt{n}\operatorname{E}_n (\xi_i-1) [\widehat h '(Z_i) - h_0'(Z_i)]$   is centered. By Assumption \ref{as: belonging condition}(i)(ii), $\sup_{z\in\mathcal{Z}}|\widehat h'(z)-h_0'(z)|=o_P(1)$ and, with probability approaching one, $\widehat h'\in\mathcal{G}$, with $\log N_{[\,]}(\epsilon,\mathcal{G},L^2(P))\leq C \epsilon^{-v}$ and $v\in(0,2)$.    Accordingly, by Lemma \ref{lem: ase}\ref{lem: ase:: bootstrap},  $\sqrt{n}\operatorname{E}_n (\xi_i-1) [\widehat h '(Z_i) - h_0'(Z_i)]=o_P(1)$. Since $\overline{\xi}=1+o_P(1)$, we obtain
	\begin{align*}
		\sqrt{n}\operatorname{E}_n\frac{(\xi_i-1)}{\overline{\xi}} [\widehat h '(Z_i) - h_0'(Z_i)]=o_P(1)\, .
	\end{align*}
	Next, for the second term on the RHS of \eqref{eq: 2nd term  of decomposition for bootstrap i}, we have 
	\begin{align*}
		\sqrt{n}\operatorname{E}_n \frac{(\xi_i-1)}{\overline{\xi}} \Big[h_0'(Z_i)-\overline{h_0'(Z)}\Big]=& \sqrt{n}\operatorname{E}_n \frac{\xi_i}{\overline{\xi}} \Big[h_0'(Z_i)-\overline{h_0'(Z)}\Big] \\
		=& \sqrt{n}\operatorname{E}_n \left(\frac{\xi_i}{\overline{\xi}}-1\right) h_0'(Z_i)\, .
	\end{align*}
	Let us now consider the last term on the RHS of \eqref{eq: 2nd term  of decomposition for bootstrap i}. Using $\sup_{z\in\mathcal{Z}}|\widehat h'(z)-h_0(z)|=o_P(1)$ from Assumption \ref{as: belonging condition}(i), we obtain  $\overline{\widehat h(Z)-h_0(Z)}=o_P(1)$. This, together with $\sqrt{n}\operatorname{E}_n (\xi_i-1)=O_P(1)$ and $\overline{\xi}=1+o_P(1)$, implies
	\begin{equation*}
		\sqrt{n}\operatorname{E}_n \frac{(\xi_i-1)}{\overline{\xi}}\overline{[\widehat h'(Z)-h_0'(Z)]}=o_p(1)\, .
	\end{equation*}
	Gathering the above results, we find that 
	\begin{equation}\label{eq: expansion of 2nd term  of decomposition for bootstrap i}
		\sqrt{n}\operatorname{E}_n \left\{ \frac{\xi_i}{\overline{\xi}}\widehat h'(Z_i) - \widehat h'(Z_i) \right\}=\sqrt{n}\operatorname{E}_n \left(\frac{\xi_i}{\overline{\xi}}-1\right) h_0'(Z_i)+o_P(1)\,
	\end{equation}
	which provides an expansion for the second term on the RHS of \eqref{eq: decomposition for bootstrap i}. \\
	To prove the desired result, we now obtain an expansion for the first term on the RHS of \eqref{eq: decomposition for bootstrap i}. We start by decomposing it as follows: 
	\begin{align}\label{eq: decomoposition to get rid of overline xi}
		\sqrt{n}\operatorname{E}_n \frac{\xi_i}{\overline{\xi}} [\widehat h _b'(Z_i) - \widehat h'(Z_i)] =&\sqrt{n}\operatorname{E}_n \xi_i [\widehat h _b'(Z_i) - \widehat h'(Z_i)] \nonumber\\
		&+ \sqrt{n}\operatorname{E}_n \xi_i\Big(\frac{1}{\overline{\xi}}-1\Big) [\widehat h _b'(Z_i) - \widehat h'(Z_i)] \, .
	\end{align}
	For the second term on the RHS of the above display, we have (see the comments below):
	\begin{align*}
		\sqrt{n}\operatorname{E}_n \xi_i\Big(\frac{1}{\overline{\xi}}-1\Big) [\widehat h _b'(Z_i) - \widehat h'(Z_i)] =&\frac{\overline{\xi[\widehat h'_b(Z)-\widehat h'(Z)]}}{\overline{\xi}}\sqrt{n}(1-\overline{\xi})\\
		&= o_P(1)\,.
	\end{align*}
	To obtain the second equality, note that Assumptions \ref{as: belonging condition}(i) and \ref{as: belonging condition for bootstrap}(i) imply $\sup_{z\in\mathcal{Z}}|\widehat h'_b(z)-\widehat h'(z)|=o_P(1)$, so that $\overline{\xi[\widehat h'_b(Z)-\widehat h'(Z)]}=o_P(1)$. Moreover, as established earlier in this proof, $\sqrt{n}(\overline{\xi}-1)=O_P(1)$. Hence, the last equality in the above display follows. \\
	Next, we focus on the first term on the RHS of \eqref{eq: decomoposition to get rid of overline xi}. We have
	\begin{align}\label{eq: numerator}
		\sqrt{n}\operatorname{E}_n \xi_i[\widehat h_b'(Z_i) - \widehat h'(Z_i)]  =& \sqrt{n}(\operatorname{E}_n-\operatorname{E})\; \xi_i[\widehat h_b'(Z_i) - \widehat h'(Z_i)]\nonumber\\
		&+ \sqrt{n}\operatorname{E}\; \xi_i[\widehat h_b'(Z_i) - \widehat h'(Z_i)]\, ,
	\end{align}
	where $\operatorname{E}\xi_i[\widehat h_b'(Z_i)-\widehat h'(Z_i)]=\int_{ }\xi[\widehat h'_b(z)-\widehat h'(z)] d P_Z(z) d P_\xi(\xi)$. 
	Let us show the negligibility of the first term on the RHS of \eqref{eq: numerator}. By Assumptions \ref{as: belonging condition} and \ref{as: belonging condition for bootstrap}, $\sup_{z\in\mathcal{Z}}|\widehat h'_b(z)-\widehat h'(z)|=o_P(1)$ and, with probability approaching one, $\widehat h'_b\,,\,\widehat h'\in\mathcal{G}$, where $\log N_{[\,]}(\epsilon,\mathcal{G},L^2(P))\leq C \epsilon^{-\upsilon}$ and $\upsilon\in(0,2)$. This implies that, with probability approaching one, $\widehat h'_b-\widehat h'\in\widetilde{\mathcal{G}}:=\{f_1-f_2:f_1,f_2\in\mathcal{G}\}$ with $\log N_{[\,]}(\epsilon,\widetilde{\mathcal{G}},L^2(P))\leq C_2 \epsilon^{-\upsilon}$, for a constant $C_2$.  Consequently, Lemma \ref{lem: ase}\ref{lem: ase:: bootstrap} yields 
	\begin{equation*}
		\sqrt{n}(\operatorname{E}_n-\operatorname{E})\; \xi_i[\widehat h_b'(Z_i) - \widehat h'(Z_i)]=o_P(1)\, .
	\end{equation*}
	Next, consider the second term on the RHS of \eqref{eq: numerator}. Since the bootstrap weights $\{\xi_i:i=1,\ldots,n\}$ are independent from the sample data and satisfy $\operatorname{E}\xi_i=1$, we have 
	\begin{align}\label{eq: decomposition of expectation for boostrap}
		\sqrt{n}\operatorname{E}\xi_i [\widehat h_b'(Z_i)-\widehat h'(Z_i)]=& \sqrt{n}\operatorname{E} [\widehat h_b'(Z_i)-\widehat h'(Z_i)]\nonumber\\
		&= \sqrt{n}\operatorname{E} [\widehat h_b'(Z_i)- h'_0(Z_i)]\nonumber\\
		&- \sqrt{n}\operatorname{E} [\widehat h'(Z_i)- h'_0(Z_i)]\, .
	\end{align}
	The expansion of the second term on the RHS of this equation was already obtained in the proof of Proposition \ref{prop: ifr}. For  clarity, we report it below:
	\begin{align}\label{eq: expansion of expectation from asymptotic ifr}
		\sqrt{n}\operatorname{E} [\widehat h'(Z_i)- h'_0(Z_i)]=&  -\sqrt{n}\operatorname{E}_n[Y_i-h_0(Z_i)]\left<(\I-\P)\exp(\textbf{i}(X_i^T,W_i)\cdot),\T f_0\right>_\mu \nonumber\\
		&+ \sqrt{n}(\operatorname{E}_n - \operatorname{E})[X_i^T (\B^*\B)^{-1}\B^*(s-\A h_0) ]     \left<\exp(\textbf{i}(X_i^T,W_i)\cdot), \T f_0\right>_\mu\nonumber\\
		&+ o_P(1)\,. 
	\end{align}
	Let us obtain an expansion for the first term on the RHS  of \eqref{eq: decomposition of expectation for boostrap}. By Assumption \ref{as: belonging condition for bootstrap}(iii) and using the same arguments as in Equations \eqref{eq: integration by parts}, \eqref{eq: from L2 inner product to ell 2 inner product},  and \eqref{eq: from l2 inner product to H inner product}, we obtain 
	\begin{equation}\label{eq: from expectation to inner product in bootstrap world}
		\sqrt{n}\operatorname{E} [\widehat h_b'(Z_i)-h_0'(Z_i)]=-\sqrt{n}\left<\widehat h_b - h_0, g\right>_\mathcal{H}\, ,
	\end{equation}
	where $g=\int_{\mathcal{Z}}K(\cdot,z) f_Z'(z) dz$. 
	Next, define  the bootstrap counterparts of $\widehat \A$, $\widehat \B$, and $\widehat s$ as follows:
	\begin{align}\label{eq: definition of Ahat b, Bhat b, ans shat b}
		\widehat{\A}_bh:=\E_n\!\left\{h(Z_i) \frac{\xi_i}{\overline{\xi}}\exp(\textbf{i}\,(X^T_i,W_i)\, \cdot)\right\}\,,&\, \widehat{\A}_b:\mathcal{H}\to L^2_\mu\,,\nonumber \\
		\widehat{\B}_b\beta=\operatorname{E}_n \left\{X_i^T\beta \frac{\xi_i}{\overline{\xi}} \exp(\textbf{i}\,(X_i^T,W_i)\,\cdot)\right\}\,,&\; \widehat{\operatorname{\B}}_b:\mathbb{R}^p\to L^2_\mu\,,\nonumber \nonumber \\
		\widehat s_b:=\E_n \left\{Y_i \frac{\xi_i}{\overline{\xi}}\exp(\textbf{i}\,(X^T_i,W_i)\, \cdot)\right\}\,,&\, \widehat s_b\in L^2_\mu\, .
	\end{align}
	Then, the problem in \eqref{eq: bootstrap version of hhat} is equivalent to 
	\begin{equation*}
		(\widehat \beta_b,\widehat h _b)=\arg \min_{\beta\in\mathbb{R}^p,h\in\mathcal{H}} \|\widehat s_b - \widehat{\operatorname{B}}_b\beta - \widehat{\operatorname{A}}_b h\|^2_\mu + \lambda \|h\|_\mathcal{H}^2\, .
	\end{equation*}
	Let us denote by $\widehat \P_b$ the projection operator onto $\mathcal{R}(\widehat \B _b)$, the range of $\widehat \B_b$. By defining $\widehat \T_b:=(\I - \widehat \P _b)\widehat \A_b$, we can use Lemma \ref{lem: compactness, and expressions of h0, hhat}\ref{lem: compactness, and expressions of h0, hhat:: hhatb and betahatb} to express $\widehat h_b$ as 
	\begin{equation*}
		\widehat h_b=(\widehat \T^*_b\widehat \T_b + \lambda \I )^{-1}\widehat\T^*_b(\operatorname{I} - \widehat{\operatorname{P}}_b)\widehat s_b \, .
	\end{equation*}
	Then, to obtain an expansion of $\sqrt{n}\left<\widehat h_b - h_0, g\right>_\mathcal{H}$, we apply arguments analogous to those used  in the proof of Proposition~\ref{prop: ifr}, from Equation~\eqref{eq: decomposition of hhat - h0} up to Equation \eqref{eq: (I-P)(shat - Ahat h0)}, and from Equation \eqref{eq: decomposition of Phat - P} up to Equation \eqref{eq: expansion of leading term of Phat - P in innner product}. Specifically, we substitute $(\widehat{\T},\widehat{\B}, \widehat{s})$ with $(\widehat{\T}_b, \widehat{\B}_b, \widehat{s}_b)$ and replace Lemma~\ref{lem: rates and compacness} with Lemma~\ref{lem: rates and compacness for bootstrap}. This yields
	\begin{align}\label{eq: expansion of inner product in bootstrap world}
		\sqrt{n}\left<\widehat h_b - h_0, g\right>_\mathcal{H}=&\sqrt{n}\operatorname{E}_n [Y_i-h_0(Z_i)] \frac{\xi_i}{\overline{\xi}} \left<(\I-\P) \exp(\textbf{i}(X_i^T,W_i)\cdot), \T f_0\right>_\mu\nonumber\\
		&- \sqrt{n}\operatorname{E}_n\Big\{\frac{\xi_i}{\overline{\xi}}[X_i^T (\B^*\B)^{-1}\B^*(s-\A h_0) ]   \nonumber \\
		&\hspace{4cm} \left<\exp(\textbf{i}(X_i^T,W_i)\cdot), \T f_0\right>_\mu\big\}\nonumber\\
		&+\sqrt{n}\operatorname{E}\Big\{[X_i^T (\B^*\B)^{-1}\B^*(s-\A h_0) ]   \nonumber \\
		&\hspace{4cm} \left<\exp(\textbf{i}(X_i^T,W_i)\cdot), \T f_0\right>_\mu\Big\}\, .
	\end{align}
	Plugging Equations  \eqref{eq: expansion of expectation from asymptotic ifr}, \eqref{eq: from expectation to inner product in bootstrap world}, and \eqref{eq: expansion of inner product in bootstrap world} into \eqref{eq: decomposition of expectation for boostrap} gives
	\begin{align*}
		\sqrt{n}\operatorname{E}\xi_i [\widehat h_b'(Z_i)-\widehat h'(Z_i)]=& - \sqrt{n}\operatorname{E}_n [Y_i-h_0(Z_i)] \Big(\frac{\xi_i}{\overline{\xi}}-1\Big) \left<(\I-\P) \exp(\textbf{i}(X_i^T,W_i)\cdot), \T f_0\right>_\mu\\
		&+ \sqrt{n}\operatorname{E}_n\Big\{\Big(\frac{\xi_i}{\overline{\xi}}-1\Big)[X_i^T (\B^*\B)^{-1}\B^*(s-A h_0) ]   \nonumber \\
		&\hspace{4cm}\cdot \left<\exp(\textbf{i}(X_i^T,W_i)\cdot), \T f_0\right>_\mu\Big\}\\
		&+ o_P(1)\, .
	\end{align*}
	Gathering the above results, we obtain an expansion for the first term on the RHS of \eqref{eq: decomposition for bootstrap i}. This concludes the proof of $(i)$. \\
	
	$(ii)$ 
	By Lemma \ref{lem: conditional multiplier clt for Bayesian bootsrap}, we obtain that, conditionally on the sample data $\{(Y_i,Z_i,X_i,W_i)\}_{i\geq 1}$, $\sqrt{n}\operatorname{E}_n[(\xi_i/\overline{\xi}-1)\psi(Y_i,Z_i,X_i,W_i)]\leadsto T$, where $T\sim\mathcal{N}(0,\operatorname{Var}[\psi(Y_i,Z_i,X_i,W_i)]\,)$, for almost all realizations $\{(Y_i,Z_i,X_i,W_i)\}_{i\geq 1}$. That is, 
	\begin{equation*}
		\sup_x|{\Pr}_\xi(\sqrt{n}\operatorname{E}_n[(\xi_i/\overline{\xi}-1)\psi(Y_i,Z_i,X_i,W_i)] \leq x)-F_{T}(x)|=o(1)\text{ almost surely},    
	\end{equation*}
	where $\Pr_{\xi}$ is the probability distribution treating the bootstrap weights $\{\xi_i:i=1,2,\ldots\}$ as random  and the sample data as fixed, and $F_T$ is the distribution function of $T$. Thus, by the expansion in Part $(i)$, we obtain that $\sqrt{n}(\widehat \theta_b - \widehat \theta)\leadsto T$ in probability. That is, $\sup_x|{\Pr}_\xi(\sqrt{n}(\widehat \theta_b - \widehat \theta) \leq x)$ $- $ $F_T(x)|=o_P(1)$. Consequently, by the Continuous Mapping Theorem, 
	\begin{equation*}
		\sup_x|{\Pr}_\xi(|\sqrt{n}(\widehat \theta_b - \widehat \theta)| \leq x)-F_{|T|}(x)|=o_P(1)\,,    
	\end{equation*}
	where $F_{|T|}$ denotes the distribution function of $|T|$, with $T$ defined above.  Since $F_{|T|}$ is continuous and strictly increasing, its quantile function is continuous. Hence, by the above display and \citet[Lemma 21.2]{van2000asymptotic}, we obtain
	\begin{equation}\label{eq: convergence of bootstrap quantile}
		\widehat c_{1-\alpha}=c_{1-\alpha}+o_P(1)\,,
	\end{equation}
	where $c_{1-\alpha}$ denotes the $(1-\alpha)$ quantile of $F_{|T|}$, and $\widehat c_{1-\alpha}$ is the bootstrap $(1-\alpha)$ quantile of $|\sqrt{n}(\widehat \theta_b-\widehat \theta)|$, as defined in Section \ref{sec:bootstrap_test_for_theta0}. Next, combining Proposition \ref{prop: ifr}(ii) with the Continuous Mapping Theorem, we obtain $|\sqrt{n}(\widehat \theta-\theta_0)|\leadsto |T|$. Thus, by \eqref{eq: convergence of bootstrap quantile} and \citet[Theorem 2.7(v)]{van2000asymptotic}, $(|\sqrt{n}(\widehat \theta-\theta_0)|,\widehat c_{1-\alpha})\leadsto (|T|,c_{1-\alpha})$. By the Continuous Mapping Theorem, this implies that  $|\sqrt{n}(\widehat \theta-\theta_0)|-\widehat c_{1-\alpha}\leadsto |T|-c_{1-\alpha}$. Since the distribution function of $|T|$ is continuous at any point, the distribution function of $|T|-c_{1-\alpha}$ is continuous at 0. Hence, $\Pr(|\sqrt{n}(\widehat \theta - \theta_{0})|-\widehat c_{1-\alpha}\leq 0)$ $\rightarrow \Pr(|T|-c_{1-\alpha}\leq 0)$. Consequently,
	\begin{equation*}
		\Pr(|\sqrt{n}(\widehat \theta - \theta_{0})|>\widehat c_{1-\alpha})=1-\Pr(|\sqrt{n}(\widehat \theta - \theta_{0})|\leq\widehat c_{1-\alpha})\rightarrow 1-\Pr(|T|\leq c_{1-\alpha})=\alpha\, .
	\end{equation*}
	Finally, under $\mathcal{H}_0$, we have $\theta_0=\theta_{\mathcal{H}_0}$. Hence, by the above display, 
	\begin{equation*}
		\Pr(|\sqrt{n}(\widehat \theta - \theta_{\mathcal{H}_0})|>\widehat c_{1-\alpha})\rightarrow \alpha \text{ under }\mathcal{H}_0\, .
	\end{equation*}
	
	$(iii)$ From Equation \eqref{eq: convergence of bootstrap quantile}, $\widehat c_{1-\alpha}=c_{1-\alpha}+o_P(1)$. Also, by Proposition \ref{prop: ifr}, $\widehat \theta=\theta_0+o_P(1)$. Hence, under $\mathcal{H}_1:\theta_0\neq \theta_{\mathcal{H}_0}$, we have  $\widehat \theta- \theta_{\mathcal{H}_0}=(\widehat \theta- \theta_0)+(\theta_0- \theta_{\mathcal{H}_0})=o_P(1)+c$, with $c:=\theta_0-\theta_{\mathcal{H}_0}\neq 0$. Accordingly, $|\widehat \theta - \theta_{\mathcal{H}_0}|-\widehat c_{1-\alpha}/\sqrt{n}=|c|+o_P(1)$. This implies that 
	\begin{equation*}
		\Pr(|\sqrt{n}(\widehat \theta -  \theta_{\mathcal{H}_0})|>\widehat c_{1-\alpha})=\Pr(|\widehat \theta -  \theta_{\mathcal{H}_0}|-\widehat c_{1-\alpha}/\sqrt{n}>0)\rightarrow 1\,, 
	\end{equation*}
	which delivers the desired result.  
	
	\begin{flushright}
		$\blacksquare$
	\end{flushright}
	
	\section{Auxiliary lemmas}
	\label{sec: appendix auxiliary lemmas}
	
	\begin{lemma}\label{lem: positive definiteness of F}
		Assume that $\mathcal{F}_\mu$ is symmetric and that the \(\{(X_i^T,W_i):1,\ldots,n\}\) are all different. Then, $\bm{F}$ is a positive definite matrix, that is, $\bm{\gamma}^T\bm{F}\bm{\gamma}>0$ for any $\gamma\in\mathbb{R}^n$ with $\bm{\gamma}\neq 0$.    
	\end{lemma}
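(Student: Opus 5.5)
The plan is to write the quadratic form of $\bm{F}$ as the squared $L^2_\mu$ norm of a trigonometric sum. Then positivity reduces to showing that a nonzero finite combination of distinct complex exponentials cannot vanish $\mu$-almost everywhere.

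\textbf{Step 1: the quadratic form as an integral.} Write $V_i:=(X_i^T,W_i)^T\in\mathbb{R}^{p+1}$ and fix $\bm{\gamma}\in\mathbb{R}^n$. Since $\mathcal{F}_\mu(u)=\int\exp(\textbf{i}t^Tu)\,d\mu(t)$, we get
\begin{equation*}
\bm{\gamma}^T\bm{F}\bm{\gamma}=\sum_{i,j}\gamma_i\gamma_j\int \exp(\textbf{i}t^TV_i)\,\overline{\exp(\textbf{i}t^TV_j)}\,d\mu(t)=\int\Big|\sum_{i=1}^n\gamma_i\exp(\textbf{i}t^TV_i)\Big|^2d\mu(t)\geq 0 .
\end{equation*}
Symmetry of $\mathcal{F}_\mu$ makes $\mathcal{F}_\mu$ real, so $\bm{F}$ is a real symmetric matrix and the quadratic form is real. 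This is the same identity already used in \eqref{eq:def_objective_function_Mn_beta_h}.

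\textbf{Step 2: reduction.} Suppose $\bm{\gamma}^T\bm{F}\bm{\gamma}=0$. Then $\phi(t):=\sum_i\gamma_i\exp(\textbf{i}t^TV_i)$ vanishes $\mu$-almost everywhere. The function $\phi$ is continuous, so it vanishes on the support of $\mu$.

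\textbf{Step 3: main obstacle.} We must conclude that $\bm{\gamma}=0$, and this cannot work for an arbitrary $\mu$. For example, if $\mu$ is a point mass at $0$, then $\bm{F}$ is the all-ones matrix. So we must use that $\mu$ has full support $\mathbb{R}^{p+1}$, which is how we read Assumption~\ref{as: iid and T}(iv); the Laplace choice used in the simulations satisfies this. I would state this reading explicitly.

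\textbf{Step 4: linear independence of the exponentials.} With full support, $\phi\equiv0$ on $\mathbb{R}^{p+1}$. Because the $V_i$ are distinct, pick a direction $v$ such that the scalars $a_i:=v^TV_i$ are distinct; this is possible since a finite union of proper hyperplanes does not cover $\mathbb{R}^{p+1}$. Restrict to the line $t=sv$, which gives $\sum_i\gamma_i e^{\textbf{i}sa_i}=0$ for all $s\in\mathbb{R}$.

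To see that this forces $\bm{\gamma}=0$, differentiate $k$ times at $s=0$ for $k=0,\dots,n-1$. This yields $\sum_i\gamma_i a_i^k=0$, a Vandermonde system with distinct nodes, so $\bm{\gamma}=0$. Alternatively, averaging $e^{-\textbf{i}sa_j}\phi(sv)$ over $s\in[-S,S]$ and letting $S\to\infty$ isolates $\gamma_j$. Either route gives $\bm{\gamma}^T\bm{F}\bm{\gamma}>0$ for every $\bm{\gamma}\neq0$.
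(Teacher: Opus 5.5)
Your proof is correct. The paper gives no argument of its own and defers to an intermediate step in the proof of Proposition~2.1 of \cite{beyhum2023one}. Your route is the standard argument for that step: write $\bm{\gamma}^T\bm{F}\bm{\gamma}=\int|\sum_i\gamma_i\exp(\textbf{i}t^TV_i)|^2d\mu(t)$, then use linear independence of exponentials with distinct frequencies.

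Your Step~3 remark is also right and worth keeping. The lemma as stated omits any support condition on $\mu$: with $\mu$ a point mass at $0$, $\bm{F}$ is the all-ones matrix. So it must be read together with Assumption~\ref{as: iid and T}(iv) as ``$\mu$ has full support''. A support with nonempty interior would also suffice, because the exponential sum is real-analytic.
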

	\begin{proof}
		The proof of this result is given as an intermediate step in the proof of Proposition 2.1 in \cite{beyhum2023one}.
	\end{proof}
	Before presenting the following lemmas, we recall that $\A$, $\B$, and $s$ are defined in Equation \eqref{eq: definitions of A, B, and s}; $\widehat{\A}$, $\widehat{\B}$, $\widehat s$ are defined in Equation \eqref{eq: Ahat and shat}; and $\widehat{\A}_b$, $\widehat{\B}_b$, and $\widehat s_b$ are defined in Equation \eqref{eq: definition of Ahat b, Bhat b, ans shat b}. The projection operators $\P$, $\widehat{\P}$, and $\widehat{\P}_b$ onto $\mathcal{R}(\B)$, $\mathcal{R}(\widehat{\B})$, and $\mathcal{R}(\widehat{\B}_b)$ are well defined and can be expressed as in Lemma \ref{lem: rates and compacness}\ref{lem: rates and compacness:: expression of P and Phat} and Lemma \ref{lem: rates and compacness for bootstrap}\ref{lem: rates and compacness for bootstrap:: expression of Phat b}.  
	\begin{lemma}\label{lem: compactness, and expressions of h0, hhat}
		Let Assumptions \ref{as: iid and T}, \ref{as: rkhs}(i)(ii) , and \ref{as: completeness} hold. Then,
		\begin{enumerate}[label=(\roman*)]
			\item \label{lem: compactness, and expressions of h0, hhat:: h0 and beta0} $h_0=\T^{-1}(\I-\P)s$ and $\beta_0=(\B^*\B)^{-1}\B^*(s-\A h_0)$, where $\operatorname{T}=(\operatorname{I} - \operatorname{P})\operatorname{A}$.
			\item \label{lem: compactness, and expressions of h0, hhat:: h regularized and beta} The problem $\min_{h\in\mathcal{H},\beta\in\mathbb{R}^p}\|s-\A h - \B \beta\|^2_\mu+\lambda\|h\|_{\mathcal{H}}^2$ admits a unique solution given by 
			\begin{equation*}
				h_\lambda=(\T^*\T+\lambda\I)^{-1}\T^*(\I-\P)s\;,\; \beta_\lambda=(\B^*\B)^{-1}\B^*(s-\A h_\lambda)\, .
			\end{equation*}
			\item \label{lem: compactness, and expressions of h0, hhat:: hhat and betahat} If, moreover, Assumption \ref{as: bootstrap weights} holds, with probability approaching one, the problem $\min_{h\in\mathcal{H},\beta\in\mathbb{R}^p}\|\widehat{s}-\widehat{\A} h - \widehat{\B} \beta\|^2_\mu+\lambda\|h\|_{\mathcal{H}}^2$ admits a unique solution given by 
			\begin{equation*}
				\widehat h=(\widehat{\T}^*\widehat{\T}+\lambda\I)^{-1}\widehat{\T}^*(\I-\widehat{\P})\widehat{s}\;,\; \widehat{\beta}=(\widehat{\B}^*\widehat{\B})^{-1}\widehat{\B}^*(\widehat{s}-\widehat{\A} \widehat{h})\, ,
			\end{equation*}
			where $\widehat{\T}:=(\I-\widehat{\P})\widehat{\A}$. 
			\item \label{lem: compactness, and expressions of h0, hhat:: hhatb and betahatb}  With probability approaching one, the problem $\min_{h\in\mathcal{H},\beta\in\mathbb{R}^p}\|\widehat{s}_b-\widehat{\A}_b h - \widehat{\B}_b \beta\|^2_\mu+\lambda\|h\|_{\mathcal{H}}^2$ admits a unique solution given by 
			\begin{equation*}
				\widehat h_b=(\widehat{\T}^*_b\widehat{\T}_b+\lambda\I)^{-1}\widehat{\T}^*_b(\I-\widehat{\P}_b)\widehat{s}_b\;,\; \widehat{\beta}_b=(\widehat{\B}^*_b\widehat{\B}_b)^{-1}\widehat{\B}^*_b(\widehat{s}_b-\widehat{\A}_b \widehat{h}_b)\, ,
			\end{equation*}
			where $\widehat{\T}_b:=(\I-\widehat{\P}_b)\widehat{\A}_b$.
		\end{enumerate}
	\end{lemma}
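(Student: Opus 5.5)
The strategy is to reduce all four parts to one Hilbert-space fact. For a bounded linear operator $\operatorname{S}:\mathcal{H}\to L^2_\mu$ and $\lambda>0$, the functional $h\mapsto\|r-\operatorname{S}h\|^2_\mu+\lambda\|h\|^2_\mathcal{H}$ is strictly convex, coercive and continuous. It therefore has a unique minimizer, which is characterized by the normal equation $(\operatorname{S}^*\operatorname{S}+\lambda\I)h=\operatorname{S}^*r$. Since $\operatorname{S}^*\operatorname{S}+\lambda\I$ is boundedly invertible, the minimizer is $(\operatorname{S}^*\operatorname{S}+\lambda\I)^{-1}\operatorname{S}^*r$.

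The joint minimization over $(h,\beta)$ is handled by profiling out $\beta$. Suppose $\B^*\B$ (a $p\times p$ matrix) is invertible and $\P=\B(\B^*\B)^{-1}\B^*$ is the orthogonal projection onto the closed finite-dimensional range $\mathcal{R}(\B)$. Then for fixed $h$, Pythagoras gives
\[
\|s-\A h-\B\beta\|^2_\mu=\|(\I-\P)(s-\A h)\|^2_\mu+\|\P(s-\A h)-\B\beta\|^2_\mu .
\]
The second term vanishes exactly when $\B\beta=\P(s-\A h)$, that is, by injectivity of $\B$, when $\beta=(\B^*\B)^{-1}\B^*(s-\A h)$. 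The profiled objective is then $\|(\I-\P)s-\T h\|^2_\mu+\lambda\|h\|^2_\mathcal{H}$ with $\T=(\I-\P)\A$. Applying the fact above with $\operatorname{S}=\T$ and $r=(\I-\P)s$ gives part (ii), using $\T^*(\I-\P)=\T^*$ since $\I-\P$ is a self-adjoint idempotent. Uniqueness of the pair follows because $h$ is unique and $\beta$ is uniquely determined by $h$.

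The preliminary facts to check are the following.
\begin{itemize}
\item $\A$ is bounded. By the reproducing property and Cauchy–Schwarz, $|h(z)|\le\|h\|_\mathcal{H}K(z,z)^{1/2}$, and $\sup_{\mathcal{Z}}K(z,z)<\infty$ by continuity of $K$ on the bounded support (Assumptions \ref{as: iid and T}(iii), \ref{as: rkhs}(i)). Since $|\exp(\cdot)|=1$ and $\mu$ is a probability measure, $\|\A h\|_\mu\le\E|h(Z)|\le C\|h\|_\mathcal{H}$.
\item $\B$ is bounded, by the same argument using $\E\|X\|<\infty$. Also $s\in L^2_\mu$ by $\E|Y|<\infty$.
\item $\B^*\B$ is invertible. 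For $\beta\neq0$, $\beta^T\B^*\B\beta=\|\B\beta\|^2_\mu>0$ by Assumption \ref{as: completeness}(i).
\end{itemize}

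Part (i) follows from $s=\B\beta_0+\A h_0$. Applying $\I-\P$ gives $\T h_0=(\I-\P)s$. The operator $\T$ is injective: if $\T h=0$ then $\A h=\P\A h\in\mathcal{R}(\A)\cap\mathcal{R}(\B)=\{0\}$, so $h=0$ by Assumption \ref{as: completeness}. Hence $h_0=\T^{-1}(\I-\P)s$, with $\T^{-1}$ defined on $\mathcal{R}(\T)$. Applying $(\B^*\B)^{-1}\B^*$ to $s-\A h_0=\B\beta_0$ then yields the stated formula for $\beta_0$.

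Parts (iii) and (iv) repeat the argument with $(\widehat{\A},\widehat{\B},\widehat s)$ and with $(\widehat{\A}_b,\widehat{\B}_b,\widehat s_b)$. Boundedness is immediate, since these are finite averages with finite moments. The only delicate point, and the main obstacle, is invertibility of $\widehat{\B}^*\widehat{\B}$ and of $\widehat{\B}^*_b\widehat{\B}_b$, which holds only with probability approaching one. For this I would write $\widehat{\B}^*\widehat{\B}$ as the $p\times p$ matrix with entries $n^{-2}\sum_{i,j}X_iX_j^T\mathcal{F}_\mu((X_i^T,W_i)-(X_j^T,W_j))$, a V-statistic. By the law of large numbers for V-statistics under $\E\|X\|^2<\infty$, it converges in probability to $\B^*\B$; this is the content of Lemma \ref{lem: rates and compacness}. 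Positive definiteness of the limit then gives invertibility with probability approaching one. The bootstrap version is the same, now with weights $\xi_i\xi_j/\overline{\xi}^2$, using Assumption \ref{as: bootstrap weights} and $\overline{\xi}\to1$.
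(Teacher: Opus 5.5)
Your proposal is correct and follows essentially the paper's route. You profile out $\beta$ through the orthogonal projection $\P$ onto $\mathcal{R}(\B)$, solve the resulting Tikhonov problem for $\T=(\I-\P)\A$ via its normal equation, obtain (i) from $s=\A h_0+\B\beta_0$ and the injectivity of $\T$, and get (iii)--(iv) from the invertibility of $\widehat{\B}^*\widehat{\B}$ and $\widehat{\B}_b^*\widehat{\B}_b$ with probability approaching one, by their convergence to $\B^*\B$. The only minor difference is that you show boundedness of $\A$ directly from the reproducing property, whereas the paper deduces it from the compactness of $\T$; boundedness is all this lemma needs.
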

	
	\begin{proof}
		$(i)$ From the definitions of $s$, $\A$, and $\B$, Equation \eqref{eq: Bierens reformulation of integral equation} can be equivalently rewritten as 
		\begin{equation}\label{eq: reformulation of the dentifying equation}
			s=\A h_0 + \B \beta_0\, .
		\end{equation}
		As noted just above Assumption \ref{as: source conditions}, the projection operator $\P$ onto $\mathcal{R}(\B)$, the range of $\B$, is well defined. Since $(\I - \P)\B=0$, applying $(\I-\P)$ to both sides of \eqref{eq: reformulation of the dentifying equation} gives $(\I - \P)s=\T h_0$, where $\T=(\I - \P)\A$. By Lemma \ref{lem: rates and compacness}\ref{lem: rates and compacness:: injectivity of T}, $\T$ is injective, and hence $h_0=\T^{-1}(\I-\P)s$. To obtain an expression for $\beta_0$, we apply $\B^*$ to both sides of \eqref{eq: reformulation of the dentifying equation} and rearrange terms to get $(\B^*\B)\beta_0=\B^*(s-\A h_0)$. By Assumption \ref{as: completeness}(i), $\B$ is injective, so that $\mathcal{N}(\B)=\{0\}$, where $\mathcal{N}(\B)$ denotes the null space of $\B$. Since $\mathcal{N}(\B)=\mathcal{N}(\B^*\B)$, it follows that $\B^*\B$ is injective. Hence, we obtain $\beta_0=(\B^*\B)^{-1}\B^*(s-\A h_0)$. \\
		
		$(ii)$ Note first that 
		\begin{align}\label{eq: profiling the PL problem}
			\min_{h\in\mathcal{H},\beta\in\mathbb{R}^p}\|s-\A h - \B \beta\|^2_\mu+\lambda \|h\|^2_\mathcal{H}=&\min_{h\in\mathcal{H}}\Big\{\min_{\beta\in\mathbb{R}^p}\|s-\A h - \B \beta\|^2_\mu\Big\}+\lambda \|h\|^2_\mathcal{H}\nonumber\\
			&= \min_{h\in\mathcal{H}}\Big\{\min_{a\in\mathcal{R}(\B)}\|s-\A h - a\|^2_\mu\Big\}+\lambda \|h\|^2_\mathcal{H}\, .
		\end{align}
		As established just above Assumption \ref{as: source conditions}, $\mathcal{R}(\B)$ is a closed linear subspace. Thus, by \citet[Theorems 1.26 and 13.3]{kress1999linear}, for any $h\in\mathcal{H}$, the problem $\min_{b\in\mathcal{R}(\B)}\|s-\A h - b\|^2_\mu$ admits a unique solution given by $b_h=\P(s-\A h)$. Accordingly, 
		\begin{align}\label{eq: profiled PL problem}
			\min_{h\in\mathcal{H}}\Big\{\min_{b\in\mathcal{R}(\B)}\|s-\A h - b\|^2_\mu\Big\}+\lambda \|h\|^2_\mathcal{H}=\min_{h\in\mathcal{H}}\|(\I-\P)s - (\I-\P)\A h\|^2_\mu+\lambda \|h\|^2_\mathcal{H}\, .
		\end{align}
		By Lemma \ref{lem: rates and compacness}\ref{lem: rates and compacness:: compactness of A and T}, $\T=(\I-\P)\A$ is a compact operator. Because compact operators are bounded (see \citet[Theorem 2.14]{kress1999linear}), $\T$ is also bounded. Hence, by \citet[Theorem 16.1]{kress1999linear}, the problem on the RHS of \eqref{eq: profiled PL problem} admits a unique solution given by
		\begin{equation*}
			(\T^*\T+\lambda \I)^{-1}\T^*(\I-\P)s=:h_\lambda.  
		\end{equation*}
		Plugging $h_\lambda$ into the expression for $b_h$ obtained earlier yields $b_{h_\lambda}=\P(s-\A h_\lambda)$. Since $b_{h_\lambda}\in\mathcal{R}(\B)$, there exists a $\beta_\lambda\in\mathbb{R}^p$ such that $\B \beta_\lambda=b_{h_\lambda}$. Thus, $\B \beta_\lambda=\P(s-\A h_\lambda)$ and   $\B^*\B \beta_\lambda=\B^*\P(s-\A h_\lambda)$. As  noted in the proof of Part $(i)$, $\B^*\B$ is injective, and hence $\beta_\lambda=(\B^*\B)^{-1}\B^*\P(s-\A h_\lambda)$. Since, by Lemma \ref{lem: rates and compacness}\ref{lem: rates and compacness:: expression of P and Phat}, $\P=\B(\B^*\B)^{-1}\B^*$, we have $\B^*\P=\B^*$. Thus,  $\beta_\lambda=(\B^*\B)^{-1}\B^*(s-\A h_\lambda)$. Now, recall that $h_\lambda$ is a solution to the problem on the RHS of \eqref{eq: profiled PL problem}. Consequently, by the two displays above, $(h_\lambda,\beta_\lambda)$ is a solution to the problem on the left-hand side of \eqref{eq: profiling the PL problem}. By the strict convexity of the squared norm, this solution must be unique.\\
		
		$(iii)$ By Lemma \ref{lem: rates and compacness}\ref{lem: rates and compacness:: compactness of That}, $\widehat{\T}$ is a compact operator. Thus, as long as $\widehat{\B}^*\widehat{\B}$ is injective with probability approaching one, the proof proceeds analogously to that of Part~$(ii)$. We first note that, since both $\widehat{\B}^*\widehat{\B}$ and $\B^*\B$ are linear operators from $\mathbb{R}^p$ to $\mathbb{R}^p$, they can be represented as $p\times p$ matrices (see \citet[Section 2.9]{kreyszig1991introductory}). By Lemma \ref{lem: rates and compacness}\ref{lem: rates and compacness:: Bhatstar Bhat inverse rate}, $\|\widehat{\B}^*\widehat{\B} - \B^*\B\|_{op}=o_P(1)$. Hence, the injectivity of $\B^*\B$ implies that $\widehat{\B}^*\widehat{\B}$ is injective with probability approaching one.    \\
		
		$(iv)$ The proof follows by the same arguments as in the proof of Part~$(iii)$, replacing Lemma \ref{lem: rates and compacness}\ref{lem: rates and compacness:: Bhatstar Bhat inverse rate}\ref{lem: rates and compacness:: compactness of That} with Lemma \ref{lem: rates and compacness for bootstrap}\ref{lem: rates and compacness for bootstrap:: Bhatstar b Bhat b inverse rate}\ref{lem: rates and compacness for bootstrap:: compactness of That b}. 
		
	\end{proof}
	
	\begin{lemma}\label{lem: rate for hhat}
		Let Assumptions \ref{as: iid and T}, \ref{as: rkhs}(i)(ii),  and \ref{as: completeness} hold. If $n \lambda \rightarrow \infty$ and $\lambda\rightarrow 0$ , then
		\begin{equation*}
			\|\widehat h - h_0\|_{\mathcal{H}}=o_P(1)\,. 
		\end{equation*}
		If moreover $h_0\in\mathcal{R}[(\T^*\T)^{b/2}]$, then
		\begin{equation*}
			\|\widehat h - h_0\|_{\mathcal{H}}=O_P\left(\frac{1}{\sqrt{n \lambda}}+\lambda^{(b\wedge 2)/2}\right)\, .
		\end{equation*}
	\end{lemma}
	\begin{proof}
		The proof follows by arguments analogous to those in the proof of \citet[Theorem 3.1]{beyhum2023one}. For completeness, we include the details here. As obtained in Equation  \eqref{eq: hhat Tikhonov expression} (see the proof of Proposition \ref{prop:computation_solution_empirical_program}), 
		\begin{equation*}
			\widehat h =(\widehat{\T}^* \widehat{\T} + \lambda \I)^{-1}\widehat{\T}^*(\I-\widehat{\P})\widehat s\,,
		\end{equation*}
		where $\widehat{\T}=(\I - \widehat{\P})\widehat{\A}$,  $\widehat{\P}$ is the projection operator onto $\mathcal{R}(\widehat{\B})$, the range of $\widehat{\B}$, and $(\widehat{\A}, \widehat{\B},\widehat s)$ are defined in Equation \eqref{eq: Ahat and shat}. Next, notice that
		\begin{equation*}
			\widehat h- h_0= \widehat{S}_1+\widehat{S}_2+\widehat{S}_3+\widehat{S}_4+h_\lambda - h_0\,,
		\end{equation*}
		where
		\begin{gather*}
			\widehat{S}_1=(\T^*\T+\lambda \I)^{-1}\T^*[(\I - \widehat{\P})\widehat s - \widehat{\T} h_0]\,,\, \widehat{S}_2=(\T^*\T+\lambda \I)^{-1}(\widehat{\T}^*- \T^*)[(\I - \widehat{\P})\widehat s - \widehat{\T} h_0]\,,\\
			\widehat{S}_3=[(\widehat{\T}^*\widehat{\T}+\lambda \I)^{-1}-(\T^*\T+\lambda \I)^{-1}]\widehat{\T}^*[(\I - \widehat{\P})\widehat s - \widehat{\T}h_0]\,,\, \widehat S_4=(\widehat{\T}^*\widehat{\T}+\lambda \I)^{-1}\widehat{\T}^*\widehat{\T} h_0 - h_\lambda\,,\\
			h_\lambda = (\T^*\T+\lambda \I)^{-1}\T^* \T h_0\, .
		\end{gather*}
		Since $\lambda \rightarrow 0$ and $n \lambda \rightarrow \infty$, Lemma \ref{lem: uniform in lambda}(i)-(iv) implies $\|\widehat S_j\|_{\mathcal{H}}=O_P(1/\sqrt{n \lambda})$ for $j$=1,2,3,4. Moreover, since $\T$ is injective and compact as stated in Lemma \ref{lem: rates and compacness}\ref{lem: rates and compacness:: compactness of A and T}\ref{lem: rates and compacness:: injectivity of T}, we can apply \citet[Definition 15.5 and Theorem 15.23]{kress1999linear} to conclude that $\|h_\lambda - h_0\|_{\mathcal{H}}=o(1)$. Thus, the first part of the lemma follows. To obtain the second part of the lemma, recall that $\T$ is compact. Hence, combining $h_0\in\mathcal{R}[(\T^*\T)^{b/2}]$ and Lemma \ref{lem: Bounds on Compact Operator}\ref{lem: Bounds on Compact Operator: vi} yields  $\|h_\lambda - h_0\|_{\mathcal{H}}=O(\lambda^{(b\wedge 2)/2})$. This establishes the second part of the lemma and concludes the proof.  
	\end{proof}
	
	\begin{lemma}\label{lem: bounds for decomposition of hhat-h0 for asynorm}
		Let Assumption \ref{as: iid and T} and \ref{as: rkhs}(i) hold. Let  $\T$ be as defined in Section \ref{sec:asymptotic_analysis}, $\widehat \T$ as in Lemma  \ref{lem: compactness, and expressions of h0, hhat}\ref{lem: compactness, and expressions of h0, hhat:: hhat and betahat}, and $\widehat \P$ the projection operator onto $\mathcal{R}(\widehat \B)$, the range of $\widehat \B$. Let $(\widehat \B,\widehat s)$ be as in Equation \eqref{eq: Ahat and shat}, 
		$\widehat r:=(\I-\widehat \P)\widehat s$, and $h_\lambda=(\T^*\T+\lambda \I)^{-1}\T^*\T h_0$. Then,
		\begin{enumerate}[label=(\roman*)]
			\item \label{lem: bounds for decomposition of hhat-h0 for asynorm:: Theta 2} 
			\begin{align*}
				\!\!\!\!\!\!\!
				\left<(\T^*\T+\lambda \I)^{-1}(\widehat \T^* -\T^*)(\widehat r-\widehat \T h_0),g\right>_\mathcal{H}=O_P\Big(\|\widehat \T^*-\T^*\|_{op}\,\|\widehat r - \widehat \T h_0\|_\mu\,\|(\lambda\I+\T^*\T)^{-1}g\|_\mathcal{H}\Big)\, .
			\end{align*}
			\item \label{lem: bounds for decomposition of hhat-h0 for asynorm:: Theta 3} 
			\begin{align*}
				\!\!\!\!\!\!\!
				\left<[(\widehat \T^* \widehat \T + \lambda \I)^{-1}-( \T^*  \T + \lambda \I)^{-1}]\widehat \T^*(\widehat r - \widehat \T h_0), g\right>_\mathcal{H}=&\;O_P\Big( \|\widehat r - \widehat \T h_0\|_\mu\cdot\|\widehat \T-\T\|_{op}\cdot\\
				&\Big[ \|(\widehat \T^* \widehat \T + \lambda \I)^{-1}\widehat \T^*\|_{op}\,\|\T( \T^*  \T + \lambda \I)^{-1}g\|_\mu\\
				&+ \|\widehat \T(\widehat \T^* \widehat \T + \lambda \I)^{-1}\widehat \T^*\|_{op}\,\|( \T^*  \T + \lambda \I)^{-1}g\|_\mathcal{H}\Big]\Big)\, .
			\end{align*}
			\item \label{lem: bounds for decomposition of hhat-h0 for asynorm:: Theta 4} If $g\in\mathcal{R}(\T^*\T)$, 
			\begin{align*}
				\!\!\!\!\!\!\!
				\left<(\widehat \T^* \widehat \T + \lambda \I)^{-1}\widehat \T^* \widehat \T h_0 - h_\lambda, g\right>_\mathcal{H}=O_P\Big(&\|(\widehat \T^* \widehat \T + \lambda \I)^{-1}\widehat \T^* \widehat \T h_0-h_\lambda\|_\mathcal{H}\,\|\widehat \T - \T\|_{op}\\
				&+\|\widehat \T [(\widehat \T^* \widehat \T + \lambda \I)^{-1}\widehat \T^* \widehat \T h_0-h_\lambda]\|_\mu\Big)
			\end{align*}
			with 
			\begin{align*}
				\|(\widehat \T^* \widehat \T + \lambda \I)^{-1}\widehat \T^* \widehat \T h_0 - h_\lambda\|_\mathcal{H}\leq \lambda \cdot\|\widehat \T-\T\|_{op}\cdot &\|(\T^*\T+\lambda\I)^{-1}h_0\|_\mathcal{H}\cdot\\
				&\Big[\|(\widehat \T^* \widehat \T + \lambda \I)^{-1}\widehat \T^*\|_{op}+\|(\widehat \T^* \widehat \T + \lambda \I)^{-1}\|_{op} \|\T\|_{op}\Big]
			\end{align*}
			and 
			\begin{align*}
				\|\widehat \T [(\widehat \T^* \widehat \T + \lambda \I)^{-1}\widehat \T^* \widehat \T h_0 - h_\lambda] \|_\mu\leq \lambda &\cdot \|\widehat \T-\T\|_{op}\cdot \|(\T^*\T+\lambda\I)^{-1}h_0\|_\mathcal{H}\\
				&\cdot\Big[\|\widehat \T(\widehat \T^* \widehat \T + \lambda \I)^{-1}\widehat \T^*\|_{op}+\|\T\|_{op}\|\widehat \T(\widehat \T^* \widehat \T + \lambda \I)^{-1}\|_{op}\Big]\, .
			\end{align*}
		\end{enumerate}
	\end{lemma}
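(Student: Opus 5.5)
Each of the three parts reduces to moving operators across the inner product $\left<\cdot,\cdot\right>_\mathcal{H}$ through adjoints, followed by Cauchy--Schwarz and $\|\mathrm{C}\mathrm{D}\|_{op}\le\|\mathrm{C}\|_{op}\|\mathrm{D}\|_{op}$. Throughout I write $R_\lambda:=(\T^*\T+\lambda\I)^{-1}$ and $\widehat R_\lambda:=(\widehat\T^*\widehat\T+\lambda\I)^{-1}$. Both are self-adjoint and bounded with norm at most $\lambda^{-1}$, and $\|\widehat\T^*-\T^*\|_{op}=\|\widehat\T-\T\|_{op}$.

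\emph{Part (i).} Self-adjointness of $R_\lambda$ gives
\[
\left<R_\lambda(\widehat\T^*-\T^*)(\widehat r-\widehat\T h_0),g\right>_\mathcal{H}=\left<(\widehat\T^*-\T^*)(\widehat r-\widehat\T h_0),R_\lambda g\right>_\mathcal{H}.
\]
Applying Cauchy--Schwarz and the operator norm bound then yields the stated order directly.

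\emph{Part (ii).} I would start from the resolvent identity
\[
\widehat R_\lambda-R_\lambda=\widehat R_\lambda(\T^*\T-\widehat\T^*\widehat\T)R_\lambda,
\]
and split the difference as
\[
\T^*\T-\widehat\T^*\widehat\T=\T^*(\T-\widehat\T)+(\T^*-\widehat\T^*)\widehat\T .
\]
Write $v:=\widehat\T^*(\widehat r-\widehat\T h_0)$. The target inner product becomes
\[
\left<\widehat R_\lambda\T^*(\T-\widehat\T)R_\lambda v,g\right>+\left<\widehat R_\lambda(\T^*-\widehat\T^*)\widehat\T R_\lambda v,g\right>.
\]
The natural move is to use the original ordering. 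Move $\widehat R_\lambda\T^*$ and $\widehat R_\lambda$ onto $g$, and pull the factor $\widehat\T^*$ inside $v$ next to $\widehat R_\lambda$, so that the quantities that appear are $\T R_\lambda g$ and $R_\lambda g$ together with $\widehat R_\lambda\widehat\T^*$ and $\widehat\T\widehat R_\lambda\widehat\T^*$. The cleanest way to get exactly the paper's groupings is to use the equivalent identity
\[
\widehat R_\lambda-R_\lambda=R_\lambda(\T^*\T-\widehat\T^*\widehat\T)\widehat R_\lambda,
\]
with the same splitting. Taking adjoints, the two terms are
\[
\left<(\T-\widehat\T)\widehat R_\lambda\widehat\T^*(\widehat r-\widehat\T h_0),\T R_\lambda g\right>_\mu
\quad\text{and}\quad
\left<\widehat\T\widehat R_\lambda\widehat\T^*(\widehat r-\widehat\T h_0),(\T-\widehat\T)R_\lambda g\right>_\mu .
\]
Cauchy--Schwarz on each gives the two bracketed products, each multiplied by $\|\widehat r-\widehat\T h_0\|_\mu\|\widehat\T-\T\|_{op}$.

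\emph{Part (iii).} Write $g=\T^*\T f_0$ and $\Delta:=\widehat R_\lambda\widehat\T^*\widehat\T h_0-h_\lambda$. Then
\[
\left<\Delta,g\right>_\mathcal{H}=\left<\T\Delta,\T f_0\right>_\mu=\left<(\T-\widehat\T)\Delta,\T f_0\right>_\mu+\left<\widehat\T\Delta,\T f_0\right>_\mu .
\]
This gives the first claimed bound, with constant $\|\T f_0\|_\mu$.

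For the bound on $\|\Delta\|_\mathcal{H}$, I would use
\[
\widehat R_\lambda\widehat\T^*\widehat\T=\I-\lambda\widehat R_\lambda
\quad\text{and}\quad
R_\lambda\T^*\T=\I-\lambda R_\lambda .
\]
These give
\[
\Delta=\lambda(R_\lambda-\widehat R_\lambda)h_0=\lambda\widehat R_\lambda(\widehat\T^*\widehat\T-\T^*\T)R_\lambda h_0 .
\]
Now split
\[
\widehat\T^*\widehat\T-\T^*\T=\widehat\T^*(\widehat\T-\T)+(\widehat\T^*-\T^*)\T .
\]
This yields
\[
\|\Delta\|_\mathcal{H}\le\lambda\|\widehat\T-\T\|_{op}\|R_\lambda h_0\|_\mathcal{H}\big[\|\widehat R_\lambda\widehat\T^*\|_{op}+\|\widehat R_\lambda\|_{op}\|\T\|_{op}\big].
\]
Applying $\widehat\T$ to the same expression and splitting in the same way gives the bound on $\|\widehat\T\Delta\|_\mu$, with factors $\|\widehat\T\widehat R_\lambda\widehat\T^*\|_{op}$ and $\|\T\|_{op}\|\widehat\T\widehat R_\lambda\|_{op}$.

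The main obstacle is part (ii): getting the correct ordering of the resolvent identity so that the paper's quantities $\T R_\lambda g$ and $\widehat\T\widehat R_\lambda\widehat\T^*$ appear. If my first choice of ordering fails, I will try the other one.
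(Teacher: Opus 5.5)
Your proposal is correct and is essentially the standard argument that the paper delegates to the cited lemma of Beyhum et al. It uses self-adjointness of the resolvents, the identities $\widehat R_\lambda\widehat\T^*\widehat\T=\I-\lambda\widehat R_\lambda$ and $R_\lambda\T^*\T=\I-\lambda R_\lambda$, the resolvent identity with your splittings, and Cauchy--Schwarz, all of which check out term by term. In part (ii) you can drop the hedge: the ordering $\widehat R_\lambda-R_\lambda=R_\lambda(\T^*\T-\widehat\T^*\widehat\T)\widehat R_\lambda$ gives exactly the stated groupings, whereas the other ordering produces mixed factors such as $R_\lambda\widehat\T^*$ and $\T\widehat R_\lambda g$, so commit to the second one.
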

	\begin{proof}
		The proofs of these results are contained in the proof of \citet[Lemma 2.2]{beyhum2024testing}.
	\end{proof}
	
	\begin{lemma}\label{lem: ase}
		Let $\widehat f:\mathcal{Z}\to \mathbb{R}$ be a data-dependent function and $f_0:\mathcal{Z}\to \mathbb{R}$ be a deterministic function. Assume that $\sup_{z\in\mathcal{Z}}|\widehat f(z)- f_0(z)|=o_P(1)$ and $\Pr(\widehat f\in\mathcal{G})\rightarrow 1$ for a deterministic class of functions $\mathcal{G}$ such that $\log N_{[\,]}(\epsilon,\mathcal{G},L^2(P))\leq C \epsilon ^{-v}$ with $v\in(0,2)$. Then,
		\begin{enumerate}[label=(\roman*)]
			\item \label{lem: ase:: sample}
			\begin{equation*}
				\sqrt{n}(\E_n -\E)[\widehat f(Z_i) - f_0(Z_i)]=o_P(1)\, ,
			\end{equation*}
			where $\operatorname{E}\widehat f(Z_i)=\int_{ }\widehat f(z) d P_Z(z)$ and $P_Z$ is the probability distribution of $Z$.
			\item \label{lem: ase:: bootstrap} If moreover $\{\zeta_i:i=1,\ldots,n\}$ are random variables independent from $\{Z_i:i=1,\ldots,n\}$ with $\operatorname{E}\zeta^2<\infty$, then 
			\begin{equation*}
				\sqrt{n}(\E_n -\E)\zeta_i[\widehat f(Z_i) - f_0(Z_i)]=o_P(1)\, .
			\end{equation*}
		\end{enumerate}
	\end{lemma}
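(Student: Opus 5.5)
The plan is to reduce both parts to asymptotic stochastic equicontinuity of a suitable empirical process indexed by the class $\mathcal{G}$, combined with consistency of $\widehat f$ in a metric that controls the variance of the process. Define $\mathcal{G}_0:=\{f-f_0:\,f\in\mathcal{G}\}$. It has the same bracketing entropy as $\mathcal{G}$ up to constants, since shifting by a fixed function maps brackets to brackets of the same size; if $f_0\notin\mathcal{G}$ we simply enlarge $\mathcal{G}$ by $\{f_0\}$ and $\{f_0\}\cup\mathcal{G}$ still satisfies the entropy bound. Because $v<2$, the bracketing integral $\int_0^1\sqrt{\log N_{[\,]}(\epsilon,\mathcal{G}_0,L^2(P))}\,d\epsilon\le\int_0^1 C^{1/2}\epsilon^{-v/2}d\epsilon$ is finite. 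The class must also have a square-integrable envelope; I would restrict to $\{f\in\mathcal{G}_0:\sup_z|f(z)|\le 1\}$, which $\widehat f-f_0$ belongs to with probability approaching one by the uniform consistency assumption. By \citet[Theorem 19.5]{van2000asymptotic} this class is $P$-Donsker.

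For part (i), write $\mathbb{G}_n f:=\sqrt{n}(\E_n-\E)f(Z_i)$. Donskerness gives asymptotic equicontinuity: for every $\eta>0$ there is $\delta>0$ such that $\limsup_n \Pr\big(\sup_{f\in\mathcal{G}_0,\ \|f\|_{L^2(P)}<\delta}|\mathbb{G}_n f|>\eta\big)<\eta$. Since $\|\widehat f-f_0\|_{L^2(P)}\le\sup_z|\widehat f(z)-f_0(z)|=o_P(1)$ and $\widehat f-f_0\in\mathcal{G}_0$ with probability approaching one, the event $\{\widehat f-f_0\in\mathcal{G}_0,\ \|\widehat f-f_0\|_{L^2(P)}<\delta\}$ has probability tending to one. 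On that event $|\mathbb{G}_n(\widehat f-f_0)|$ is bounded by the supremum above, so $\mathbb{G}_n(\widehat f-f_0)=o_P(1)$. The point needing care is that $\widehat f$ is data-dependent, so $\E$ must be read as integration against $P_Z$ with $\widehat f$ held fixed, as in the statement. The sup bound makes this legitimate without any sample splitting; outer probabilities handle measurability.

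For part (ii), I would apply the same argument to the multiplier class $\{(\zeta,z)\mapsto\zeta f(z): f\in\mathcal{G}_0\}$ under the joint law of $(\zeta,Z)$. Here I am reading the $\zeta_i$ as i.i.d., which is how the lemma is used with $\xi_i$. Since $\zeta$ is independent of $Z$, a bracket $[l,u]$ for $f$ yields the bracket $[\zeta^+l-\zeta^-u,\ \zeta^+u-\zeta^-l]$ for $\zeta f$, with $L^2$ size at most $(\E\zeta^2)^{1/2}\|u-l\|_{L^2(P)}$. Hence the entropy bound carries over with constants rescaled, and the envelope $|\zeta|$ is square integrable. The class is therefore Donsker, and $\|\zeta(\widehat f-f_0)\|_{L^2}\le(\E\zeta^2)^{1/2}\sup_z|\widehat f-f_0|=o_P(1)$, so the same equicontinuity step gives the claim. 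Alternatively, one could invoke the multiplier Donsker theorem of van der Vaart and Wellner, which needs only $\|\zeta\|_{2,1}<\infty$; that follows from $\E\zeta^{2+\delta}<\infty$, or from exponential weights, but not from $\E\zeta^2<\infty$ alone, so I prefer the direct bracketing construction.

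I expect the main obstacle to be the bookkeeping for the data-dependent index: making precise that the expectation $\E$ treats $\widehat f$ as fixed, restricting to the high-probability event, and checking that the bracketing transfer in (ii) is valid for signed multipliers. The decomposition $\zeta=\zeta^+-\zeta^-$ is the device I would use to handle the signed multipliers.
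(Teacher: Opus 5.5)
Your proposal is correct and follows essentially the paper's argument: the paper cites the bracketing maximal inequality \citet[Lemma 19.34]{van2000asymptotic}, while you go through the bracketing Donsker theorem and asymptotic equicontinuity, which package the same bound on a shrinking $L^2(P)$-ball around $f_0$. Your explicit bracket $[\zeta^+l-\zeta^-u,\ \zeta^+u-\zeta^-l]$ in part (ii), your reading of the $\zeta_i$ as i.i.d., and your addition of $f_0$ to $\mathcal{G}$ so that $0$ lies in the centered class fill in details the paper's one-line proof leaves implicit.
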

	\begin{proof}
		The proofs follow from a simple application of \citet[Lemma 19.34]{van2000asymptotic}.
	\end{proof}
	
	\begin{lemma}\label{lem: Bounds on Compact Operator}
		Let $\mathcal{X}$ and $\mathcal{Y}$ be two Hilbert spaces with norms $\|\cdot\|_\mathcal{X}$ and $\|\cdot\|_\mathcal{Y}$, respectively. Let $\operatorname{C}:\mathcal{X}\to \mathcal{Y} $ be a linear compact operator with singular system $(\widetilde{\mu}_j,\widetilde{\varphi}_j,\widetilde{\psi}_j)_j$, where $(\widetilde{\mu}_j)_j$ is a sequence of positive values in $\mathbb{R}$, $(\widetilde{\varphi}_j)_j$ is a sequence of orthonormal elements in $\mathcal{X}$, and $(\widetilde{\psi}_j)_j$ is a sequence of orthonormal elements in $\mathcal{Y}$. Let $\lambda\downarrow 0$. Then, \\
		\begin{enumerate}[label=(\roman*)]
			\item \label{lem: Bounds on Compact Operator: i} $$||\operatorname{C}(\lambda \I + \operatorname{C}^*\operatorname{C})^{-1}\operatorname{C}^*||_{op}\leq 1 \,.$$
			\item \label{lem: Bounds on Compact Operator: ii}$$||\lambda(\lambda \I + \operatorname{C}^* \operatorname{C})^{-1}||_{op}\leq 2 \, .$$ 
			\item \label{lem: Bounds on Compact Operator: iii} $$||(\lambda \I + \operatorname{C}^* \operatorname{C} )^{-1}\operatorname{C}^* ||_{op}\leq \frac{1}{2\sqrt{\lambda}}\, .$$ 
		\end{enumerate}
		\noindent If, moreover, $\phi\in\mathcal{R}[(\operatorname{C}^*\operatorname{C})^{\gamma/2}]$, that is,  $||\phi||_\gamma^2:=\sum_j\widetilde{\mu}_j^{-2\gamma}|\left<\phi,\widetilde{\varphi}_j\right> |^2<\infty$, then
		\begin{enumerate}[label=(\roman*), start=4]
			\item \label{lem: Bounds on Compact Operator: iv}  
			$$||\lambda(\lambda \I + \operatorname{C}^*\operatorname{C})^{-1}\phi||_\mathcal{X}
			=O(\lambda^{\frac{\gamma\wedge 2}{2}})\, .$$ 
			\item \label{lem: Bounds on Compact Operator: v}   $$||\lambda \operatorname{C} (\lambda \I + \operatorname{C}^*\operatorname{C})^{-1} \phi ||_\mathcal{Y}
			=O(\lambda^{\frac{\gamma+1}{2}\wedge 1 }) \, . $$
			\item \label{lem: Bounds on Compact Operator: vi} 
			$$||(\lambda \I + \operatorname{C}^*\operatorname{C})^{-1}\operatorname{C}^*\operatorname{C}\phi - \phi ||_\mathcal{X}=O(\lambda^{\frac{\gamma \wedge 2}{2}})\, .$$
			
		\end{enumerate} 
	\end{lemma}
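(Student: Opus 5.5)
The plan is to work entirely through the singular value decomposition of the compact operator $\operatorname{C}$ and reduce every claim to a scalar inequality in the singular values. Since $\operatorname{C}$ is compact with singular system $(\widetilde{\mu}_j,\widetilde{\varphi}_j,\widetilde{\psi}_j)_j$, we have $\operatorname{C}\varphi=\sum_j\widetilde{\mu}_j\langle\varphi,\widetilde{\varphi}_j\rangle\widetilde{\psi}_j$ and $\operatorname{C}^*\psi=\sum_j\widetilde{\mu}_j\langle\psi,\widetilde{\psi}_j\rangle\widetilde{\varphi}_j$. Hence $(\lambda\I+\operatorname{C}^*\operatorname{C})^{-1}$ acts as multiplication by $(\lambda+\widetilde{\mu}_j^2)^{-1}$ on the span of $(\widetilde{\varphi}_j)_j$, and as multiplication by $\lambda^{-1}$ on $\mathcal{N}(\operatorname{C})=\overline{\operatorname{span}}(\widetilde{\varphi}_j)^\perp$.

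\textbf{Bounds (i)--(iii).} Each operator is diagonal in the singular basis, so its norm is bounded by the supremum over $\mu\ge0$ of a scalar function:
\begin{itemize}
\item for (i), $\mu^2/(\lambda+\mu^2)\le1$;
\item for (ii), $\lambda/(\lambda+\mu^2)\le1\le2$, including the null-space component, where the multiplier is exactly $1$;
\item for (iii), $\mu/(\lambda+\mu^2)\le 1/(2\sqrt\lambda)$, by AM--GM in the form $\lambda+\mu^2\ge2\sqrt\lambda\,\mu$.
\end{itemize}
For (iii) the null-space part is killed by $\operatorname{C}^*$, since the range of $\operatorname{C}^*$ lies in the closed span of $(\widetilde{\varphi}_j)_j$.

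\textbf{Bounds (iv)--(vi).} Here I would use the source condition. The condition $\|\phi\|_\gamma<\infty$ forces $\phi$ to lie in the closed span of $(\widetilde{\varphi}_j)_j$. Write $\phi=\sum_j\widetilde{\mu}_j^{\gamma}c_j\widetilde{\varphi}_j$ with $\sum_jc_j^2=\|\phi\|_\gamma^2<\infty$.

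For (iv), $\|\lambda(\lambda\I+\operatorname{C}^*\operatorname{C})^{-1}\phi\|^2=\sum_j c_j^2\,[\lambda\widetilde{\mu}_j^{\gamma}/(\lambda+\widetilde{\mu}_j^2)]^2$. It therefore suffices to show $\sup_{\mu\in[0,\|\operatorname{C}\|_{op}]}\lambda\mu^\gamma/(\lambda+\mu^2)=O(\lambda^{(\gamma\wedge2)/2})$. I would split into two cases.
\begin{itemize}
\item If $\gamma\le2$, write $\lambda\mu^\gamma/(\lambda+\mu^2)=\lambda^{\gamma/2}\,(\mu^2/(\lambda+\mu^2))^{\gamma/2}\,(\lambda/(\lambda+\mu^2))^{1-\gamma/2}\le\lambda^{\gamma/2}$.
\item If $\gamma>2$, bound $\lambda\mu^\gamma/(\lambda+\mu^2)\le\lambda\mu^{\gamma-2}\le\lambda\|\operatorname{C}\|_{op}^{\gamma-2}$, using boundedness of the singular values.
\end{itemize}

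For (vi), note that $(\lambda\I+\operatorname{C}^*\operatorname{C})^{-1}\operatorname{C}^*\operatorname{C}\phi-\phi=-\lambda(\lambda\I+\operatorname{C}^*\operatorname{C})^{-1}\phi$, so (vi) is identical to (iv).

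For (v), the relevant multiplier is $\lambda\mu^{\gamma+1}/(\lambda+\mu^2)$. The same interpolation with exponent $(\gamma+1)$ in place of $\gamma$ gives $O(\lambda^{(\gamma+1)/2\wedge1})$: use $\lambda^{(\gamma+1)/2}$ when $\gamma+1\le2$, and $\lambda\|\operatorname{C}\|_{op}^{\gamma-1}$ otherwise. In every case the constant in the resulting bound is a multiple of $\|\phi\|_\gamma$, uniform in $\lambda$.

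\textbf{Main obstacle.} The only delicate point is bookkeeping rather than analysis: keeping track of the null-space component of $\mathcal{X}$ in (ii), and confirming that $\phi$ has no such component in (iv)--(vi). The interpolation inequality in (iv)/(v) is the one genuinely non-trivial step. If the factorization trick above is awkward to present, I would instead maximize $\mu\mapsto\lambda\mu^\gamma/(\lambda+\mu^2)$ directly via calculus, which yields the maximizer $\mu^2=\lambda\gamma/(2-\gamma)$ for $\gamma<2$.
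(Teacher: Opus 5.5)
Your proposal is correct and is the same singular-value (spectral calculus) argument behind the results the paper cites for this lemma instead of proving it (Florens et al., 2011, Lemma A.1; Carrasco et al., 2007, Proposition 3.11), including the reduction of (vi) to (iv) via $(\lambda\operatorname{I}+\operatorname{C}^*\operatorname{C})^{-1}\operatorname{C}^*\operatorname{C}\phi-\phi=-\lambda(\lambda\operatorname{I}+\operatorname{C}^*\operatorname{C})^{-1}\phi$. One small fix: derive the absence of an $\mathcal{N}(\operatorname{C})$-component of $\phi$ from $\phi\in\mathcal{R}[(\operatorname{C}^*\operatorname{C})^{\gamma/2}]\subset\mathcal{N}(\operatorname{C})^\perp$, not from $\|\phi\|_\gamma<\infty$; the sum defining $\|\phi\|_\gamma$ does not see such a component, and without excluding it (iv) would fail.
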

	\begin{proof}
		These inequalities are well known in the inverse problems literature. The proofs of parts $(i)$, $(ii)$, and $(iii)$ can be found in \citet[Lemma A.1]{florens2011identification}. The proofs of parts $(iv)$ and $(v)$ follow from arguments similar to those used in the proofs of \citet[Lemma A.1]{florens2011identification} and \citet[Theorem 3.1]{babii2022high}. Finally, the proof of part $(vi)$ can be found in \citet[Proposition 3.11]{carrasco2007linear}.
	\end{proof}
	
	\begin{lemma}
		\label{lem: rates and compacness}
		Under Assumptions  \ref{as: iid and T}, \ref{as: rkhs}(i), \ref{as: completeness}, and $h_0\in\mathcal{H}$,
		\begin{enumerate}[label=(\roman*)]
			\item \label{lem: rates and compacness:: compactness of A and T} the operators $\A$ and $\T$ are compact;
			\item \label{lem: rates and compacness:: injectivity of T} $\T$ is injective;
			\item \label{lem: rates and compacness:: Bhat rate} $\|\widehat{\B}-\B\|_{op}=O_p(n^{-1/2})$;
			\item \label{lem: rates and compacness:: Bhatstar Bhat inverse rate} $\|\widehat{\B}^*\widehat{\B}- \B^*
			\B\|_{op} = O_p(n^{-1/2})$ and  $\|(\widehat{\B}^*\widehat{\B})^{-1}- (\B^*
			\B)^{-1}\|_{op} = O_p(n^{-1/2})$; 
			\item \label{lem: rates and compacness:: expression of P and Phat} $\P=\B(\B^*\B)^{-1}\B^*$ and $\widehat \P=\widehat\B(\widehat\B^*\widehat\B)^{-1}\widehat\B^*$ with probability approaching one;
			\item \label{lem: rates and compacness:: Ahat rate}  $\|\widehat{\A}-\A\|_{op}=O_p(n^{-1/2})$;
			\item \label{lem: rates and compacness:: compactness of That} the operator $\widehat{\T}$ is compact;
			\item \label{lem: rates and compacness:: That rate} $\|\widehat{\T}-\T\|_{op}=O_p(n^{-1/2})$;
			\item \label{lem: rates and compacness:: shat rate}  $\left\|\widehat{s}-s\right\|_\mu=O_p(n^{-1/2})$;
			\item  \label{lem: rates and compacness:: (I - Phat) shat - That h rate}  $\|(\I-\widehat{\P})\widehat{s}-\widehat{\T}h_0\|_\mu=O_p(n^{-1/2})$.
		\end{enumerate}
	\end{lemma}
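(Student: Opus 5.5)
Most parts reduce to writing each operator as an empirical mean of i.i.d. random elements of a Hilbert space; rates then follow from a Hilbert-space CLT/Chebyshev bound. I would proceed in the order (i), (ii), (iii)/(vi), (ix), (iv), (v), (vii), (viii), (x).

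\textbf{Compactness and injectivity.} For (i), I use the reproducing property: $\A h(t)=\E\{\langle h,K(\cdot,Z)\rangle_{\mathcal H}\exp(\textbf{i}(X^T,W)t)\}$. So $\A$ is the expectation of the rank-one operators $h\mapsto \langle h,K(\cdot,Z)\rangle_{\mathcal H}\,e_{X,W}$, with $e_{X,W}(t):=\exp(\textbf{i}(X^T,W)t)$. Its Hilbert--Schmidt norm is bounded by $\E\{\|K(\cdot,Z)\|_{\mathcal H}\|e_{X,W}\|_\mu\}\le \sup_{z\in\mathcal Z}K(z,z)^{1/2}$. This supremum is finite because $K$ is continuous (it is differentiable) on the bounded set $\mathcal Z$. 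Hence $\A$ is Hilbert--Schmidt and therefore compact. Since $\I-\P$ is bounded, $\T=(\I-\P)\A$ is compact as well.

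For (ii), suppose $\T h=0$. Then $\A h=\P\A h\in\mathcal R(\B)$, so $\A h\in\mathcal R(\A)\cap\mathcal R(\B)=\{0\}$ by Assumption~\ref{as: completeness}(ii). Injectivity of $\A$ then gives $h=0$.

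\textbf{Rates for the empirical operators.} For (iii), (vi) and (ix), write $\widehat\A-\A=(\E_n-\E)\,\mathcal K_i$ with $\mathcal K_i:=\langle\cdot,K(\cdot,Z_i)\rangle_{\mathcal H}e_{X_i,W_i}$, an i.i.d. element of the Hilbert space of Hilbert--Schmidt operators. Then
\[
\E\|\widehat\A-\A\|_{HS}^2\le n^{-1}\E\|\mathcal K_1\|_{HS}^2\le n^{-1}\sup_zK(z,z),
\]
and the operator norm is dominated by the HS norm, which gives (vi). Similarly $\E\|\widehat s-s\|_\mu^2\le n^{-1}\E Y^2$, which gives (ix). For (iii), $\B$ acts on $\mathbb R^p$, so it is enough to bound $\E\|\widehat\B-\B\|^2\le n^{-1}\E\|X\|^2$; the square-integrability in Assumption~\ref{as: iid and T}(i) suffices.

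\textbf{Projections and derived quantities.} For (iv), expand
\[
\widehat\B^*\widehat\B-\B^*\B=(\widehat\B^*-\B^*)\widehat\B+\B^*(\widehat\B-\B),
\]
using $\|\widehat\B^*-\B^*\|_{op}=\|\widehat\B-\B\|_{op}$. The inverse bound follows from the identity $A^{-1}-B^{-1}=A^{-1}(B-A)B^{-1}$, because $\B^*\B$ is an invertible $p\times p$ matrix ($\B$ is injective) and $\widehat\B^*\widehat\B$ is invertible with probability approaching one by continuity of eigenvalues.

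For (v), the range $\mathcal R(\B)$ is finite-dimensional. A direct check shows that $\B(\B^*\B)^{-1}\B^*$ is self-adjoint, idempotent, has range $\mathcal R(\B)$ and annihilates $\mathcal R(\B)^\perp$; hence it equals $\P$. The same argument applies to $\widehat\P$ on the event that $\widehat\B^*\widehat\B$ is invertible.

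For (vii), $\widehat\A$ has finite rank $\le n$, so $\widehat\T$ is finite rank and thus compact.

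For (viii), I split $\widehat\T-\T=(\I-\widehat\P)(\widehat\A-\A)-(\widehat\P-\P)\A$. The first term is controlled by (vi), since $\|\I-\widehat\P\|_{op}\le1$. The second is controlled by the three-term expansion of $\widehat\P-\P$ from (iii)--(iv), which gives $O_P(n^{-1/2})$.

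\textbf{Main step: (x).} Here I use $s=\A h_0+\B\beta_0$ (Lemma~\ref{lem: compactness, and expressions of h0, hhat}(i)) and the fact that $(\I-\widehat\P)\widehat\B=0$ on the event where $\widehat\P$ has the form in (v). This lets me write
\[
(\I-\widehat\P)\widehat s-\widehat\T h_0=(\I-\widehat\P)\bigl[(\widehat s-s)-(\widehat\A-\A)h_0-(\widehat\B-\B)\beta_0\bigr]+(\I-\widehat\P)\B\beta_0 .
\]
Using $(\I-\widehat\P)\B\beta_0=(\I-\widehat\P)\widehat\B\beta_0-(\I-\widehat\P)(\widehat\B-\B)\beta_0$, the last term also reduces to $(\widehat\B-\B)\beta_0$. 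Every remaining piece is then $O_P(n^{-1/2})$ by (iii), (vi) and (ix).

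I expect the main obstacle to be the integrability checks behind the compactness and Hilbert--Schmidt arguments. In particular, I must ensure $\sup_zK(z,z)<\infty$ and that the ``with probability approaching one'' qualifiers for $\widehat\P$ are correctly propagated through (viii) and (x).
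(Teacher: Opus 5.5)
Your proposal is correct. It reaches the same bounds as the paper, namely $n^{-1}\E K(Z,Z)$, $n^{-1}\E\|X\|^2$ and $n^{-1}\E Y^2$, but by a partly different route.

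\textbf{Where your route differs.}
\begin{itemize}
\item For (i) and (vi), the paper goes through Mercer's theorem (Lemma~\ref{lem: RKHS results}). It writes $h=\sum_j\beta_j\widetilde\phi_j$, transfers $\A$ to an operator $\widetilde\A$ on $\ell^2$, justifies swapping limit and expectation by dominated convergence, and checks the Hilbert--Schmidt property coordinatewise. You instead use the reproducing property to write $\A$ and $\widehat\A$ as (empirical) expectations of the rank-one operators $\langle\cdot,K(\cdot,Z)\rangle_{\mathcal H}\,e_{X,W}$. Their Hilbert--Schmidt norm is exactly $K(Z,Z)^{1/2}$, because $\|e_{X,W}\|_\mu=1$. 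This is basis-free and avoids the eigen-expansion machinery. It also turns (vi) into a one-line variance bound for i.i.d.\ Hilbert-space-valued elements; you need $\mathcal H$ separable for measurability, which holds here.
\item Finiteness of $\sup_z K(z,z)$ needs $\mathcal Z$ compact, not merely bounded. That holds because a support is closed, exactly as the paper uses.
\item For (vii), your finite-rank observation is more elementary than the paper's repetition of the Hilbert--Schmidt argument with $\E_n$.
\item For (iv)--(v), you give direct arguments where the paper cites \citet{beyhum2023one}.
\item For (x), the paper uses $(\I-\P)s=\T h_0$ to write $(\I-\widehat\P)\widehat s-\widehat\T h_0=(\P-\widehat\P)\widehat s+(\I-\P)(\widehat s-s)-(\widehat\T-\T)h_0$. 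It therefore needs the perturbation bound $\|\widehat\P-\P\|_{op}=O_P(n^{-1/2})$ and part (viii). Your idea of exploiting $(\I-\widehat\P)\widehat\B=0$ needs only $\|\I-\widehat\P\|_{op}\le 1$ together with (iii), (vi) and (ix), which is more economical.
\end{itemize}

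\textbf{A slip in your display for (x).} The identity as written is off by one term. Since $s=\A h_0+\B\beta_0$, one has $\widehat s-\widehat\A h_0=\bigl[(\widehat s-s)-(\widehat\A-\A)h_0-(\widehat\B-\B)\beta_0\bigr]+\widehat\B\beta_0$. So the trailing term should be $(\I-\widehat\P)\widehat\B\beta_0=0$, not $(\I-\widehat\P)\B\beta_0$. As written, you count $-(\I-\widehat\P)(\widehat\B-\B)\beta_0=(\I-\widehat\P)\B\beta_0$ twice.

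This does not affect the conclusion, since the spurious term is itself $O_P(n^{-1/2})$. Also, $(\I-\widehat\P)\widehat\B=0$ holds by the very definition of $\widehat\P$ as the projection onto $\mathcal R(\widehat\B)$. You do not need to restrict to the event on which the formula in (v) holds.
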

	
	\begin{proof}
		$(i)$ Let us first show that $\A$ is a compact operator. Pick any $h\in\mathcal{H}$, and recall that $\ell^2$ is the space of square-summable sequences. By Lemma \ref{lem: RKHS results}\ref{lem: RKHS results: equality of H and H eig space},  there exists a sequence $(\beta_j)_j\in\ell^2$ such that $h(z)=\sum_{j=1}^\infty \beta_j \widetilde \phi_j(z)$ pointwise in $z\in\mathcal{Z}$, where $\widetilde \phi_j(z)=\sqrt{\eta_j}\phi_j(z)$, $(\eta_j)_j$ is a sequence of positive scalars, and $(\phi_j)_j$ is a sequence of functions defined on $\mathcal{Z}$. Note that $\sum_{j=1}^\infty \beta_j \widetilde \phi_j(z)$ is a well-defined limit. In fact, by Lemma \ref{lem: RKHS results}\ref{lem: RKHS results: upper bound for l2 sum}, we have $\sum_{j=1}^\infty|\widetilde \phi_j(z)|^2=\sum_{j=1}^\infty\eta_j| \phi_j(z)|^2\leq K(z,z)$. Thus, both sequences $(\widetilde \phi_j(z))_j$ and $(\beta_j)_j$ belong to $\ell^2$, and, denoting by $\left<\cdot,\cdot,\right>_{\ell^2}$ the inner product on $\ell^2$, we obtain that $\sum_{j=1}^\infty \beta_j \widetilde \phi_j(z)=\left<(\beta_j)_j,(\widetilde \phi_j(z))_j\right>_{\ell^2}$ is well defined. Then (see the comments below),    
		\begin{align}\label{eq: from A to Atilde}
			(\A h)(t)=&\operatorname{E}h(Z_i) \exp(\textbf{i}(X_i^T,W_i)t)=\operatorname{E}\lim_{N\rightarrow \infty}\sum_{j=1}^N \beta_j\widetilde \phi_j(Z_i) \exp(\textbf{i}(X_i^T,W_i)t)\nonumber\\
			&\overset{(a)}{=}\lim_{N\rightarrow \infty} \sum_{j=1}^N \beta_j \operatorname{E}\widetilde \phi_j(Z_i) \exp(\textbf{i}(X_i^T,W_i)t)=: \widetilde{\A}[(\beta_j)_j](t)\, .
		\end{align}
		To justify the interchange of $\lim_{N\rightarrow \infty}$ and $\operatorname{E}$ in equality $(a)$, observe that by $|\exp(\textbf{i}(X_i^T,W_i)t)|\leq 1$ and the Cauchy-Schwarz inequality, we have $|\sum_{j=1}^N \beta_j \widetilde \phi_j(Z_i) \exp(\textbf{i}(X_i^T,W_i)t)|$ $\leq (\sum_{j=1}^N \beta_j^2)^{1/2}$ $ [\sum_{j=1}^N \widetilde \phi_j(Z_i)^2]^{1/2}$ $ \leq (\sum_{j=1}^\infty \beta_j^2)^{1/2} [\sum_{j=1}^\infty \widetilde \phi_j(Z_i)^2]^{1/2}$. As shown earlier in this proof,  \([\sum_{j=1}^\infty \widetilde \phi_j(Z_i)^2]^{1/2}\)  \(\leq K(Z_i,Z_i)^{1/2}\), where $\operatorname{E}K(Z_i,Z_i)^{1/2}<\infty$ by the continuity of $K$ and the compactness of $\mathcal{Z}$ (see Assumptions \ref{as: iid and T}(iii) and  \ref{as: rkhs}(i)). Thus, by the Lebesgue Dominated Convergence Theorem, we can interchange  $\lim_{N\rightarrow \infty}$ and $\operatorname{E}$ and obtain equality $(a)$ of the above display.\\
		Next, note that $\widetilde{\A}:\ell^2\to L^2_\mu$. Indeed, since $\sum_{j=1}^\infty |\operatorname{E}\widetilde{\phi}_j(Z_i)\exp(\textbf{i}(X_i^T,W_i)t)|^2$ $\leq \sum_{j=1}^\infty \operatorname{E}|\widetilde \phi_j(Z_i)|^2$ $\leq \operatorname{E}K(Z_i,Z_i)<\infty$, we have
		\begin{align*}
			\|\widetilde{\A}[(\beta_j)_j]\|^2_\mu=&=\int_{ }|\sum_{j=1}^\infty \beta_j \operatorname{E}\widetilde{\phi}_j(Z_i)\exp(\textbf{i}(X_i^T,W_i)t)|^2 d\mu(t) \\
			\leq &\int_{ }\sum_{j=1}^\infty \beta_j^2 \sum_{j=1}^\infty |\operatorname{E}\widetilde \phi_j(Z_i) \exp(\textbf{i}(X_i^T,W_i)t)|^2 d\mu(t)<\infty \, .
		\end{align*}
		Let us now show that $\widetilde{\A}:\ell^2\to L^2_\mu$ is a compact operator.  Since every Hilbert-Schmidt operator is  compact (see \citet[Theorem 2.32]{carrasco2007linear}), it suffices to prove that $\widetilde{\A}$ is Hilbert-Schmidt. By \citet[Definition 2.30]{carrasco2007linear}, $\widetilde{\A}$ is Hilbert-Schmidt if for a complete orthonormal system $\{(\gamma_{s,j})_j:s=1,2,\ldots\}\subset\ell^2$, we have $\sum_{s=1}^\infty \|\widetilde{\A}[(\gamma_{s,j})_j]\|^2_\mu<\infty$. By Perseval's formula (see \citet[Theorem 2.8]{carrasco2007linear}), $\{(\gamma_{s,j})_j:s=1,2,\ldots\}$ is a complete orthonormal set in $\ell^2$ if $\sum_{s=1}^\infty (\sum_{j=1}^\infty \gamma_{s,j} \beta_j)^2=\sum_{j=1}^\infty \beta_j^2$ for any $(\beta_j)_j\in\ell^2$. Now, setting $\gamma_{s,j}=1$ if $s=j$ and $\gamma_{s,j}=0$ if $s\neq j$, we see that $\{(\gamma_{s,j})_j:s=1,2,\ldots\}$ is indeed a complete orthonormal set in $\ell^2$. Then, 
		\begin{align*}
			\sum_{s=1}^\infty \|\widetilde{\A}[(\gamma_{s,j})_j]\|^2_\mu=&\sum_{s=1}^\infty \int_{ } \Big| \sum_{j=1}^\infty \gamma_{s,j} \operatorname{E}\widetilde \phi_j(Z_i)\exp(\textbf{i}(X_i^T,W_i)t)\Big|^2 d \mu(t)\\
			=&\sum_{s=1}^\infty \int_{ } \Big| \operatorname{E}\widetilde \phi_s(Z_i)\exp(\textbf{i}(X_i^T,W_i)t) \Big|^2 d \mu(t)\\
			\leq& \sum_{s=1}^\infty   \operatorname{E}|\widetilde \phi_s(Z_i)|^2 \\
			\leq & \operatorname{E}K(Z_i,Z_i)<\infty\, . 
		\end{align*}
		This shows that $\widetilde{\A}$ is Hilbert- Schmidt and hence compact. We now use this result to prove that $\A$ is also compact. By \citet[Theorem 2.13]{kress1999linear}, the operator $\A:\mathcal{H}\to L^2_\mu$  is compact if and only if, for every bounded sequence $(h_s)_s$ in $\mathcal{H}$, the sequence $(\A h_s)_s$ admits a convergent subsequence in $L^2_\mu$. Using Lemma \ref{lem: RKHS results}\ref{lem: RKHS results: H eig space}\ref{lem: RKHS results: equality of H and H eig space}, we have $h_s(z)=\sum_{j=1}^\infty \beta_{s,j} \widetilde \phi_j(z)$ pointwise in $z\in\mathcal{Z}$, for some coefficients $(\beta_{s,j})_j\in \ell^2$, and $\|h_s\|^2_\mathcal{H}=\left<h_s,h_s\right>_{\mathcal{H}}$ $=\left<h_s,h_s\right>_{EIG}$ $=\sum_{j=1}^\infty (\beta_{s,j})^2$. Thus, the bounded sequence  $(h_s)_s$ in $\mathcal{H}$ corresponds to a bounded sequence $\{(\beta_{s,j})_j:s=1,2,\ldots\}$ in $\ell^2$. Since $\widetilde{\A}:\ell^2\to L^2_\mu$ is compact,  the sequence $\{\widetilde{\A}[(\beta_{s,j})_j]:s=1,2,\ldots\}$ has a convergent subsequence in $L^2_\mu$. Therefore, by \eqref{eq: from A to Atilde}, $(\A h_s)_s$ also admits a convergent subsequence. This shows that $\A$ is compact.\\
		We now establish the compactness of $\T=(\I-\P)\A$.
		By \citet[Theorem 13.3]{kress1999linear}, projection operators are bounded. Hence $\P$ is bounded, and so is $\I-\P$. Moreover, by \citet[Theorems 2.14 and 2.16]{kress1999linear}, the composition of a bounded and a compact operator is compact. Therefore, since $\A$ is compact, it follows that $(\I-\P)\A=\T$ is compact.  \\
		
		$(ii)$ To show that $\T$ is injective, it suffices to prove that $\T h=0 \Rightarrow h=0$. Since $\T =(\I - \P)\A $, the condition $\T h=0$ implies that $\A h=\P\A h$. As $\P$ is the projection operator onto $\mathcal{R}(\B)$, we have $\P \A h \in \mathcal{R}(\B)$. Hence, $\A h \in \mathcal{R}(\A) \cap\mathcal{R}(\B)$. By Assumption \ref{as: completeness}(ii), $\mathcal{R}(\A) \cap\mathcal{R}(\B)=\{0\}$, and therefore $\A h=0$. Finally, since $\A$ is injective by Assumption \ref{as: completeness}(i), it follows that $h=0$. \\
		
		$(iii)$ Note first that $\|(\widehat{\B} - \B)\beta\|^2_\mu=\int_{ }|(\operatorname{E}_n - \operatorname{E}) X_i^T\beta \exp(\textbf{i}(X_i^T,W_i)t)|^2 d\mu(t) $ $\leq \|\beta\|^2 \int_{ }\|(\operatorname{E}_n - \operatorname{E}) X_i $ $ \exp(\textbf{i}(X_i^T,W_i)t)\|^2$ $d \mu(t)$, where $\|\cdot\|$ denotes the Euclidean norm on $\mathbb{R}^p$. Thus, 
		\begin{align*}
			\operatorname{E}\|\widehat{\B} - \B\|_{op}^2=\operatorname{E}\sup_{\beta\in\mathbb{R}^p,\|\beta\|=1}\|(\widehat{\B} - \B)\beta\|^2_\mu\leq \operatorname{E} \int_{ }\|(\operatorname{E}_n - \operatorname{E}) X_i  \exp(\textbf{i}(X_i^T,W_i)t)\|^2 d \mu(t)\, .
		\end{align*}
		Without loss of generality, suppose that $X\in\mathbb{R}$. Then, the RHS of the above display equals
		\begin{align*}
			\int_{ }\operatorname{E}|(\operatorname{E}_n - \operatorname{E}) X_i \exp(\textbf{i}(X_i^T,W_i)t)|^2 d \mu(t)=&\int_{ }\operatorname{Var}\Big[\frac{1}{n}\sum_{i=1}^n X_i \exp(\textbf{i}(X_i^T,W_i)t) \Big] d \mu(t)\\
			=&\int_{ }\frac{1}{n}\operatorname{Var}\Big[ X_i \exp(\textbf{i}(X_i^T,W_i)t) \Big] d \mu(t)\\
			\leq&  \int_{ }  \frac{\operatorname{E}|X_i \exp(\textbf{i}(X_i^T,W_i)t)|^2}{n}d\mu(t)\\
			\leq &\frac{\operatorname{E}X_i^2}{n}\,,
		\end{align*}
		where $\operatorname{E}X_i^2<\infty$ by Assumption \ref{as: iid and T}(i). Hence, $\operatorname{E}\|\widehat{\B} - \B\|_{op}^2=O_p(n^{-1})$, which gives the desired result.\\
		
		$(iv)$ Since $\|\widehat{\B}-\B\|_{op}=O_P(n^{-1/2})$ by Part $(iii)$, the proof proceeds in the same manner as that of \citet[Lemma S3.2(c)]{beyhum2023one}. \\
		
		$(v)$ The proof that $\P=\B(\B^*\B)^{-1}\B^*$ is given in \citet[pages 194-195]{beyhum2023one}. The proof that $\widehat{\P}=\widehat{\B}(\widehat{\B}^*\widehat{\B})^{-1}\widehat{\B}^*$ with probability approaching one (wpa1) follows by analogous arguments, provided that $\widehat{\B}$ is injective wpa1. As established in the proof of Lemma \ref{lem: compactness, and expressions of h0, hhat}\ref{lem: compactness, and expressions of h0, hhat:: hhat and betahat}, $\widehat{\B}^*\widehat{\B}$ is injective wpa1, which implies $\mathcal{N}(\widehat{\B}\widehat{\B}^*)=\{0\}$ wpa1. Since $\mathcal{N}(\widehat{\B})=\mathcal{N}(\widehat{\B}\widehat{\B}^*)$, we conclude that $\mathcal{N}(\widehat{\B})=\{0\}$ wpa1. Hence, $\widehat{\B}$ is injective wpa1. \\
		
		$(vi)$ As noted in the proof of Part $(i)$, for any $h\in\mathcal{H}$ there exists a sequence $(\beta_j)_j\in\ell^2$ such that $h(z)=\sum_{j=1}^\infty \beta_j \widetilde{\phi}_j(z)$ pointwise in $Z\in\mathcal{Z}$ and $\|h\|^2_{\mathcal{H}}=\sum_{j=1}^\infty \beta_j^2$. Hence (see the comments below),
		\begin{align*}
			\|\widehat{\A}h - \A h\|^2_\mu=&\int_{ } \Big| \operatorname{E}_n\Big\{ \sum_{j=1}^\infty   \beta_j\widetilde \phi_j(Z_i) \exp(\textbf{i}(X_i^T,W_i)t)\Big\} - \operatorname{E}\Big\{\sum_{j=1}^\infty \beta_j \widetilde \phi_j(Z_i)\exp(\textbf{i}(X_i^T,W_i)t)\Big\} \Big|^2 d \mu(t)\\
			\overset{(a)}{=}&\int_{ } \Big|\sum_{j=1}^\infty \beta_j \operatorname{E}_n\big\{\widetilde \phi_j(Z_i) \exp(\textbf{i}(X_i^T,W_i)t)\big\} - \sum_{j=1}^\infty \beta_j \operatorname{E}\big\{\widetilde \phi_j(Z_i)\exp(\textbf{i}(X_i^T,W_i)t)\big\} \Big|^2 d \mu(t)\\
			=& \int_{ } \Big|\sum_{j=1}^\infty \beta_j (\operatorname{E}_n-\operatorname{E})\widetilde \phi_j(Z_i) \exp(\textbf{i}(X_i^T,W_i)t)\Big|^2 d \mu(t)\\
			\overset{(b)}{\leq} & \sum_{j=1}^\infty \beta_j^2 \int_{ }\sum_{j=1}^\infty\Big|(\operatorname{E}_n-\operatorname{E})\widetilde \phi_j(Z_i) \exp(\textbf{i}(X_i^T,W_i)t)\Big|^2 d \mu(t)\,.
		\end{align*}
		Here, equality $(a)$ follows from exchanging $\operatorname{E}_n$ with $\sum_{j=1}^\infty$, and $\operatorname{E}$ with $\sum_{j=1}^\infty$, as in \eqref{eq: from A to Atilde}. Inequality $(b)$ follows from Cauchy-Schwarz. 
		Then, using $\|h\|^2_{\mathcal{H}}=\sum_{j=1}^\infty \beta_j^2$ yields  
		\begin{align*}
			\operatorname{E}\|\widehat{\A} -&\A \|^2_{op}=\operatorname{E}\sup_{h\in\mathcal{H},\|h\|_{\mathcal{H}}=1}\|\widehat{\A} h -\A h \|^2_\mu\leq  \operatorname{E}\int_{ }\sum_{j=1}^\infty\Big|(\operatorname{E}_n-\operatorname{E})\widetilde \phi_j(Z_i) \exp(\textbf{i}(X_i^T,W_i)t)\Big|^2 d \mu(t)\\
			=& \int_{ }\sum_{j=1}^\infty \operatorname{E}\Big|(\operatorname{E}_n-\operatorname{E})\widetilde \phi_j(Z_i) \exp(\textbf{i}(X_i^T,W_i)t)\Big|^2 d \mu(t)= \int_{ }\sum_{j=1}^\infty \operatorname{Var}\Big[\operatorname{E}_n\widetilde \phi_j(Z_i) \exp(\textbf{i}(X_i^T,W_i)t)\Big] d \mu(t)\\
			=& \int_{ }\sum_{j=1}^\infty \frac{1}{n} \operatorname{Var}\Big[\widetilde \phi_j(Z_i) \exp(\textbf{i}(X_i^T,W_i)t)\Big] d \mu(t)\leq \int_{ }\sum_{j=1}^\infty \frac{1}{n} \operatorname{E}\Big|\widetilde \phi_j(Z_i) \exp(\textbf{i}(X_i^T,W_i)t)\Big|^2 d \mu(t)\\
			\leq& \frac{1}{n}  \sum_{j=1}^\infty \operatorname{E}|\widetilde \phi_j(Z_i)|^2\overset{(a)}{\leq} \frac{1}{n}\operatorname{E}K(Z_i,Z_i)\overset{(b)}{=}O(n^{-1}) \, ,
		\end{align*}
		where inequality $(a)$ follows from $\sum_{j=1}^\infty |\widetilde \phi_j(Z_i)|^2\leq K(Z_i,Z_i)$, as noted earlier, and equality $(b)$ follows from the compactness of $\mathcal{Z}$ and the continuity of $K$ (see Assumptions \ref{as: iid and T}(iii) and \ref{as: rkhs}(i)).
		Hence, the desired result follows.\\
		
		$(vii)$ The arguments used in the proof of Part $(i)$ remain valid  with $\widehat{\A}$ replacing $\A$, that is, with $\operatorname{E}_n$ replacing $\operatorname{E}$. Hence, $\widehat{\A}$ is a compact operator. Since $\widehat{\P}$ is the projection operator onto $\mathcal{R}(\widehat{\B})$, it is bounded, and so is $\I-\widehat{\P}$. By \citet[Theorem 2.14 and 2.16]{kress1999linear}, the composition of a bounded and a compact operator is compact. Therefore, $\widehat{\T}=(\I-\widehat{\P})\widehat{\A}$ is compact.\\
		
		$(viii)$ By Part $(v)$, $\widehat{\P}=\widehat{\B}(\widehat{\B}^*\widehat{\B})^{-1}\widehat{\B}^*$ and $\P=\B(\B^*\B)^{-1}\B^*$. Since by Parts $(iii)$ and $(iv)$ $\|\widehat{\B}-\B\|_{op}=O_P(n^{-1/2})$ and $\|(\widehat{\B}^*\widehat{\B})^{-1}- (\B^*\B)^{-1}\|_{op}=O_P(n^{-1/2})$, it follows that $\|(\widehat{\B}^*\widehat{\B})^{-1}\|_{op}=O_P(1)$ and $\|\widehat{\B}\|_{op}=O_P(1)$. Hence,
		\begin{equation*}
			\|\widehat{\P} - \P\|_{op}=\|(\widehat{\B}-\B)(\widehat{\B}^*\widehat{\B})^{-1}\widehat{\B}^* + \B [(\widehat{\B}^*\widehat{\B})^{-1} - (\B^*\B)^{-1}]\widehat{\B}^* + \B (\B^*\B)^{-1}[\widehat{\B}^* - \B^*]\|_{op}=O_p(n^{-1/2})\, .
		\end{equation*}
		By Part $(vi)$, $\|\widehat{\A}-\A\|_{op}=O_P(n^{-1/2})$ and $\|\widehat{\A}\|_{op}=O_P(1)$. Therefore,
		\begin{align*}
			\|\widehat{\T} - \T\|_{op}=\|(\I-\widehat{\P})\widehat{\A} - (\I-\P)\A\|_{op}=\|(\P-\widehat{\P})\widehat{\A} + (\I-\P)(\widehat{\A} - \A)\|_{op}=O_P(n^{-1/2})  \, .
		\end{align*}
		
		$(ix)$ The proof follows by arguments analogous to those used in the proof of Part $(iii)$. \\
		
		$(x)$ As already established in the proof of Part $(viii)$, $\|\widehat{\P} - \P\|_{op}=O_P(n^{-1/2})$. By Parts $(viii)$ and $(ix)$, $\|\widehat{\T}-\T\|_{op}=O_P(n^{-1/2})$ and $\|\widehat s - s\|_\mu=O_p(n^{-1/2})$. Moreover, by Lemma \ref{lem: compactness, and expressions of h0, hhat}\ref{lem: compactness, and expressions of h0, hhat:: h0 and beta0}, $(\I-\P)s=\T h_0$. 
		Hence,
		\begin{equation*}
			\|(\I - \widehat{\P})\widehat s - \widehat{\T} h_0\|_\mu=\|(\P - \widehat{\P})\widehat s +(\I - \P)(\widehat s - s) + (\I - \P)s - \T h_0 - (\widehat{\T} - \T)h_0\|_\mu=O_P(n^{-1/2})  \, . 
		\end{equation*}
	\end{proof}
	
	In the following lemma, we consider statements regarding the joint probability space of  the bootstrap weights  and the sample data. Recall that $\widehat \A_b$, $\widehat \B_b$, and $\widehat s_b$ are defined in \eqref{eq: definition of Ahat b, Bhat b, ans shat b}. 
	
	\begin{lemma}
		\label{lem: rates and compacness for bootstrap}
		Under Assumptions  \ref{as: iid and T}, \ref{as: rkhs}(i), \ref{as: completeness}(i), \ref{as: bootstrap weights}, and $h_0\in\mathcal{H}$,
		\begin{enumerate}[label=(\roman*)]
			\item \label{lem: rates and compacness for bootstrap:: Bhat rate} $\|\widehat{\B}_b-\B\|_{op}=O_p(n^{-1/2})$;
			\item \label{lem: rates and compacness for bootstrap:: Bhatstar b Bhat b inverse rate} $\|\widehat{\B}^*_b\widehat{\B}_b- \B^*
			\B\|_{op} = O_p(n^{-1/2})$ and $\|(\widehat{\B}^*_b\widehat{\B}_b)^{-1}- (\B^*
			\B)^{-1}\|_{op} = O_p(n^{-1/2})$;
			\item \label{lem: rates and compacness for bootstrap:: expression of Phat b}  $\widehat \P_b=\widehat\B_b(\widehat\B^*_b\widehat\B_b)^{-1}\widehat\B^*_b$ with probability approaching one;
			\item \label{lem: rates and compacness for bootstrap:: Ahat b rate}  $\|\widehat{\A}_b-\A\|_{op}=O_p(n^{-1/2})$;
			\item \label{lem: rates and compacness for bootstrap:: compactness of That b} the operator $\widehat{\T}_b$ is compact;
			\item \label{lem: rates and compacness for bootstrap:: That b rate} $\|\widehat{\T}_b-\T\|_{op}=O_p(n^{-1/2})$;
			\item \label{lem: rates and compacness for bootstrap:: shat b rate}  $\left\|\widehat{s}_b-s\right\|_\mu=O_p(n^{-1/2})$;
			\item  \label{lem: rates and compacness for bootstrap:: (I - Phat b) shat b - That b h rate}  $\|(\widehat{\I}-\widehat{\P}_b)\widehat{s}_b-\widehat{\T}_bh_0\|_\mu=O_p(n^{-1/2})$.
		\end{enumerate}
	\end{lemma}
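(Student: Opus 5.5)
The plan is to repeat the proof of Lemma \ref{lem: rates and compacness} part by part. Every empirical mean $\E_n$ becomes the weighted mean $\E_n\{(\xi_i/\overline{\xi})\,\cdot\}$, and the rates are understood under the joint law of the data and the weights. The basic device is the split
\begin{equation*}
\E_n\Big\{\frac{\xi_i}{\overline{\xi}} V_i\Big\}-\E V=\frac{1}{\overline{\xi}}\Big[(\E_n-\E)\{\xi_iV_i\}+\E V\,(1-\overline{\xi})\Big],
\end{equation*}
which uses $\E\{\xi_iV_i\}=\E V$; this holds because the weights are independent of the data and $\E\xi=1$ (Assumption \ref{as: bootstrap weights}). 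Here $V_i$ is $L^2_\mu$-valued. Since $\operatorname{Var}\xi=1$, the weak law gives $\overline{\xi}=1+o_P(1)$ and the CLT gives $\sqrt n(\overline{\xi}-1)=O_P(1)$. Hence every term reduces to a centered i.i.d. average of the variables $\xi_iV_i$. These have second moments $\E\xi^2\,\E\|V\|^2=2\E\|V\|^2$, so they are handled exactly as in the non-bootstrap lemma.

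For (i), take $V_i=X_i\exp(\textbf{i}(X_i^T,W_i)\cdot)$. The variance computation in Lemma \ref{lem: rates and compacness}(iii) gives $\E\int\|(\E_n-\E)\xi_iX_ie^{\textbf{i}(\cdot)t}\|^2d\mu(t)\le 2\E\|X\|^2/n$. The second term in the split is bounded by $\|\B\|_{op}|1-\overline{\xi}|=O_P(n^{-1/2})$. Dividing by $\overline{\xi}$ preserves the rate, so $\|\widehat{\B}_b-\B\|_{op}=O_P(n^{-1/2})$. The same argument with $V_i=Y_i\exp(\cdot)$ gives (vii), using $\E Y^2<\infty$.

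For (iv), expand $h=\sum_j\beta_j\widetilde\phi_j$ as in Lemma \ref{lem: rates and compacness}(vi). Cauchy--Schwarz then bounds $\E\|\widehat\A_b-\A\|_{op}^2$, up to the $\overline{\xi}$ correction, by $n^{-1}\E\xi^2\sum_j\E\widetilde\phi_j(Z)^2\le 2n^{-1}\E K(Z,Z)$, which is finite. Part (ii) follows from (i) through $\widehat\B_b^*\widehat\B_b-\B^*\B=(\widehat\B_b-\B)^*\widehat\B_b+\B^*(\widehat\B_b-\B)$. The inverse statement follows from the resolvent identity $(\widehat\B_b^*\widehat\B_b)^{-1}-(\B^*\B)^{-1}=(\widehat\B_b^*\widehat\B_b)^{-1}[\B^*\B-\widehat\B_b^*\widehat\B_b](\B^*\B)^{-1}$. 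Here $\B^*\B$ is an invertible $p\times p$ matrix because $\B$ is injective (Assumption \ref{as: completeness}(i)), so $\widehat\B_b^*\widehat\B_b$ is invertible with probability approaching one and its inverse is $O_P(1)$. Injectivity of $\widehat\B_b^*\widehat\B_b$ with probability approaching one implies injectivity of $\widehat\B_b$. Part (iii) then follows exactly as in Lemma \ref{lem: rates and compacness}(v), since the projection onto the range of an injective operator on $\mathbb R^p$ is $\widehat\B_b(\widehat\B_b^*\widehat\B_b)^{-1}\widehat\B_b^*$.

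For (v), the Hilbert--Schmidt argument of Lemma \ref{lem: rates and compacness}(i) goes through with the weighted empirical measure. It gives $\sum_s\|\widehat\A_b e_s\|_\mu^2\le\E_n\{(\xi_i/\overline{\xi})^2K(Z_i,Z_i)\}$, which is finite for each sample. So $\widehat\A_b$ is compact, and $\widehat\T_b=(\I-\widehat\P_b)\widehat\A_b$ is compact as the product of a bounded operator and a compact one. For (vi), write $\widehat\P_b-\P$ via the three-term decomposition of Lemma \ref{lem: rates and compacness}(viii), which is $O_P(n^{-1/2})$ by (i)--(iii). Then $\widehat\T_b-\T=(\P-\widehat\P_b)\widehat\A_b+(\I-\P)(\widehat\A_b-\A)$, which is $O_P(n^{-1/2})$ by (iv). Finally, for (viii) I would use $(\I-\P)s=\T h_0$ from Lemma \ref{lem: compactness, and expressions of h0, hhat}(i) and decompose
$(\I-\widehat\P_b)\widehat s_b-\widehat\T_bh_0=(\P-\widehat\P_b)\widehat s_b+(\I-\P)(\widehat s_b-s)-(\widehat\T_b-\T)h_0$.
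Each term is $O_P(n^{-1/2})$ by (vi), (vii) and the projection rate.

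The main point needing care, rather than any hard step, is the normalisation by $\overline{\xi}$ and the use of the joint probability space. The split displayed above handles the normalisation, and the only moment facts needed are $\E\xi^2=2$ and independence of the weights from the data. Everything else is a verbatim transcription of the non-bootstrap proofs.
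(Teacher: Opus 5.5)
Your proposal is correct and follows the paper's proof essentially part by part: (ii), (iii), (v), (vi) and (viii) are obtained, as in the paper, by carrying Lemma~\ref{lem: rates and compacness} over to the weighted operators. The only difference is cosmetic and concerns (i), (iv) and (vii). You split $\E_n\{(\xi_i/\overline{\xi})V_i\}-\E V=\overline{\xi}^{-1}[(\E_n-\E)\{\xi_iV_i\}+(1-\overline{\xi})\E V]$ and bound the i.i.d.\ term in $\xi_iV_i$ directly using $\E\xi^2=2$. The paper instead writes it as a centered multiplier term in $(\xi_i-1)/\overline{\xi}$, plus $\overline{\xi}^{-1}(1-\overline{\xi})\widehat{\B}$, plus the non-bootstrap error $\widehat{\B}-\B$ already controlled by Lemma~\ref{lem: rates and compacness}; both routes are equally short and valid.
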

	
	\begin{proof}
		$(i)$ First note that, for any $\beta\in\mathbb{R}^p$,
		\begin{align*}
			(\widehat{\B}_b\beta)(t) - (\B \beta)(t)=&\operatorname{E}_n \frac{\xi_i}{\overline{\xi}}X_i^T \beta \exp(\textbf{i}(X_i^T,W_i)t) - (\B\beta)(t)\\
			=&\operatorname{E}_n\frac{\xi_i -1}{\overline{\xi}}X_i^T\beta \exp(\textbf{i}(X_i^T,W_i)t)\\
			&+ \operatorname{E}_n \frac{1-\overline{\xi}}{\overline{\xi}} X_i^T\beta \exp(\textbf{i}(X_i^T,W_i)t)\\
			&+ (\widehat{\B}\beta)(t) - (\B \beta)(t)\\
			=:& (\widetilde{\B}_1 \beta)(t)+ (\widetilde{\B}_2 \beta)(t) + (\widehat{\B}\beta)(t) - (\B \beta)(t)\, .
		\end{align*}
		By Lemma \ref{lem: rates and compacness}\ref{lem: rates and compacness:: Bhat rate}, $\|\widehat{\B} - \B\|_{op}=O_P(n^{-1/2})$. Moreover, $(\widetilde{\B}_2 \beta)(t)$ $=(1-\overline{\xi})(\overline{\xi})^{-1} (\widehat{\B}\beta)(t)$ and $(1-\overline{\xi})(\overline{\xi})^{-1}=O_P(n^{-1/2})$. Thus,
		\begin{align}\label{eq: B tilde 2}
			\|\widetilde{\B}_2\|_{op}^2=\sup_{\beta\in\mathbb{R}^p,\|\beta\|=1}\|\widetilde{\B}_2\beta\|^2_\mu=\frac{(1-\overline{\xi})^2}{\overline{\xi}^2}\sup_{\beta\in\mathbb{R}^p,\|\beta\|=1}\|\widehat{\B}\beta\|^2_\mu=O_P(n^{-1}) \|\widehat{\B}\|_{op}^2=O_P(n^{-1})\,,
		\end{align}
		where in the last equality we have used $\|\widehat{\B}\|_{op}=O_P(1)$, which follows from  Lemma \ref{lem: rates and compacness}\ref{lem: rates and compacness:: Bhat rate} and $\|\B\|_{op}<\infty$. Hence, to show the desired result, it suffices to prove that $\|\widetilde{\B}_1\|_{op}=O_P(n^{-1/2})$. To this end, note that (see the comments below) 
		\begin{align*}
			\|\widetilde{\B}_1\|_{op}^2=&\sup_{\beta\in\mathbb{R}^p,\|\beta\|=1} \frac{1}{\overline{\xi}^2}\int_{ }|\beta^T\operatorname{E}_n (\xi_i-1)X_i \exp(\textbf{i}(X_i^T,W_i)t)|^2 d \mu(t)\\
			\overset{(a)}{\leq} & O_P(1) \sup_{\beta\in\mathbb{R}^p,\|\beta\|=1} \|\beta\|^2\int_{ }\|\operatorname{E}_n (\xi_i-1)X_i \exp(\textbf{i}(X_i^T,W_i)t)\|^2 d \mu(t)\\
			=& O_P(1) \int_{ }\|\operatorname{E}_n (\xi_i-1)X_i \exp(\textbf{i}(X_i^T,W_i)t)\|^2 d \mu(t) \, ,
		\end{align*}
		where $\|\cdot\|$ denotes the Euclidean norm on $\mathbb{R}^p$.
		Inequality $(a)$ follows from $1/\overline{\xi}=O_P(1)$ and Cauchy-Schwarz. Let us assume, without loss of generality, that $X\in\mathbb{R}$. Since the bootstrap weights $\{\xi_i:i=1\ldots,n\}$ are independent from the sample data and satisfy $\operatorname{E}\xi_i=1$ (Assumption \ref{as: bootstrap weights}), the empirical mean  $\operatorname{E}_n (\xi_i-1)X_i \exp(\textbf{i}(X_i^T,W_i)t)$ has zero expectation. Using this, together with $\operatorname{Var}\{\xi\}=1$ (Assumption \ref{as: bootstrap weights}) and $\operatorname{E}X^2<\infty$ (Assumption \ref{as: iid and T}(i)), we obtain 
		\begin{align*}
			\operatorname{E}\int_{ }|\operatorname{E}_n (\xi_i-1)X_i \exp(\textbf{i}(X_i^T,W_i)t)|^2 d \mu(t)  =&\int_{ }\operatorname{Var}[\operatorname{E}_n (\xi_i-1)X_i \exp(\textbf{i}(X_i^T,W_i)t)]d \mu(t)\\
			=& \int_{ }\frac{1}{n} \operatorname{Var}[(\xi_i-1)X_i \exp(\textbf{i}(X_i^T,W_i)t)]d \mu(t)\\
			\leq &\frac{1}{n}\int_{ }\operatorname{E}|(\xi_i-1)X_i \exp(\textbf{i}(X_i^T,W_i)t)|^2d \mu(t)\\
			\leq & \frac{1}{n} \operatorname{E}|(\xi_i-1)X_i|^2= \frac{1}{n} [\operatorname{E}(\xi_i-1)^2] [\operatorname{E}X_i^2]=O(n^{-1})\, . 
		\end{align*}
		Gathering the above results, we conclude that $\|\widetilde{\B}_1\|_{op}=O_P(n^{-1/2})$, which completes the proof of $(i)$.\\
		
		$(ii)-(iii)$ The proofs follow by arguments analogous to those used in the proof of Lemma  \ref{lem: rates and compacness}\ref{lem: rates and compacness:: Bhatstar Bhat inverse rate}\ref{lem: rates and compacness:: expression of P and Phat}. \\
		
		$(iv)$ Similarly to the proof of $(i)$, for any $h\in\mathcal{H}$, 
		\begin{align*}
			(\widehat{\A}h)(t) - (\A h)(t)=& \operatorname{E}_n\frac{\xi_i}{\overline{\xi}}h(Z_i) \exp(\textbf{i}(X_i^T,W_i)t)- (\A h)(t)\\
			=& \operatorname{E}_n \frac{\xi_i-1}{\overline{\xi}} h(Z_i)  \exp(\textbf{i}(X_i^T,W_i)t)\\
			&+ \operatorname{E}_n\frac{1-\overline{\xi}}{\overline{\xi}}h(Z_i) \exp(\textbf{i}(X_i^T,W_i)t)\\
			&+ (\widehat{\A}h)(t) - (\A h)(t)\\
			=:& (\widetilde{\A}_1 h)(t) + (\widetilde{\A}_2 h)(t)  +(\widehat{\A}h)(t) - (\A h)(t)\, .
		\end{align*}
		By Lemma  \ref{lem: rates and compacness} \ref{lem: rates and compacness:: Ahat rate}, $\|\widehat{\A} - \A\|_{op}=O_P(n^{-1/2})$. Proceeding as in Equation \eqref{eq: B tilde 2}, we obtain $\|\widetilde{\A}_2\|_{op} = O_P(n^{-1/2})$. Thus, to conclude the proof, it remains to show that  $\|\widetilde{\A}_1\|_{op}=O_P(n^{-1/2})$. As argued in the proof of  Lemma \ref{lem: rates and compacness}\ref{lem: rates and compacness:: compactness of A and T}, for any $h\in\mathcal{H}$ there exists a sequence $(\beta_j)_j\in\ell^2$ such that $h(z)=\sum_{j=1}^\infty\beta_j \widetilde \phi_j(z)$, with $\sum_{j=1}^\infty |\widetilde \phi_j(z)|^2\leq K(z,z)$ and $\|h\|_{\mathcal{H}}^2=\sum_{j=1}^\infty\beta_j^2$. Thus,   
		\begin{align*}
			\|\widetilde{\A}_1h\|^2_\mu=&\frac{1}{(\overline{\xi})^2}\int_{ }\Big|\operatorname{E}_n (\xi_i-1)\sum_{j=1}^\infty\beta_j \widetilde \phi_j(Z_i)\exp(\textbf{i}(X_i^T,W_i)t)\Big|^2 d \mu(t)\\
			=& \frac{1}{(\overline{\xi})^2}\int_{ }\Big|\sum_{j=1}^\infty\beta_j \operatorname{E}_n (\xi_i-1)\widetilde \phi_j(Z_i)\exp(\textbf{i}(X_i^T,W_i)t)\Big|^2 d \mu(t)\\
			\leq & \frac{1}{(\overline{\xi})^2}\int_{ } \sum_{j=1}^\infty\beta_j^2 \sum_{j=1}^\infty\Big|\operatorname{E}_n (\xi_i-1)\widetilde \phi_j(Z_i)\exp(\textbf{i}(X_i^T,W_i)t)\Big|^2 d \mu(t)\\
			=& \frac{1}{(\overline{\xi})^2} \|h\|^2_{\mathcal{H}} \int_{ }\sum_{j=1}^\infty \Big|\operatorname{E}_n (\xi_i-1)\widetilde \phi_j(Z_i)\exp(\textbf{i}(X_i^T,W_i)t)\Big|^2 d \mu(t)\, .
		\end{align*}
		Accordingly,
		\begin{align*}
			\|\widetilde{\A}_1\|^2_{op}=\sup_{h\in\mathcal{H},\|h\|_{\mathcal{H}}=1}\|\widetilde{\A}_1h\|^2_\mu\leq& \frac{1}{(\overline{\xi})^2}  \int_{ }\sum_{j=1}^\infty \Big|\operatorname{E}_n (\xi_i-1)\widetilde \phi_j(Z_i)\exp(\textbf{i}(X_i^T,W_i)t)\Big|^2 d\mu(t)\\
			=& O_P(1)\int_{ }\sum_{j=1}^\infty \Big|\operatorname{E}_n (\xi_i-1)\widetilde \phi_j(Z_i)\exp(\textbf{i}(X_i^T,W_i)t)\Big|^2 d\mu(t)\,,
		\end{align*}
		where the last equality follows from $\overline \xi=1+o_P(1)$, which is implied by $\operatorname{E}\xi=1$ (see Assumption \ref{as: bootstrap weights}). Since the bootstrap weights $\{\xi_i:i=1,\ldots,n\}$ are independent from the sample data,  
		$\operatorname{E}_n (\xi_i-1)\widetilde \phi_j(Z_i)\exp(\textbf{i}(X_i^T,W_i)t)$ has zero expectation. Using this, together with $\operatorname{Var}\{\xi\}=1$ (Assumption \ref{as: bootstrap weights}) and $\sum_{j=1}^\infty \operatorname{E}|\widetilde \phi_j(Z_i)|^2\leq \operatorname{E}K(Z_i,Z_i)<\infty$, we obtain 
		\begin{align*}
			\!\!\!\!\!
			\operatorname{E}\int_{ }\sum_{j=1}^\infty \Big|\operatorname{E}_n (\xi_i-1)\widetilde \phi_j(Z_i)\exp(\textbf{i}(X_i^T,W_i)t)\Big|^2 d\mu(t)
			& =
			\int_{ }\sum_{j=1}^\infty \operatorname{E}\Big|\operatorname{E}_n (\xi_i-1)\widetilde \phi_j(Z_i)\exp(\textbf{i}(X_i^T,W_i)t)\Big|^2 d\mu(t)\\
			& =
			\int_{ }\sum_{j=1}^\infty \operatorname{Var}\Big[\operatorname{E}_n (\xi_i-1)\widetilde \phi_j(Z_i)\exp(\textbf{i}(X_i^T,W_i)t)\Big] d\mu(t)\\
			& =
			\int_{ }\sum_{j=1}^\infty \frac{1}{n}\operatorname{Var}\Big[(\xi_i-1)\widetilde \phi_j(Z_i)\exp(\textbf{i}(X_i^T,W_i)t)\Big] d\mu(t)\\
			& \leq 
			\int_{ }\sum_{j=1}^\infty \frac{1}{n}\operatorname{E}\Big[(\xi_i-1)^2\widetilde \phi_j(Z_i)^2\exp(\textbf{i}(X_i^T,W_i)t)^2\Big] d\mu(t) \\
			& \leq
			\sum_{j=1}^\infty \frac{1}{n}\operatorname{E}\Big[(\xi_i-1)^2\widetilde \phi_j(Z_i)^2\Big] \\
			& =
			\sum_{j=1}^\infty \frac{1}{n}\operatorname{E}[(\xi_i-1)^2]\operatorname{E}[\widetilde \phi_j(Z_i)^2]\\
			& =
			\frac{1}{n}\operatorname{Var}\{\xi_i\} \sum_{j=1}^\infty \operatorname{E}[\widetilde \phi_j(Z_i)^2]\\
			& \leq
			\frac{1}{n} \operatorname{E}K(Z_i,Z_i)\\
			& = O(n^{-1})\, .
		\end{align*}
		Gathering the above results, we obtain $\|\widetilde{\A}_1\|_{op}=O_P(n^{-1/2})$. Hence, Part $(iv)$ is proved.\\
		
		$(v)$ By arguments analogous to those used in the proof of Lemma \ref{lem: rates and compacness}\ref{lem: rates and compacness:: compactness of A and T},  $\widehat{\A}_b$ is compact. Since $\widehat{\operatorname{\P}}_b$ is the projection operator onto $\mathcal{R}(\widehat{\B}_b)$, it is bounded. Thus, $(\I - \widehat{\P}_b)$ is also bounded. Consequently, $\widehat{\T}_b=(\I-\widehat{\operatorname{\P}}_b)\widehat{\A}_b$, being the composition of a bounded and a compact operator, is compact.\\
		
		$(vi)$ The proof proceeds similarly to that  of Lemma \ref{lem: rates and compacness}\ref{lem: rates and compacness:: That rate}, by using Part $(iii)$ 
		of the present lemma, $\|\widehat{\B}_b - \B\|_{op}=O_P(n^{-1/2})$, $\|(\widehat{\B}_b^* \widehat{\B}_b)^{-1} - (\B^*\B)^{-1}\|_{op}=O_P(n^{-1/2})$, and $\|\widehat{\A}_b - \A\|_{op}=O_P(n^{-1/2})$ (from Parts $(i)$, $(ii)$, and $(iv)$ of the present lemma).  \\
		
		$(vii)-(viii)$ The proof of Part $(vii)$ follows similarly to that of Part $(i)$ of the present lemma. The proof of Part $(viii)$  follows by arguments analogous to those used in the proof of Lemma \ref{lem: rates and compacness}\ref{lem: rates and compacness:: (I - Phat) shat - That h rate}.
		
	\end{proof}
	
	\begin{lemma}\label{lem: conditional multiplier clt for Bayesian bootsrap}
		Let $\{U_i:i=1,2,\ldots\}$ be an i.i.d. sequence of random variables with $\operatorname{E}U_i^2<\infty$. Let $\{\xi_i:i=1,2,\ldots\}$ be an i.i.d. sequence of random variables independent from $\{U_i:i=1,2,\ldots\}$ satisfying $\operatorname{E}\{\xi_i\}=\operatorname{Var}\{\xi_i\}=1$. Then, conditionally on $\{U_i:i=1,2,\ldots\}$,
		\begin{equation*}
			\sqrt{n}\operatorname{E}_n\left(\frac{\xi_i}{\overline{\xi}}-1\right)U_i\leadsto \mathcal{N}(0,\operatorname{Var}(U))
		\end{equation*}
		for almost all sequences $\{U_i:i=1,2,\ldots\}$.
	\end{lemma}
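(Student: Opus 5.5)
The plan is to reduce the statement to a conditional Lindeberg--Feller CLT for the multiplier sum $\sqrt{n}\operatorname{E}_n(\xi_i-1)(U_i-\overline{U})$, where $\overline{U}:=\operatorname{E}_n U_i$. The first step is an exact algebraic identity. Since $\operatorname{E}_n(\xi_i/\overline{\xi}-1)=0$, we have
\begin{equation*}
\sqrt{n}\operatorname{E}_n\Big(\frac{\xi_i}{\overline{\xi}}-1\Big)U_i=\sqrt{n}\operatorname{E}_n\Big(\frac{\xi_i}{\overline{\xi}}-1\Big)(U_i-\overline{U})=\frac{1}{\overline{\xi}}\,\sqrt{n}\operatorname{E}_n(\xi_i-1)(U_i-\overline{U}),
\end{equation*}
where the last equality uses $\operatorname{E}_n(U_i-\overline{U})=0$. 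By the strong law of large numbers, $\overline{\xi}\to 1$ almost surely. This convergence concerns only the $\xi$'s, which are independent of the data, so it also holds under $\Pr_\xi$ for every data sequence. By Slutsky's lemma applied conditionally, it therefore suffices to show that $S_n:=n^{-1/2}\sum_{i=1}^n(\xi_i-1)(U_i-\overline{U})\leadsto\mathcal{N}(0,\operatorname{Var}(U))$ under $\Pr_\xi$, for almost every data sequence.

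Fix a data sequence in the probability-one event where $\overline{U}\to\operatorname{E}U$, $\operatorname{E}_n U_i^2\to\operatorname{E}U^2$, and $\max_{i\le n}|U_i|/\sqrt{n}\to 0$. The last property follows from $\operatorname{E}U^2<\infty$ via Borel--Cantelli, because $\sum_n\Pr(U_n^2>\epsilon n)<\infty$. Conditionally on the data, $S_n$ is a sum of independent, mean-zero terms $a_{ni}(\xi_i-1)$ with deterministic coefficients $a_{ni}=(U_i-\overline{U})/\sqrt{n}$. Its variance is $\sum_i a_{ni}^2=\operatorname{E}_n(U_i-\overline{U})^2\to\operatorname{Var}(U)$, using $\operatorname{Var}(\xi)=1$.

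The main step is verifying the Lindeberg condition:
\begin{equation*}
\sum_i a_{ni}^2\,\operatorname{E}\big[(\xi-1)^2\mathbf{1}\{|a_{ni}||\xi-1|>\epsilon\}\big]\le \Big(\sum_i a_{ni}^2\Big)\,\operatorname{E}\big[(\xi-1)^2\mathbf{1}\{|\xi-1|>\epsilon/\max_i|a_{ni}|\}\big].
\end{equation*}
Here $\max_i|a_{ni}|\le(\max_i|U_i|+|\overline{U}|)/\sqrt{n}\to0$. Since $\operatorname{E}(\xi-1)^2<\infty$, dominated convergence sends the expectation to zero, while the prefactor converges. The Lindeberg--Feller theorem then yields the conditional limit $\mathcal{N}(0,\operatorname{Var}(U))$.

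The degenerate case $\operatorname{Var}(U)=0$ needs separate handling: there the conditional variance tends to zero, so $S_n\to0$ in $\Pr_\xi$-probability by Chebyshev, which matches the degenerate normal. I expect the only delicate point to be the almost-sure negligibility $\max_{i\le n}|U_i|=o(\sqrt n)$, which the Borel--Cantelli argument above handles. Everything else is the standard multiplier-bootstrap argument.
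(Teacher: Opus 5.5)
Your proof is correct, but it does not follow the paper's route. The paper gives no argument of its own: it invokes \citet[Theorem 2.6]{kosorok2008introduction}, the general bootstrap theorem for Donsker classes, applied implicitly to the single function $u\mapsto u$. You instead prove the scalar case from first principles, in three steps:
\begin{enumerate}
\item the exact identity turning $\sqrt{n}\operatorname{E}_n(\xi_i/\overline{\xi}-1)U_i$ into $\overline{\xi}^{-1}\sqrt{n}\operatorname{E}_n(\xi_i-1)(U_i-\overline{U})$;
\item a conditional Lindeberg--Feller CLT with data-determined coefficients $a_{ni}$;
\item the Borel--Cantelli bound $\max_{i\le n}|U_i|=o(\sqrt{n})$ almost surely, which makes the Lindeberg condition hold on a single probability-one set of data sequences. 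This gives exactly the ``almost all sequences'' form of the statement.
\end{enumerate}

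Your route gives a self-contained proof under exactly the lemma's hypotheses. Kosorok's multiplier bootstrap is formulated for positive weights satisfying $\|\xi\|_{2,1}=\int_0^\infty\sqrt{\Pr(|\xi|>x)}\,dx<\infty$. That condition is slightly stronger than $\operatorname{E}\xi^2<\infty$, and it is needed there for uniformity over infinite classes. Your argument shows that for a single function a finite variance of $\xi$ suffices and positivity plays no role. Without positivity, $\overline{\xi}=0$ can occur for finite $n$, but this is harmless because ${\Pr}_\xi(\overline{\xi}=0)\to 0$.

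The citation, in turn, gives uniform conditional weak convergence over whole Donsker classes, which this lemma does not require. Your separate treatment of $\operatorname{Var}(U)=0$ is fine, though the standard vector form of the Lindeberg--Feller theorem already allows a degenerate limit.
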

	\begin{proof}
		This is a standard result in bootstrap theory. A proof can be found, for example, in \citet[Theorem 2.6]{kosorok2008introduction}.
	\end{proof}
	
	\begin{lemma}\label{lem: uniform in lambda}
		Let $h_0\in\mathcal{H}$, and let Assumptions \ref{as: iid and T}, \ref{as: rkhs}(i), and \ref{as: completeness} hold. Assume that $\lambda \rightarrow 0$ and $n \lambda \rightarrow \infty$. 
		Then, (i) $(\T^* \T + \lambda \I )^{-1}\T^* [(\I-\widehat{\P})\widehat s-
		\widehat{\T} h_0]\, $,  (ii) $
		(\T^* \T + \lambda \I )^{-1}(\widehat{\T}^*-
		\T^*)[(\I-\widehat{\P})\widehat s - \widehat{\T}h_0)\,$, (iii) $
		[(\widehat{\T}^* \widehat{\T}+\lambda \I)^{-1} $ $- (\T^* \T
		+ \lambda \I )^{-1} ]\widehat{\T}^* [(\I-\widehat{\P})\widehat s
		- \widehat{\T}h_0]\, $, and (iv) $
		(\widehat{\T}^* \widehat{\T}+\lambda \I)^{-1}\widehat{\T}^*
		\widehat{\T}h_0 - $ $(\T^* \T+\lambda \I)^{-1}\T^*
		\T h_0$ are all $O_P(1/\sqrt{n \lambda})$.
		(v) If moreover $h_0\in\mathcal{R}[(\T^*\T)^{b/2}]$, then $\|(\T^* \T+\lambda \I)^{-1}\T^*
		\T h_0 - h_0\|_{\mathcal{H}}=O(\lambda^{(b\wedge 2)/2})$.
	\end{lemma}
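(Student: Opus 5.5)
The plan is to prove each part by spectral calculus for the compact operators $\T$ and $\widehat{\T}$, combined with the $O_P(n^{-1/2})$ perturbation rates of Lemma~\ref{lem: rates and compacness}. Throughout, Lemma~\ref{lem: Bounds on Compact Operator}\ref{lem: Bounds on Compact Operator: i}--\ref{lem: Bounds on Compact Operator: iii} supplies the uniform bounds
\[
\|(\T^*\T+\lambda\I)^{-1}\T^*\|_{op}\le \lambda^{-1/2}/2,\qquad
\|(\T^*\T+\lambda\I)^{-1}\|_{op}\le 2\lambda^{-1},
\]
and the same bounds hold for $\widehat{\T}$, which is compact by Lemma~\ref{lem: rates and compacness}\ref{lem: rates and compacness:: compactness of That}.

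\emph{Part (i).} Bound the term by $\|(\T^*\T+\lambda\I)^{-1}\T^*\|_{op}\,\|(\I-\widehat{\P})\widehat s-\widehat{\T}h_0\|_\mu$. Lemma~\ref{lem: rates and compacness}\ref{lem: rates and compacness:: (I - Phat) shat - That h rate} gives $O_P(n^{-1/2})$ for the second factor, so the product is $O_P(\lambda^{-1/2}n^{-1/2})$.

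\emph{Part (ii).} Bound the term by $\lambda^{-1}\|\widehat{\T}-\T\|_{op}\,O_P(n^{-1/2})=O_P(\lambda^{-1}n^{-1})$. This equals $O_P\bigl((n\lambda)^{-1/2}(n\lambda)^{-1/2}\bigr)$, hence is $o_P\bigl((n\lambda)^{-1/2}\bigr)$ because $n\lambda\to\infty$.

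\emph{Part (iii).} Use the resolvent identity
\[
(\widehat{\T}^*\widehat{\T}+\lambda\I)^{-1}-(\T^*\T+\lambda\I)^{-1}
=(\T^*\T+\lambda\I)^{-1}\bigl(\T^*\T-\widehat{\T}^*\widehat{\T}\bigr)(\widehat{\T}^*\widehat{\T}+\lambda\I)^{-1},
\]
and split $\T^*\T-\widehat{\T}^*\widehat{\T}=\T^*(\T-\widehat{\T})+(\T^*-\widehat{\T}^*)\widehat{\T}$. This is the step requiring care: the factors must be grouped so that each resolvent is paired with a $\T^*$ or $\widehat{\T}$ where possible, gaining $\lambda^{-1/2}$ rather than $\lambda^{-1}$.
\begin{itemize}
\item First piece: $\|(\T^*\T+\lambda\I)^{-1}\T^*\|_{op}\,\|\T-\widehat{\T}\|_{op}\,\|(\widehat{\T}^*\widehat{\T}+\lambda\I)^{-1}\widehat{\T}^*\|_{op}$, applied to a vector of norm $O_P(n^{-1/2})$. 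This gives $\lambda^{-1/2}n^{-1/2}\lambda^{-1/2}n^{-1/2}=(n\lambda)^{-1}$.
\item Second piece: $\lambda^{-1}n^{-1/2}\,\|\widehat{\T}(\widehat{\T}^*\widehat{\T}+\lambda\I)^{-1}\widehat{\T}^*\|_{op}\,n^{-1/2}$. Since that middle operator has norm $\le 1$, this is again $(n\lambda)^{-1}$.
\end{itemize}
Both pieces are $o_P\bigl((n\lambda)^{-1/2}\bigr)$.

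\emph{Part (iv).} Write $(\widehat{\T}^*\widehat{\T}+\lambda\I)^{-1}\widehat{\T}^*\widehat{\T}=\I-\lambda(\widehat{\T}^*\widehat{\T}+\lambda\I)^{-1}$, and similarly for $\T$. The difference then becomes
\[
\lambda\bigl[(\T^*\T+\lambda\I)^{-1}-(\widehat{\T}^*\widehat{\T}+\lambda\I)^{-1}\bigr]h_0 .
\]
Apply the resolvent identity again and note that $\lambda(\T^*\T+\lambda\I)^{-1}h_0$ is bounded (by $2\|h_0\|$). With the same splitting as in (iii), the pieces are bounded by $\lambda^{-1/2}n^{-1/2}$ and $\lambda^{-1}n^{-1/2}\,\|\widehat{\T}\lambda(\cdot)^{-1}h_0\|$. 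For the latter, use
\[
\|\T\,\lambda(\T^*\T+\lambda\I)^{-1}h_0\|_\mu \le \lambda^{1/2}\|h_0\|/2 ,
\]
which gives $\lambda^{-1/2}n^{-1/2}$. Both pieces are therefore $O_P\bigl((n\lambda)^{-1/2}\bigr)$. The arrangement here has to be chosen so that the factor landing next to $h_0$ is $\lambda(\T^*\T+\lambda\I)^{-1}$, whose norm stays bounded.

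\emph{Part (v).} This is exactly Lemma~\ref{lem: Bounds on Compact Operator}\ref{lem: Bounds on Compact Operator: vi} applied with $C=\T$ and $\gamma=b$, using that $\T$ is compact by Lemma~\ref{lem: rates and compacness}\ref{lem: rates and compacness:: compactness of A and T}.
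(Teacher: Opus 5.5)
Your proposal is correct. It is the standard Tikhonov perturbation argument that the paper does not spell out but delegates to \citet[Lemma S3.3]{beyhum2023one}: the resolvent identity, the spectral bounds of Lemma~\ref{lem: Bounds on Compact Operator}, and the $O_P(n^{-1/2})$ rates of Lemma~\ref{lem: rates and compacness}, with the factors grouped so that each resolvent is paired with the adjoint where possible. One cosmetic point concerns part (iv): with the ordering of the resolvent identity used in (iii), the factor next to $h_0$ is $\lambda(\widehat{\T}^*\widehat{\T}+\lambda\I)^{-1}$, and the bound you need is $\|\widehat{\T}\,\lambda(\widehat{\T}^*\widehat{\T}+\lambda\I)^{-1}\|_{op}\le\lambda^{1/2}/2$; mixing hatted and unhatted factors as written only adds an $O_P((n\lambda)^{-1})$ term via the triangle inequality, so the conclusion is unaffected.
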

	\begin{proof}
		The proof proceeds along lines similar to those of \citet[Lemma S3.3]{beyhum2023one}.
	\end{proof}
	
	\begin{lemma}\label{lem: RKHS results}
		Let $\mathcal{Z}$ be a compact set, and $K:\mathcal{Z}\times \mathcal{Z}\to \mathbb{C}$ be a continuous, symmetric, positive semidefinite kernel. Let $(\mathcal{H},\left<\cdot,\cdot\right>_\mathcal{H})$ denote the Reproducing Kernel Hilbert Space (RKHS) with reproducing kernel $K$. Then, there exist a sequence $(\eta_j)_j\subset\mathbb{R}_+$ and a sequence of continuous functions $(\phi_j)_j$, with $\phi_j : \mathcal{Z}\to \mathbb C$, such that: 
		\begin{enumerate}[label=(\roman*)]
			\item \label{lem: RKHS results: svd of K}
			\begin{equation*}
				\int_{ \mathcal{Z}}K(z,u)\, \phi_j(u)\, du=\eta_j \phi_j(z)\,\text{ for all }z\in\mathcal{Z}\text{ and }j\in\mathbb{N}
			\end{equation*}
			with 
			\begin{equation*}
				\int_{\mathcal{Z}} \phi_j(z)\, \overline{\phi_s(z)}\, d z =\begin{cases} 1\text{ if }j=s\\
					0 \text{ if }j\neq s
				\end{cases}\, ;
			\end{equation*}
			\item \label{lem: RKHS results: upper bound for l2 sum}
			\begin{equation*}
				K(z,z)\geq \sum_{j=1}^N \eta_j |\phi_j(z)|^2\text{ for all }z\in\mathcal{Z} \text{ and }N\in\mathbb{N}\,;
			\end{equation*}
			\item \label{lem: RKHS results: H eig space}
			for $\widetilde{\phi}_j(z)=\sqrt{\eta}_j \phi_j(z)$, the space 
			\begin{align*}
				\mathcal{H}_{EIG}:=\Big\{f:\mathcal{Z}\to \mathbb C & \text{ s.t. for a sequence } (\beta_j)_j\text{ in }\ell^2\\
				&\, f(z)=\sum_{j=1}^\infty \beta_j \widetilde{\phi}_j(z)\text{ pointwise in }z\in\mathcal{Z}\Big\}
			\end{align*}
			endowed with the inner product 
			\begin{equation*}
				\left<f,g\right>_{EIG}=\sum_{j=1}^\infty \beta_j \gamma_j \text{ (where }g(z)=\sum_{j=1}^\infty \gamma_j \widetilde{\phi}_j(z)\,) 
			\end{equation*}
			is an RKHS with reproducing kernel $K$;
			\item \label{lem: RKHS results: equality of H and H eig space}
			\begin{equation*}
				\left(\mathcal{H}\,,\, \left<\cdot,\cdot\right>_{\mathcal{H}}\right)=\left(\mathcal{H}_{EIG}\,,\, \left<\cdot,\cdot\right>_{EIG}\right)\, .
			\end{equation*}
		\end{enumerate}
	\end{lemma}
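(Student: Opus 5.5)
The plan is to build everything from the integral operator $L_K:L^2(\mathcal{Z})\to L^2(\mathcal{Z})$, $(L_K\varphi)(z)=\int_{\mathcal{Z}}K(z,u)\varphi(u)\,du$, where $L^2(\mathcal{Z})$ is taken with respect to Lebesgue measure on the compact set $\mathcal{Z}$. Throughout, for complex coefficients the $\ell^2$ pairing in \ref{lem: RKHS results: H eig space} is read as $\sum_j\beta_j\overline{\gamma_j}$, which reduces to the displayed formula in the real case.

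\textbf{Part \ref{lem: RKHS results: svd of K}.} Because $K$ is continuous on the compact set $\mathcal{Z}\times\mathcal{Z}$, it is bounded and square integrable. Hence $L_K$ is Hilbert--Schmidt, and in particular compact. Symmetry of $K$ makes $L_K$ self-adjoint. Positive semidefiniteness of $K$ makes $L_K$ positive: approximate $\int\!\int K(z,u)\varphi(u)\overline{\varphi(z)}\,du\,dz$ by Riemann sums of the form $\sum_{i,k}c_i\overline{c_k}K(z_i,z_k)\geq 0$.
\begin{itemize}
\item The spectral theorem for compact self-adjoint operators then gives an orthonormal family $(\phi_j)_j$ and eigenvalues $\eta_j\geq 0$. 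I keep only the $\eta_j>0$; these exhaust the orthogonal complement of the null space of $L_K$. This is exactly the eigen-equation and orthonormality in \ref{lem: RKHS results: svd of K}.
\item Each $\phi_j$ is continuous, since $\phi_j=\eta_j^{-1}L_K\phi_j$. Uniform continuity of $K$ together with Cauchy--Schwarz gives $|L_K\phi_j(z)-L_K\phi_j(z')|\leq \sup_u|K(z,u)-K(z',u)|\,|\mathcal{Z}|^{1/2}$.
\end{itemize}

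\textbf{Part \ref{lem: RKHS results: upper bound for l2 sum}.} Set $K_N(z,u)=K(z,u)-\sum_{j\leq N}\eta_j\phi_j(z)\overline{\phi_j(u)}$. It is continuous, and its integral operator is $L_K$ minus its spectral truncation, which is still a positive operator. I then show that a continuous kernel whose integral operator is positive has a nonnegative diagonal. To do this, test the positivity against $\varphi_\epsilon=\mathbf{1}_{B(z,\epsilon)\cap\mathcal{Z}}/|B(z,\epsilon)\cap\mathcal{Z}|$ and let $\epsilon\to 0$; continuity of $K_N$ gives $K_N(z,z)\geq 0$. This yields \ref{lem: RKHS results: upper bound for l2 sum}.

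\textbf{Part \ref{lem: RKHS results: H eig space}.} First, by \ref{lem: RKHS results: upper bound for l2 sum}, $\sum_j|\widetilde\phi_j(z)|^2\leq K(z,z)$. So for $(\beta_j)_j\in\ell^2$ the series $\sum_j\beta_j\widetilde\phi_j(z)$ converges absolutely by Cauchy--Schwarz, and $f\in\mathcal{H}_{EIG}$ is well defined pointwise.
\begin{itemize}
\item The coefficient sequence is unique. If $\sum_j\beta_j\widetilde\phi_j\equiv 0$, integrate against $\overline{\phi_k}$, using dominated convergence with the bound $\|\beta\|_{\ell^2}K(z,z)^{1/2}$. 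Orthonormality then gives $\beta_k\sqrt{\eta_k}=0$, so $\beta_k=0$.
\item Consequently $\langle\cdot,\cdot\rangle_{EIG}$ is a genuine inner product, and $\mathcal{H}_{EIG}$ is isometric to $\ell^2$, hence a Hilbert space.
\item Next, Mercer's theorem (\citet[Theorem 12.20]{wainwright2019high}) gives $K(z,u)=\sum_j\widetilde\phi_j(z)\overline{\widetilde\phi_j(u)}$. So $K(\cdot,z)\in\mathcal{H}_{EIG}$, with coefficients $(\overline{\widetilde\phi_j(z)})_j\in\ell^2$.
\item The reproducing property is then immediate: $\langle f,K(\cdot,z)\rangle_{EIG}=\sum_j\beta_j\widetilde\phi_j(z)=f(z)$.
\end{itemize}

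\textbf{Part \ref{lem: RKHS results: equality of H and H eig space}.} This follows from uniqueness of the RKHS associated with a given kernel (\citet[Theorems 4.20 and 4.21]{steinwart2008support}), since both spaces have reproducing kernel $K$.

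\textbf{Main obstacle.} The delicate step is Mercer's expansion, i.e.\ pointwise (indeed uniform) convergence of $\sum_j\widetilde\phi_j(z)\overline{\widetilde\phi_j(u)}$ to $K(z,u)$. This is exactly where continuity of $K$ and compactness of $\mathcal{Z}$ are used. I will invoke the cited theorem rather than reprove it. If a self-contained argument is wanted, I would combine the monotone bound from \ref{lem: RKHS results: upper bound for l2 sum} with Dini's theorem, which gives uniform convergence on the diagonal, and then transfer this to off-diagonal points by Cauchy--Schwarz.
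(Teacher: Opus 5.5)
Your proposal is correct and follows the same route as the paper, whose proof simply cites Mercer's theorem for (i)--(ii) and the eigen-expansion characterization of the RKHS in Wainwright's Chapter 12 for (iii)--(iv); you supply the details those citations contain: the spectral theorem for $L_K$, positivity of the truncated kernel, uniqueness of the coefficients, and uniqueness of the RKHS. One caveat, which the paper shares: your local-averaging step in (ii) and the appeal to Mercer both tacitly need $|B(z,\epsilon)\cap\mathcal{Z}|>0$ for every $z\in\mathcal{Z}$ and $\epsilon>0$; this holds when $\mathcal{Z}$ is the support of a density, as in the paper, and it is genuinely required, since for $\mathcal{Z}=[0,1]\cup\{2\}$ and $K(z,u)=\mathbf{1}\{z=u=2\}$ the operator $L_K$ is zero, so $\mathcal{H}_{EIG}=\{0\}$ while $\mathcal{H}\neq\{0\}$.
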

	\noindent \textbf{Proof}. The results in $(i)$ and $(ii)$ are known as Mercer's Theorem. The proofs of $(iii)$-$(iv)$ can be found, for example, in \citet[Chapter 12]{wainwright2019high}.
	
	\section{Low level conditions}\label{sec: low level conditions}
	In this Appendix, we provide primitive conditions that imply the high-level assumptions in the main text, namely Assumptions 
	\ref{as: belonging condition}
	and \ref{as: belonging condition for bootstrap}. Let
	\begin{equation*}
		\partial ^{\kappa \kappa}K(z_1,z_2):=\frac{\partial^{2\kappa}}{\partial^\kappa z_1 \partial^\kappa z_2}K(z_1,z_2)\, .
	\end{equation*}
	We introduce the following primitive conditions. 
	\begin{assumption}\label{as: smoothness of K}
		$\partial ^{\kappa \kappa}K$ is well-defined and continuous on $\operatorname{Int}(\mathcal{Z})\times \operatorname{Int}(\mathcal{Z})$ for some $\kappa\geq 2$. 
	\end{assumption}
	
	Let $\partial \mathcal{Z}$ denote the boundary of $\mathcal{Z}$. 
	\begin{assumption}\label{as: fZ null on the boundary of Supp(Z)}
		$f_Z=0$ on $\partial \mathcal{Z}$. 
	\end{assumption}
	
	The following lemma shows that, under these two low-level conditions, the high-level assumptions stated in the main text are satisfied. 
	
	\begin{lemma}\label{lem: high-level assumptions}
		Let $h_0\in\mathcal{H}$, and let Assumptions \ref{as: iid and T}, \ref{as: rkhs}(i), and \ref{as: completeness} hold.  
		Suppose that Conditions \ref{as: smoothness of K} and \ref{as: fZ null on the boundary of Supp(Z)} are satisfied, and that $\lambda \rightarrow 0$ and $n \lambda \rightarrow \infty$. Then, Assumptions 
		\ref{as: belonging condition}
		and \ref{as: belonging condition for bootstrap} hold.  
	\end{lemma}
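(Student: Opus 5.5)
We need to verify Assumptions \ref{as: belonging condition} and \ref{as: belonging condition for bootstrap}. The guiding idea is to pass from convergence in the RKHS norm to uniform convergence of derivatives, using the smoothness of $K$ from Assumption \ref{as: smoothness of K}.

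\textbf{Step 1: consistency in $\mathcal{H}$.} By Lemma \ref{lem: rate for hhat}, $\|\widehat h - h_0\|_{\mathcal{H}} = o_P(1)$ whenever $\lambda\to 0$ and $n\lambda\to\infty$. For the bootstrap estimator I would repeat the argument of that lemma, with $(\widehat{\T},\widehat{\P},\widehat s)$ replaced by $(\widehat{\T}_b,\widehat{\P}_b,\widehat s_b)$. Lemma \ref{lem: rates and compacness for bootstrap} supplies the same $O_P(n^{-1/2})$ operator rates that Lemma \ref{lem: uniform in lambda} uses. This gives $\|\widehat h_b - h_0\|_{\mathcal{H}} = o_P(1)$ on the joint probability space.

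\textbf{Step 2: derivative reproducing property.} Under Assumption \ref{as: smoothness of K} with $\kappa\geq 2$, a standard result (Steinwart and Christmann, Corollary 4.36) says that every $h\in\mathcal{H}$ is $\kappa$ times continuously differentiable on $\operatorname{Int}(\mathcal{Z})$. Moreover, $\partial^{m} K(\cdot,z)\in\mathcal{H}$ and $h^{(m)}(z)=\langle h,\partial^{m}K(\cdot,z)\rangle_{\mathcal{H}}$ for $m\le\kappa$. Cauchy--Schwarz then yields
\[
|h^{(m)}(z)|\le \|h\|_{\mathcal{H}}\,\{\partial^{mm}K(z,z)\}^{1/2}.
\]
Since $\partial^{mm}K$ is continuous and $\mathcal{Z}$ is bounded (Assumption \ref{as: iid and T}(iii)), the right-hand side is bounded by $C\|h\|_{\mathcal{H}}$. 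Applying this with $m=1$ to $h=\widehat h-h_0$ and $h=\widehat h_b-h_0$ gives part (i) of both assumptions.

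There is an obstacle in this step. Continuity of $\partial^{\kappa\kappa}K$ is assumed only on the interior, so I would need to argue that it extends continuously to the closure, or else that the suprema are taken over the interior. I expect this to be the main technical point, and I would handle it by interpreting $\sup_{z\in\mathcal{Z}}$ over the interior, where the derivatives are defined.

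\textbf{Step 3: entropy class.} Take
\[
\mathcal{G}:=\{f' : \|f\|_{\mathcal{H}}\le M\},
\]
where $M$ is chosen so that $\|h_0\|_{\mathcal{H}}+1\le M$. By Step 1, $\Pr(\widehat h'\in\mathcal{G})\to1$, and likewise $\Pr(\widehat h_b'\in\mathcal{G})\to 1$. By Step 2 with $m=1,2$, every element of $\mathcal{G}$ is bounded by $CM$ and has first derivative bounded by $CM$. So $\mathcal{G}$ lies in a Lipschitz/Hölder ball of order $1$ on a bounded interval. By van der Vaart and Wellner, Theorem 2.7.1, its bracketing entropy in $L^2(P)$ satisfies $\log N_{[\,]}(\epsilon,\mathcal{G},L^2(P))\le C\epsilon^{-1}$, so $v=1\in(0,2)$. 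This gives part (ii) of both assumptions.

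\textbf{Step 4: boundary condition.} By Step 2, $\widehat h-h_0$ and $\widehat h_b-h_0$ are bounded on $\mathcal{Z}$ by $C\|\cdot\|_{\mathcal{H}}=O_P(1)$. Assumption \ref{as: fZ null on the boundary of Supp(Z)} together with the continuity of $f_Z$ then gives $f_Z(z)[\widehat h(z)-h_0(z)]\to0$ at the endpoints of $\mathcal{Z}$. Since $\mathcal{Z}$ is bounded, these endpoint limits are exactly the limits ``$z\to\pm\infty$'' in the integration by parts. This gives part (iii) of both assumptions and completes the proof.
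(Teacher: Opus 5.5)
Your proposal follows the paper's proof essentially step for step. The steps are: consistency $\|\widehat h-h_0\|_{\mathcal H}=o_P(1)$ from Lemma \ref{lem: rate for hhat}, plus its bootstrap analogue, which the paper also obtains only ``by analogous arguments'' and which tacitly uses Assumption \ref{as: bootstrap weights}; the bound $|h^{(m)}(z)|\le\|h\|_{\mathcal H}\{\partial^{mm}K(z,z)\}^{1/2}$ from Steinwart and Christmann's Corollary 4.36; a first-order H\"older ball giving bracketing entropy of order $\epsilon^{-1}$; and $f_Z=0$ on $\partial\mathcal Z$ for the boundary term. Your constant $M\ge\|h_0\|_{\mathcal H}+1$ is slightly more careful than the paper's $2\|h_0\|_{\mathcal H}$, which fails when $h_0=0$. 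Citing Theorem 2.7.1 instead of Corollary 2.7.2 of van der Vaart and Wellner only costs the routine step from a sup-norm $\epsilon$-cover to $L^2(P)$ brackets of size $2\epsilon$.

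What does not work is your resolution of the obstacle you flagged. Taking the supremum over $\operatorname{Int}(\mathcal Z)$ settles where the derivatives are defined, and is harmless downstream since the endpoints of an interval are $P_Z$-null. It does not control their size. A function continuous on an open interval need not be bounded, so your claim that continuity of $\partial^{mm}K$ and boundedness of $\mathcal Z$ give $|h^{(m)}(z)|\le C\|h\|_{\mathcal H}$ is unjustified for $m=1,2$. For $m=0$ it is fine, because $K$ is continuous on the compact set $\mathcal Z\times\mathcal Z$.

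Steps 2 and 3 genuinely need $\sup_{z\in\operatorname{Int}(\mathcal Z)}\partial^{mm}K(z,z)<\infty$ for $m=1,2$. The paper gets this by treating $\partial^{\kappa\kappa}K$ as continuous on the compact $\mathcal Z\times\mathcal Z$, i.e.\ up to the boundary. That, or the boundedness itself, is the hypothesis you should impose instead of reinterpreting the supremum.

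This is not cosmetic. On $\mathcal Z=[0,1]$, take the rank-one kernel $K(z,u)=\phi(z)\phi(u)$ with $\phi(z)=z^2\sin(z^{-2})$ and $\phi(0)=0$. Then $\mathcal H=\{c\phi: c\in\mathbb R\}$ consists of functions differentiable on all of $[0,1]$, and $\partial^{\kappa\kappa}K$ is continuous on $(0,1)^2$ for every $\kappa$, so Condition \ref{as: smoothness of K} holds. Yet $\widehat h'-h_0'=(\widehat c-c_0)\phi'$ is unbounded near $0$ whenever $\widehat c\neq c_0$. So Assumption \ref{as: belonging condition}(i) fails even with the supremum restricted to $(0,1)$.
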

	\begin{proof}
		We first establish Part (i) of Assumption \ref{as: belonging condition}. By Condition \ref{as: smoothness of K}, $\partial^{\kappa\kappa}K$ is well-defined and continuous for some $\kappa\geq 2$. It then follows from \citet[Corollary 4.36]{steinwart2008support} that every $h\in\mathcal{H}$ is $\kappa$ times continuously differentiable and satisfies
		\begin{equation}\label{eq: bound on derivative of h}
			|h^{(\kappa)}(z)|\leq \|h\|_\mathcal{H}[\partial^{\kappa \kappa}K(z,z)]^{1/2}\text{ for }\kappa=0,1,2\, . 
		\end{equation}
		Since $\widehat h,h_0\in\mathcal{H}$, we obtain
		\begin{equation}\label{eq: hhat kappa - h0 kappa}
			|\widehat h^{(\kappa)}(z) - h^{(\kappa)}_0(z)|\leq \|\widehat h-h_0\|_\mathcal{H}[\partial^{\kappa \kappa}K(z,z)]^{1/2}\, \text{ for }\kappa=0,1,2\, .
		\end{equation}
		Accordingly, 
		\begin{equation*}
			\sup_{z\in\mathcal{Z}}|\widehat h^{(\kappa)}(z) - h^{(\kappa)}_0(z)|\leq \|\widehat h-h_0\|_\mathcal{H}\sup_{z\in\mathcal{Z}}[\partial^{\kappa \kappa}K(z,z)]^{1/2}\, \text{ for }\kappa=0,1,2\, .
		\end{equation*}
		Since by Assumption \ref{as: iid and T}(iii) $\mathcal{Z}$ is compact, and by Condition \ref{as: smoothness of K} $\partial^{\kappa\kappa}K$ is continuous, $\sup_{z\in\mathcal{Z}}[\partial^{\kappa \kappa}K(z,z)]^{1/2}$ is finite for $\kappa=0,1,2$. Moreover, by Lemma \ref{lem: rate for hhat}, $\|\widehat h - h_0\|_{\mathcal{H}}=o_P(1)$. Gathering these results gives  
		\begin{equation*}
			\sup_{z\in\mathcal{Z}}|\widehat h^{(\kappa)}(z) - h^{(\kappa)}_0(z)|=o_P(1)\text{ for }\kappa=0,1,2\, .
		\end{equation*}
		Hence, Part (i) of Assumption \ref{as: belonging condition} is established. Part (i) of Assumption \ref{as: belonging condition for bootstrap} follows by analogous arguments. \\
		We now establish Part (ii) of Assumption \ref{as: belonging condition}. From Equation \eqref{eq: bound on derivative of h},
		\begin{equation}\label{eq: boundendess of h0 kappa}
			\sup_{z\in\mathcal{Z}}|h_0^{(\kappa)}(z)|\leq \|h_0\|_{\mathcal{H}} \sup_{z\in\mathcal{Z}}[\partial^{\kappa \kappa}K(z,z)]^{1/2}\,\text{ for }\kappa=0,1,2\, .
		\end{equation}
		Using \eqref{eq: bound on derivative of h} and $\|\widehat h - h_0\|_{\mathcal{H}}=o_P(1)$ gives 
		\begin{equation}\label{eq: boundendess of hhat kappa}
			\sup_{z\in\mathcal{Z}}|\widehat h^{(\kappa)}(z)|\leq [\|\widehat h-h_0\|_{\mathcal{H}}+\|h_0\|_{\mathcal{H}}] \sup_{z\in\mathcal{Z}}[\partial^{\kappa \kappa}K(z,z)]^{1/2}\overset{(a)}{\leq} 2\|h_0\|_{\mathcal{H}}\sup_{z\in\mathcal{Z}}[\partial^{\kappa \kappa}K(z,z)]^{1/2}]=: C_{\mathcal{H}}\,,
		\end{equation}
		where the inequality $(a)$ holds with probability approaching one. Let us now define the class of functions 
		\begin{equation*}
			\mathcal{G}:=\{f:\mathcal{Z}\to \mathbb{R}\text{ such that }|f^{(\kappa)}|\leq C_{\mathcal{H}}\text{ for }\kappa=0,1\}\, . 
		\end{equation*}
		Then, by \eqref{eq: boundendess of h0 kappa} and \eqref{eq: boundendess of hhat kappa}, $h_0'\in\mathcal{G}$ and $\Pr(\widehat h'\in\mathcal{G})\rightarrow 1$. 
		Applying \citet[Corollary 2.7.2]{van1996weak} with $\alpha=d=1$ yields 
		\begin{equation*}
			\log N_{[\,]}(\epsilon,\mathcal{G},L^2(P))\leq C \epsilon^{-1}\,,
		\end{equation*}
		for a fixed constant $C$ independent of  $\epsilon$. Thus, Part (ii) of Assumption \ref{as: belonging condition} is established. Assumption  \ref{as: belonging condition for bootstrap}(ii) can be obtained by analogous arguments. \\
		Finally,  since $\widehat h$, $\widehat h_b$, and $h_0$ are continuous on $\mathcal{Z}$, Assumptions \ref{as: belonging condition}(iii) and \ref{as: belonging condition for bootstrap}(iii) 
		follow directly from Condition \ref{as: fZ null on the boundary of Supp(Z)}. 
		
	\end{proof}
	
	\newpage
	
	\bibliographystyle{apalike}
	\bibliography{biblio}
	
\end{document}